\theoremstyle{definition}
\newtheorem{definition}{Definition}[section]
\newtheorem{theorem}{Theorem}[section]
\newtheorem{lemma}{Lemma}[section]
\newcommand{\ignore}[1]{}
\newcolumntype{P}[1]{>{\vspace*{0.07cm}\centering\arraybackslash}p{#1}}
\newcolumntype{L}[1]{>{\vspace*{0.07cm}\raggedright\arraybackslash}m{#1}}
\newcolumntype{M}[1]{>{\vspace*{0.07cm}\centering\arraybackslash}m{#1}}
\DeclareMathAlphabet{\mathrmbf}{\encodingdefault}{\rmdefault}{bx}{n}
\newcommand{\mbb}{\mathbb}
\begin{document}
%
\title{ReHand: Secure Region-based Fast Handover with User Anonymity for Small Cell Networks in 5G}
%
%
%

\author{Chun-I~Fan~\IEEEmembership{Member,~IEEE}, Jheng-Jia~Huang, Min-Zhe~Zhong, Ruei-Hau~Hsu$^{\ast}$~\IEEEmembership{Member,~IEEE}, Wen-Tsuen~Chen,~\IEEEmembership{Fellow,~IEEE}, and Jemin~Lee~\IEEEmembership{Member,~IEEE}
\thanks{C.-I.~Fan is with the Department
of Computer Science and Engineering, National Sun Yat-sen University, Kaohsiung 80424,
Taiwan. E-mail: cifan@mail.cse.nsysu.edu.tw}
\thanks{J.-J.~Huang is with the Department of Computer Science and Engineering, National Sun Yat-sen University, Kaohsiung, Taiwan. E-mail: jhengjia.huang@gmail.com}
\thanks{M.-Z.~Zhong is with the Department
of Computer Science and Engineering, National Sun Yat-sen University, Kaohsiung 80424,
Taiwan. E-mail: stjoe410262@gmail.com}
\thanks{R.-H.~Hsu is with Institute for Infocomm Research~(I2R), Agency for Science, Technology and Research~(A*STAR), Singapore, 138634. E-mail: richard\_hsu@i2r.a-star.edu.sg}
\thanks{W.-T. Chen is with the Institute of Information Science, Academia Sinica, Taiwan, Taipei 11529, Taiwan and Department of Computer Science, National Tsing Hua University, Hsinchu 30013, Taiwan. E-mail: chenwt@iis.sinica.edu.tw.}
\thanks{J. Lee is with the Department of Information and Communication
Engineering, Daegu Gyeongbuk Institute of Science and Technology, Daegu, 43016
South Korea.  E-mail: jmnlee@dgist.ac.kr.}
\thanks{The contact author is R.-H.~Hsu.}
}

\maketitle

\begin{abstract}
Due to the expectedly higher density of mobile devices and exhaust of radio resources, the fifth generation~(5G) mobile networks introduce small cell concept in the radio access technologies, so-called {\bf Small Cell Networks}~(SCNs), to improve radio spectrum utilization. However, this increases the chance of handover due to smaller coverage of a micro base station, i.e., home eNodeB~(HeNB) in 5G. Subsequently, the latency will increase as the costs of authenticated key exchange protocol, which ensures entity authentication and communication confidentiality for secure handover, also increase totally. Thus, this work presents a secure region-based handover scheme~(ReHand) with user anonymity and fast revocation for SCNs in 5G. ReHand greatly reduces the communication costs when UEs roam between small cells within the region of a macro base station, i.e., eNB in 5G, and the computation costs due to the employment of symmetry-based cryptographic operations. Compared to the three elaborated related works, ReHand dramatically reduces the costs from 82.92\% to 99.99\%. Nevertheless, this work demonstrates the security of ReHand by theoretically formal proofs.

\end{abstract}

\begin{IEEEkeywords}
5G mobile communications, Small Cell Network, Handover, Authentication, Key Exchange, User Anonymity, Active Revocation
\end{IEEEkeywords}

%
\IEEEpeerreviewmaketitle

\section{Introduction}
%
%
%
%
 \IEEEPARstart{F}{ifth} {generation~(5G) mobile communication networks play as a key role in not only communication technologies, but also the internet-of-things~(IoT) technologies. It has the holistic enhancement in the radio access technologies~(RATs) and communication infrastructures for the new data/information exchange requirements in IoT. For example, millimeter wave, device-to-device communications, and small cell networks, etc., are to provide better quality of services~(QoS), utilization of radio resource and bandwidth, higher transmission rate, and lower latency for various of emerging applications~\cite{5G-PPP_proposal, Woon_2014, Demestichas_2013, Hong_2014, Olsson_2013}.}

To achieve this objective, we need a more efficient network architecture with robust security to meet the demands of application performance in next decade. To meet those demands, research projects~\cite{website:METIS, 5G-PPP_proposal, techreport:Ericsson_5G_research} have focused on 5G mobile communication networks that provide a flexible, reliable, and high-performance network architecture for wireless communication beyond 2020. Owing to the upcoming IoT~\cite{Atzori_2010} and the increase in the use of mobile devices~\cite{techreport:Ericsson_mobility_report}, a considerably higher capacity of wireless networks is required. In the future, connections on the wireless system will increase rapidly and will be more complex. 5G also emphasizes much lower latency and higher data rate for users. By decreasing the latency, we can improve the stability of data transmission and provide real-time services. Increasing the data rate for each user enables more advanced applications, such as high-definition (HD) mobile television (mobile-TV) and mobile clouds.

{Several notable technologies have been addressed in 5G, including support of IPv6, Flat-IP based network, pervasive networks, power efficiency technology, massive machine-type communications, and small cell networks, etc~\cite{website:5G-PPP,Janevski_2009,Tudzarov_2011,techreport:Ericsson_mobility_report,5G-PPP_proposal,3GPP-HNB-Arch,3GPP-HNB-Mobility,3GPP-HNB-Sec}.}  {In recent years, several researches and technical reports surveyed a wide range of information of 5G~\cite{Felita_2013, Gohil_2013, techreport:Ericsson_5G_research, Patel_2012, Singh_2012}. These provided several diverse collections of 5G features and their own comments.} As a result, 5G is an important future trend in the next decade, and it will be established in 2020.

Our research focuses on one of these main technical trends, known as small cell networks~(SCNs), a new concept of infrastructures under the macro cellular coverage~\cite{website:small_cell}. Types of small cells include femto-, pico-, and micro-cells, which provide different levels of coverage and abilities. The smallest cells, i.e., femto cells, have a coverage area of an office, whereas the pico and micro cells have coverage of a building and a community, respectively. Macro cells do not communicate with user terminals directly but focus on the management and connectivity of small cells in an urban scope.

\subsection{Small Cell Networks}
The basic architecture of {small cell networks in 5G} consists of {user equipment~(UE), eNodeB/Home eNodeB~(eNB/HeNB), security gateway~(SeGW), home subscribe server~(HSS), and authentication center~(AuC)~\cite{3GPP-HNB-Arch,3GPP-HNB-Mobility,3GPP-HNB-Sec}. A UE obtains the communication services via HeNB/eNB using RATs. Mobility management entity~(MME) manages the mobility of UEs by processing the handover requests and updating the tracking areas of UEs.} At the very beginning, each HeNB has to perform mutual authenticated key exchange with SeGW for establishing a secure channel among them. A handover process occurs when a UE changes its visiting eNB/HeNB when the signal from the connected eNB/HeNB becomes weaker. {When the handover process occurs, the UE and the new eNB/HeNB should authenticate the mutual legality and exchange a session key for the following secure communications. For better utilization of radio resources, small cell technology in 5G deploy more base stations with smaller coverage, i.e., HeNBs. Compared to the service capacity of macro cell, each small cell can serve the same number of UEs in a smaller coverage, so that the density of UEs is enhanced.}

\ignore{
	METIS, one of the international partner projects supported by the European Union, envisions many important features for the future networks beyond 2020. In order to support the massive amount of connections, higher capacity, and reliability of future networks are required. The exploitation of infrastructures and the improvement of topological algorithms enable 5G to deliver higher mobile data volume than 4G. Compared to the networks today, the expected number of connected devices will be 10 to 100 times greater while the capacity of future networks will be over 1000 times greater. Ultra-low latency is also a specific feature of 5G that provides users with real-time services. The latency of current 4G standards is about 40-60 ms, which is not a real-time level. To achieve real-time services, latency must be at least 5 times lower. Higher transfer data rates increase the QoS and provide a smoother user experience. Data rates will increase by about 10 to 100 times.
	
	In 2009, Janevski conducted research on 5G~\cite{Janevski_2009}. He defined the concept of the 5G mobile network, which is seen as user-centric. Janevski believed 5G will focus on user terminals that connect several wireless technologies at the same time. For the best connection, user terminals will be able to combine different data flows from different technologies and switch between them dynamically. Janevski also pointed that all wireless and mobile networks today follow the all-Internet Protocol (IP) principle, which means all data flow can be delivered via IP on the Network layer~\cite{Janevski_2003}. IP is the unified common technology used for all RATs. To achieve the objectives of 5G user terminals, Internet Protocol version 6 (IPv6) is considered as the solution. IPv6 not only solves the limited address and other problems in Internet Protocol version 4 (IPv4), but also provides the bridge between user mobiles and all types of RATs.
	
	In 2011, Tudzarov and Janevski proposed a functional architecture for 5G mobile networks~\cite{Tudzarov_2011}. In their work, they defined new nodes called Policy Routers in the core network. The purpose of Policy Routers is to establish several IP tunnels via different RATs that connect to the user terminal. This enables the user terminal to select the best connection through the help of Policy Routers based on the all-IP model. Policy Routers can make IP tunnel changes through the interfaces of different RATs in the user terminal by the policy and algorithm proposed in Tudzarov's scheme- QoS Policy (QosSPRO) based Routing-according to the QoS and user preferences. Based on genetic algorithms and user experience, Policy Routers can provide the most appropriate choice of RATs to a user terminal by the QosSPRO policy. In general, Tudzarov's research improves the development of heterogeneous networks. It integrates different RATs in a single mobile terminal by inventing a RAT selector algorithm and a new architecture for 5G mobile networks.
	
	In 2011 and 2012, Akhtar's research~\cite{Akhtar_2011} and Jay's research~\cite{Jay_2012} both provided an overview of the evolving process of mobile and wireless networks in the last decades. The first generation (1G) wireless technology was developed in the 1980s and early 1990s. 1G used the original analog signal and provided voice-only phone call services. In 1991, the Global System for Mobile communications (GSM) standard, usually known as second-generation wireless telephone technology (2G), was launched. It provided voice phone call services and the short message services (SMS) in digital signals. 2G established the global standard foundations for 3G, the third generation of mobile telecommunications technology, which is also called as International Mobile Telecommunications-2000 (IMT-2000). IMT-2000 provides Internet services based on the set of standards used for mobile telecommunications, and therefore, videos, and voice can be transferred synchronously in 3G. The primary mobile communications technology today is the fourth generation (4G), which provides more reliability and QoSs than 3G. There are two standards in 4G, Long Term Evolution Advanced (LTE-A) and WirelessMAN-Advanced (WiMAX-A or IEEE 802.16m), both proposed in 2011. The 5G mobile network is a much bigger conceptual framework than 4G. 5G will help design a real-time wireless world without limitations. In every decade, there will be a new generation of technologies to support the development of human society. The enablers of 5G have declared that the next generation telecommunication system will be launched in 2020.
}

\ignore{
	The frequency bands used in 5G are higher than 4G, which means the RATs can be allocated to an unlicensed spectrum. Using the higher frequency bands means that bigger bandwidth can be used for transferring big data, but it also implies that signals can be transferred across shorter distances due to the energy loss during the transfer. To solve this problem, deploying a massive amount of small cells may be a solution. Decreasing cell size saves the most amount of energy by reusing higher frequency and reducing transmit power. Therefore, we can increase spectral efficiency and decrease the loss of power when propagating signals. In addition, the connectivity property of wireless networks provides a flexible coverage area that enables a large amount of small cells to be deployed anywhere. The capacity of networks also would be increased because of the increased channel size.
}

\ignore{
	Through the improvement of hardware miniaturization and longer battery lifetime, the concept of establishing a heterogeneous network with small cells can be achieved. Owing to the characteristics of wireless networks, small cells of different sizes can also be deployed anywhere under the coverage of macro cells, such as in houses, buildings, or outdoors in the service area. These types of cells have low-power cost and are operator-controlled. Small cells work as access points communicated to user terminals since the macro cells are not connected to user terminals directly. As a result, ultra-dense networks are constructed, which enhances the capacity, coverage, resource, and energy efficiency of future mobile networks.
}

Access control in 5G is essential to provide the functions correctly in the system, i.e., both UE and infrastructure should be able to identify if the counterpart of each other is legal or not. Besides, the confidentiality of the subsequent communications is also required. Generally, access control and secure communication are guaranteed by authenticated key exchange~(AKE) protocols. The design of AKE in SCN is more challenging as its performance requirement is more critical since the latency might increase significantly due to higher chance of handover caused by the smaller coverage of radio access networks.  Hence, a new design of AKE to reduce the costs of AKE in handover is necessary to fulfill the performance requirements of supporting real-time applications in 5G. Additionally, the user anonymity to conceal the footprint of communications should be considered to guarantee privacy as more and more personal and sensitive information is involved in applications.

\subsection{Related Work}
A notable amount of roaming-based AKE protocols have been proposed and user anonymity has been carefully deliberated in~\cite{Wireless_Auth_ZM04,Wireless_Auth_JLSS06,Wireless_Auth_TO08,Wireless_Auth_YHWD10,Wireless_Auth_HBCCY11,Wireless_Auth_HCCB12,Wireless_Auth_RH13,RoamAuth_LLLLZS14,Wireless_Auth_GH15,Wireless_Auth_HCG15,RoamAuth_LCCHAZ15,RoamAuth_HCG15}. In mobile networks, an UE should complete authentication for identity identification prior to requesting for services when roaming to the coverage of a new visiting foreign network~(FN). The user anonymous authentication prevents eavesdroppers or/and FN from exposing the real identities of UEs in every authentication session such that the footprints of communications of UEs are concealed.

User anonymity can be separated into two levels, partial user anonymity and full user anonymity. Partial user anonymous authentication conceals identities from eavesdroppers, excluding FNs~\cite{Wireless_Auth_ZM04,Wireless_Auth_JLSS06,Wireless_Auth_TO08} and full user anonymous authentication additionally considers FNs as eavesdroppers ~\cite{Wireless_Auth_YHWD10,Wireless_Auth_HBCCY11,Wireless_Auth_HCCB12,Wireless_Auth_RH13,RoamAuth_LLLLZS14,Wireless_Auth_HCG15,RoamAuth_LCCHAZ15,RoamAuth_HCG15}. With full user anonymity, traceability and revocability are essential to support the permitted network operators to trace and revoke user identities for management purposes. Diverse traceability and revocability techniques~\cite{Wireless_Auth_YHWD10,Wireless_Auth_HBCCY11,Wireless_Auth_HCG15} have been developed to manage the anonymity protection in roaming-based mobile networks. However, in order to provide strong user anonymity, the costs of revocation and tracing are commonly considerably high in certain roaming-based AKE schemes. In~\cite{RoamAuth_HCG15}, the system revokes the users by updating user private keys periodically. In~\cite{RoamAuth_LCCHAZ15}, a time-bound user anonymity AKE is proposed to reduce the costs of revocation checking by eliminating the revoked users, whose credentials expire naturally. Overall, the aforementioned elegant works resolves privacy protection requirements for roaming-based AKE. However, the revocation and tracing costs for management purposes might be enlarged for SCNs in 5G. Thus, an efficient design of roaming-based AKE for secure handover with user anonymity is urgently required to fit the features of SCNs.



\begin{figure*}[!htb]
	\begin{minipage}{.5\textwidth}
		\includegraphics[width=\textwidth]{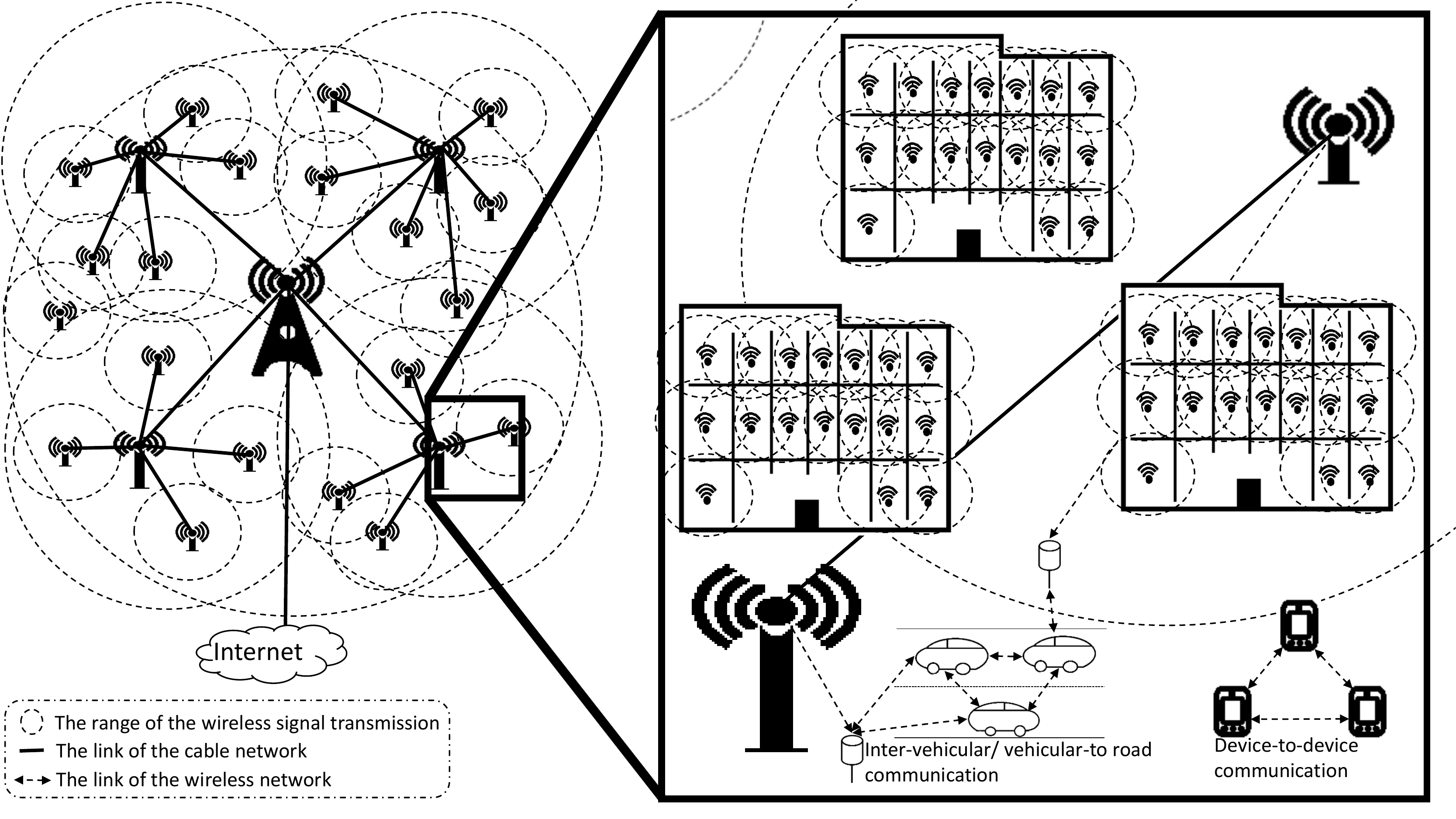}
		\caption{The 5G mobile communication networks}
		\label{fig:f1}
	\end{minipage}
	\begin{minipage}{.5\textwidth}
		\includegraphics[width=\textwidth]{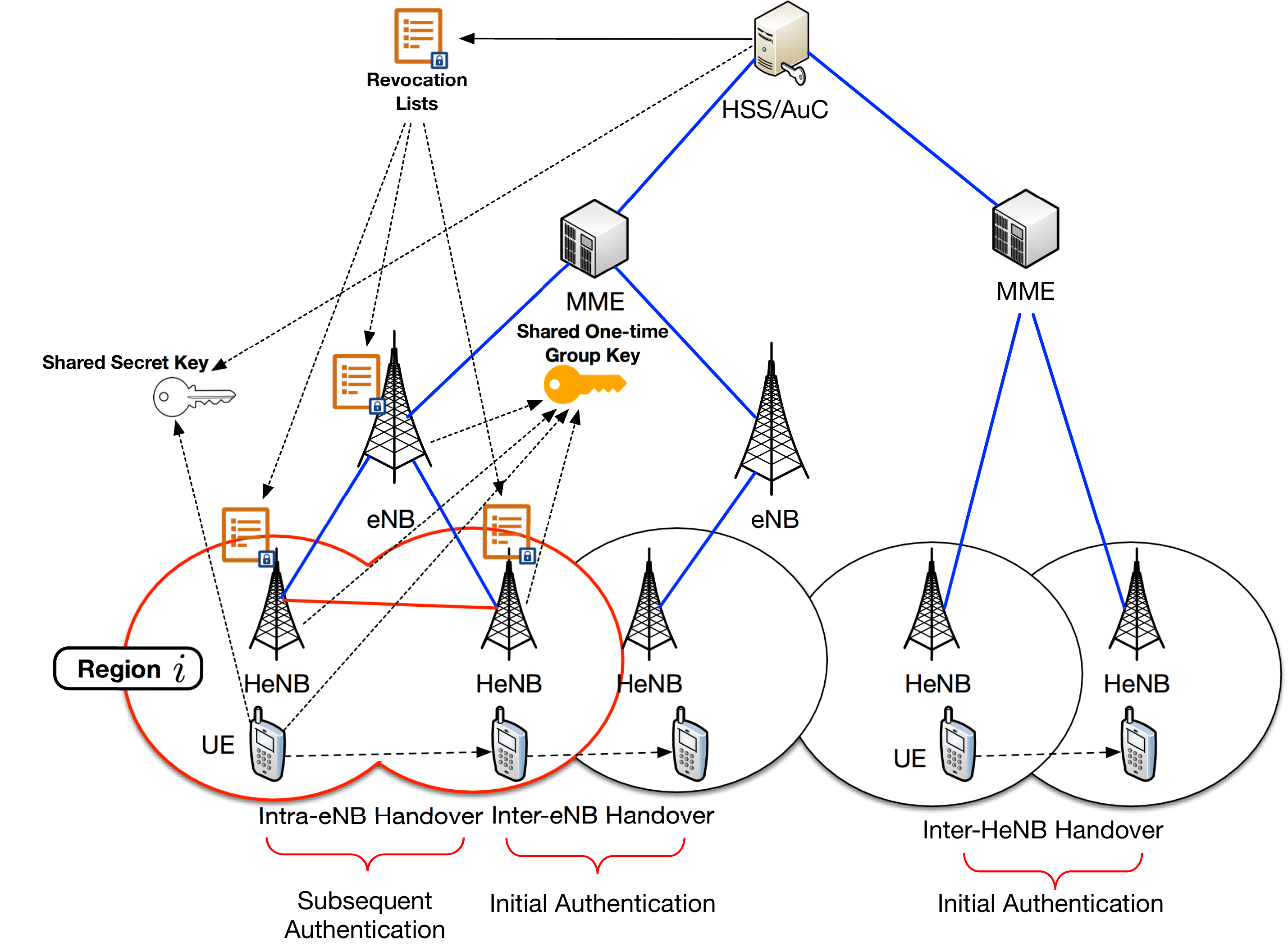}
		\caption{The System Model of Small Cell Networks in 5G.}
		\label{fig:sys_model_5G}
	\end{minipage}
\end{figure*}


\subsection{Difference between 4G and 5G}
There are significant differences between the current 4G and 5G communication networks such as connection principles and infrastructures. In 4G, the {UE} should connect to the macro cell base station, i.e., the eNB, during the handover process to an adjacent {macro cell}~\cite{Akhtar_2011}. Depending on the service area of the {eNB or HeNB}, {UEs} in 4G {might} connect to the {eNB} directly if their signals can be detected by the {eNB}. {UEs} can enhance their connections to the macro cell through relay nodes (RNs), which are a variation of base stations deployed at the coverage edge of macro cells in 4G. In this manner, the 4G coverage can be enhanced.

In 5G, {UEs} cannot connect to the macro cells directly because the macro cells do not broadcast beacon frames anymore. The user terminals in 5G can only connect to the macro cell via the help of the {HeNBs in small cells}~\cite{Akhtar_2011}. The infrastructure of 5G includes a massive increase in the number of {HeNBs}.

Therefore, the operational performance {in small cell technology} of the original 4G design is inefficient when it {applies} to the new architecture of 5G. 
Besides, a vulnerable handover procedure will {suffer from the higher risks of impersonation and eavesdropping attacks due to more infrastructural components in 5G}.
In order to {enhance} the performance and guarantee the security of {small cell networks} in 5G, {it is essential to} design a {fast} handover authentication {mechanism}, which is secure, {efficient}, and tailored to {the design} principles of 5G.

\subsection*{Contributions}
In this paper, we propose a {region-based} secure handover authentication scheme {for small cell network in 5G with the following contributions: 1) the design of region-based fast authentication reduces the communication and computation costs without involving the components of core networks when roaming to a new visiting micro cell within the same region of a macro cell; 2) user anonymous authentication of roaming to a new macro cell guarantees identity anonymity against disclosure of communication footprints; 3) user membership revocation by accumulated one-way hash eliminates the costs of managing all revoked users in 5G system; 4) this work proposes a metrics to evaluate the performance of region-based fast handover authentication compared to the other related works; 5) the paper proves the security of the proposed scheme fulfilling the security definitions by theoretical proofs.} 


\ignore{
\subsection*{Organization}
The rest of this thesis is organized as follows. The Preliminaries are introduced in Chapter 2, where we review 5G in detail and other cryptographic tools. In Chapter 3, we introduce the related works for mobile communication and wireless networks. Our proposed scheme is described in Chapter 4, while security proofs and property analysis are provided in Chapter 5. The comparisons are described in Chapter 6. Finally, we discuss our conclusion in Chapter 7.
}
 

\ignore{
\section{Related Works}

\subsection{Jing \textit{et al.}'s Scheme}
In 2011, Jing \textit{et al.}~\cite{Jing_2011} proposed a handover authentication scheme for EAP-based wireless networks. Jing \textit{et al.}'s scheme has the properties of privacy preserving and accomplish authentication between the mobile node (MN) and access point (AP) without involving the third party by using the proxy signature. The notations used in the scheme are defined in {\bf Table 3.1}.

\begin{table}[tbh]
	\centering
	\caption{The Notations of Jing \textit{et al.}'s scheme}
	\begin{tabular}{||cl||}
	\hline\hline
	notation and acronyms & meaning\\
	\hline
	$AP_0$ & the current AP\\
	$AP_1$ & the target AP\\
	$ID_X$ & the identity of $X$\\
	$tID_X$ & the temporary identity of $X$\\
	$m_w$ & the warrant information\\
	$T_{EXP}$ & the expiration time\\
	$T^X_{cur}$ & the current time of $X$\\
	$PTK_{X-Y}$ & the pairwise transient key between $X$ and $Y$ \\
	$K_{X-Y}$ & the shared secret key between $X$ and $Y$\\
	$CMAC$ & Cipher-based Message Authentication Code\\
	${m}_{K_{X-Y}}$ & the message $m$ encrypted with the key $K_{X-Y}$\\
	$F$ & a Galois field\\
	$E$ & an elliptic curve over $F$\\
	$T$ & a point on $E$ \\
	$E/F$ & an additive group derived from $E$ and $F$\\
	$q$ & the largest prime factor of the order of $T$\\
	$Z^{\ast}_x$ & a cyclic group of order $x-1$ for prime number $x$ \\
	$h(\cdot)$ & hash function: $\lbrace0, 1\rbrace^{\ast} \rightarrow Z^{\ast}_x$\\
	$\cdot$ & a point multiplication operator in $E/F$ \\
	$+$ & a point addition operator in $E/F$ \\
	$\parallel$ & the concatenation operator of two bit strings\\
	\hline\hline
	\end{tabular}
	\label{tab:notation}
\end{table}

There are two parts of the scheme, the {\bf Delegation Initialization Phase} and the {\bf Handover Authentication Phase}. The first part, {\bf Delegation Initialization Phase}, is described below. Additionally, there is a secure channel between each two APs and the public key of the current AP is known by all of its neighbor APs on premise.

$\star${\bf Delegation Initialization Phase }
	\begin{description}
	\item{\bf Step} (1) First, MN finishes the EAP full authentication with the AAA server through $AP_0$. 
	\item{\bf Step} (2) Second, MN and $AP_0$ generate $PTK_{AP_0-MN}$, which is the pairwise transient key between $X$ and $Y$. 
	\item{\bf Step} (3) $AP_0$ has a pair of private and public key $(x, Y)$, where $x \in Z^{\ast}_q$ and $Y = x \cdot T$. Then $AP_0$ issues a new delegation to MN by the following steps.
		\begin{itemize}
		\item{\bf i.} $AP_0$ generates the warrant $m_w$ which includes the information about the delegation capability of MN, $ID_{AP_0}$ and $tID_{MN}$.
		\item{\bf ii.} $AP_0$ selects two numbers, $r$ and $s_{AP}$, at random and calculates the parameter $\Gamma$ and the proxy signature key $\sigma$.
		\[\Gamma = h(m_w) \cdot T + h(r \parallel s_{AP}) \cdot T ~~ (in ~ E/F) \]
		\[\sigma = -xh(\Gamma) - h(r \parallel s_{AP}) ~~ (in ~ Z^{\ast}_q)\]
		\item{\bf iii.}  $AP_0$ encrypts the delegation $(s_{AP}, m_w, \sigma, \Gamma )$ with transient key $PTK_{AP_0-MN}$ and then sends the ciphertext to MN. $AP_0$ stores $(\Gamma, m_w)$ for the secret information $S$.
		\end{itemize}
	\item{\bf Step} (4) MN verifies the equation as follows. If it holds, MN accepts the delegation.
	\[ h(m_w) \cdot T = \sigma \cdot T + h(\Gamma) \cdot Y + \Gamma\]
	\end{description}

If MN wants to move to the next AP, it initializes the handover authentication. The steps are described as follows.

$\star${\bf Handover Authentication Phase }	
	\begin{description}
	\item{\bf Step} (1) Once $AP_0$ receives the Handover (HO) Initiation message from MN, it responds the HO Command message with the information $S$.
	\[S = \{ID_{AP_1}, s_{AP}, T_{EXP}, \Gamma, m_w \}_{K_{AP_0 - AP_1}}\]
	\item{\bf Step} (2) Upon receipt of the HO Command message and information $S$, MN disassociates with $AP_0$ and computes the proxy signature message as follows.
		\begin{itemize}
		\item{\bf i.} MN selects a number $r_{MN} \in Z^{\ast}_q$ and generates the current time $T^{MN}_{cur}$. Then MN computes the proxy signature $(R, t)$ for $(r_{MN}, T^{MN}_{cur})$.
		\[R = r_{MN} \cdot T ~~ (in ~ E/F)\]
		\[t = \sigma - r_{MN}h(R \parallel T^{MN}_{cur}) ~~ (in ~ Z^{\ast}_q)\]
		\item{\bf ii.} MN sends the message (Msg 1) which includes $R, t , T^{MN}_{cur}$, and $S$ to $AP_1$.
		\end{itemize}
	\item{\bf Step} (3) $AP_1$ decrypts the secret information $S$ in the Msg 1 and checks the proxy signature as follows.
		\begin{itemize}
		\item{\bf i.} $AP_1$ gets the current time $T_{cur}$ and checks the freshness of $T^{MN}_{cur}$ as the following equation. $\theta$ is the expected legal time interval for transmission delay.
		\[T_{cur} - T^{MN}_{cur} < \theta\]
		\item{\bf ii.} $AP_1$ decrypts information $S$ and checks the validity of $ID_{AP_1}$ and $T_{EXP}$.
		\item{\bf iii.} $AP_1$ verifies the proxy signature by checking the following equation.
		\[h(R \parallel T^{MN}_{cur})R + h(\Gamma)\cdot Y + \Gamma + t \cdot T = H(m_w)\cdot T \]
		\end{itemize}
	\item{\bf Step} (4) $AP_1$ authenticates MN successfully if all of the verifications in step 3 are successful. Then $AP_1$ performs the following steps.
		\begin{itemize}
		\item{\bf i.} $AP_1$ selects a random number $r_{AP} \in Z^{\ast}_q$ and gets the current time $T^{AP}_{cur}$.
		\item{\bf ii.} $AP_1$ computes $PTK_{AP_1-MN}$ and generates $r_{AP} \cdot T$.
		\[PTK_{AP_1-MN} = r_{AP} \cdot R\]
		\item{\bf iii.} $AP_1$ transmits the message (Msg2) to the MN.
		\[Msg 2 =\{r_{AP} \cdot T, T^{AP}_{cur},\]
		\[ CMAC_{PTK_{AP_1-MN}}(r_{AP} \cdot T \parallel s_{AP} \parallel T^{AP}_{cur}) \}\]
		\end{itemize}
	\item{\bf Step} (5) After receiving the Msg2 from $AP_1$, MN authenticates $AP_1$ as follows.
		\begin{itemize}
		\item{\bf i.} MN checks the freshness of Msg2.
		\item{\bf ii.} MN generates $PTK_{AP_1-MN}$ and uses the received information in its delegation to compute a new CMAC.
		\[PTK_{AP_1-MN} = r_{MN} \cdot (r_{AP} \cdot T)\]
		\item{\bf ii.} MN compares two CMACs.
		\end{itemize}
	\item{\bf Step} (6) If all of the verifications in step 5 are successful, MN authenticates $AP_1$ successfully. Then MN transmits the message (Msg3) to $AP_1$
	\[Msg 3 = CMAC_{PTK_{AP_1-MN}}(r_{AP} \cdot T)\]
	\item{\bf Step} (7) $AP_1$ computes a new CMAC value using $r_{AP} \cdot T$ and $PTK_{AP_1-MN}$ held by itself. If two CMAC values are the same, MN and $AP_1$ finish authenticating with each other successfully. Finally, MN and $AP_1$ store $PTK_{AP_1-MN}$, and destroy $r_{MN}$ and $r_{AP}$ respectively.
	\end{description}

\subsection{Fu \textit{et al.}'s Scheme}
In 2012, Fu \textit{et al.}~\cite{Fu_2012} proposed a group-based handover authentication scheme for mobile WiMAX networks. The phrase, group-based, means when the first mobile station (MS) runs the handover authentication from the service base station (BS) to a target BS, the service BS transfers the security context of all the handover group members to the target BS. The authors claimed that their scheme reduces the total handover latency and also achieves privacy preservation. 

Fu \textit{et al.}'s scheme includes three phases, {\bf Pre-deployment Phase}, {\bf Initial Authentication Phase} and {\bf Handover Authentication Phase} and we describe these phases as follows.

$\star${\bf Pre-deployment Phase}\\
Under the premise that BS and MS are loosely synchronized, the AAA server does the operations as follows.
	\begin{description}
	\item[1] Choose a large prime $p$ and generate an elliptic curve $E(F_p)$.
	\item[2] Let $G$ be an additive group of prime order $q$ and then choose a generator $P$ for $G$.
	\item[3] Select a secure one-way hash function $H$.
	\item[4] Issue the public parameters $\{p, q, E(F_p), G, P, H\}$
	\end{description}

$\star${\bf Initial Authentication Phase}\\
An MS, $MS_i$, launches the initial authentication when it first accesses to mobile WiMAX networks. After an EAP authentication, the $MS_i$ and AAA server generate a Master Session Key (MSK) in 512 bits and then the AAA server delivers MSK to the ASN-GW. Subsequently, the ASN-GW generates a Pairwise Master Key (PMK) where the $BS_1ID$ is the identity of $BS_1$ and $Dot16KDF$ is a keyed hash function defined in IEEE 802.16m standard.  
\[PMK = Dot16KDF(MSK, BS_1ID \mid "PMK", 160)\]

The PMK is sent to the $BS_1$ as a proof of authorizing. Then $BS_1$ and $MS_i$ generate Authorization Key (AK), Transmission Encryption Keys (TEKs) and CMAC keys by performing the 2-way handshake procedure. The steps of the process are described as follows.
	\begin{description}
	\item{\bf Step} (1) First, $BS_1$ chooses a random number $r_1 \in Z^{\ast}_q$ and computes a Temporary CMAC Key (TCK) as follows.
	\[TCK_i = Truncate(PMK, 128)\]
	$Truncate(x, y)$ is defined as the last $y$ bits of $x$ if $y \leq x$ or entire $x$ if $y > x$.
	
	Then $BS_1$ sends a key agreement request message (MSG\#1), which includes the current time $tt, r_1P, BS_1ID$, and their CMAC value to $MS_i$.
	\item{\bf Step} (2) Upon receipt of MSG\#1, $MS_i$ checks the freshness of the current time $tt$ by checking the function $\overline{tt} - tt < \varepsilon$ where the $\varepsilon$ is the transferring time limit. Then $MS_i$ also computes $PMK$ and $TCK_i$ mentioned above. $MS_i$ uses the generated $TCK_i$ to verify the CMAC value. If the CMAC value is valid, $MS_i$ chooses a random number $r_2 \in Z^{\ast}_q$ and sends back the key management response message (MSG\#2), which includes the current time $tt, MS_1ID_0$, and their CMAC value to $BS_1$.
	\item{\bf Step} (3) $BS_1$ checks the freshness of the current time $\overline{tt}$ when it receives MSG\#2. $BS_1$ also uses the saved $TCK_i$ to verify the CMAC value. If the CMAC value is verified successfully, the $MS_i$ is authenticated by $BS_i$ as a legitimate user.
	\end{description}
	
$\star${\bf Handover Authentication Phase}\\	
The handover authentication processed by a group of MSs to target BS ($BS_2$) is described as follows.
	\begin{description}
	\item{\bf Step} (1) $MS_1$ sends the Handover (HO) Initiation message, which includes the identifier of the target BS ($BS_2$), to $BS_1$.
	\item{\bf Step} (2) Upon receipt of the HO Initiation message, $BS_1$ processes the following steps.
		\begin{itemize}
		\item[\bf i.]	 $BS_1$ groups the current serving MSs by using the grouping algorithm and searches the group of the current handover MSs. 
		\item[\bf ii.] According to the results of searching, $BS_1$ computes the new $MS_iID^1$ and $TCK_i'$ for all MSs in the same group. If the total number of group members is $n$, then $MS_iID^1$ and $TCK_i'$ can be expressed as follows.
		\[MS_iID^1 = Dot16KDF(TCK_i, MS_iID^0, 48)\]
		\[(1 \leq i \leq n)\]
		\[TCK_i' = H(AK_i, MS_iID^1, BS_1ID, BS_2ID)\]
		\[(1 \leq i \leq n)\]
		\item[\bf iii.] $BS_1$ uses the Context Transfer Protocol (CXTP) to transmit the security context $(MS_iID^1, TCK_i')(1 \leq i \leq n)$ to $BS_2$.
		\end{itemize}
	\item{\bf Step} (3) After all MSs in the current handover group receives the security context \[(MS_iID^1, TCK_i')(1 \leq i \leq n)\], $BS_1$ sends an HO Command message to $MS_1$.
	 \item{\bf Step} (4) $MS_1$ also computes $MS_iID^1$ and $TCK_i'$ as mentioned above after receiving the HO Command message. Then $MS_1$ constructs the security keys with $BS_2$ by processing the 2-way handshake as in the initial authentication phase.
	 \item{\bf Step} (5) When a MS (e.g. $MS_2$) except $MS_1$ in the same handover group moves from $BS_1$ to $BS_2$, it should send an HO Initiation message to $BS_1$ to ask for performing handover. Since the security context $(MS_iID^1, TCK_i')$ has been transmitted to $BS_2$ during the handover authentication of $MS_1$, $BS_1$ only responds with an HO Command message. Then $MS_2$ can initiate the 2-way handshake to $BS_2$ directly without performing the EAP authentication and SCT phases. 
	\end{description}

\subsection{He \textit{et al.}'s Scheme}
In 2013, He \textit{et al.}~\cite{He_2013} proposed a handover authentication scheme for wireless networks, which is also named Handauth by the authors. The authors claimed that their scheme meets the properties of strong user anonymity and untraceability, forward secure user revocation, conditional privacy-preservation, AAA server anonymity, access service expiration management, access point authentication, easily scheduled revocation, dynamic user revocation and attack resistance. The definition of FSR-GS~\cite{Jin_2009} used in He \textit{et al.}'s scheme is described as follows. 

An FSR-GS is a tuple $(G.Kg, G.Enroll, G.Revoke, G.Sign, G.Ver, G.Open)$ of probabilistic polynomial-time algorithms and one interactive mechanism. There are three parties involved in the FSR-GS, a group manager, a group member(i.e., a signer), and a verifier.

\begin{itemize}
\item[$\star$]{\bf Master-key generation}~$(G.Kg)$: The group member uses the algorithm $(G.Kg)$ to generate a master public key $mpk$, a master secret key $msk$, a trace key $tk$, and an initial membership information $\Omega$.
\item[$\star$]{\bf Enrollment}~$(G.Enroll)$: The new member $U_i$ and the group manager run the interactive procedure algorithm $(G.Enroll)$ with each other. After running this algorithm, $U_i$ obtains a user signing key $usk_i$, a (public) user membership key $upk_i$, and a user revocation key $rvk_i$
\item[$\star$]{\bf Member revocation}~$(G.Revoke)$: The group manger runs the algorithm $(G.Revoke)$ to output an updated $\Omega$ with the input parameters, $mpk$, $rvk_i$ of member $U_i$, and the original membership information $\Omega$.
\item[$\star$]{\bf Group signature generation}~$(G.Sign)$: The algorithm $(G.Sign)$ inputs $mpk, upk_i, usk_i, rvk_i, \Omega$, and a message $m$ and outputs a group signature $\sigma$.
\item[$\star$]{\bf Group signature verification}~$(G.Ver)$: The algorithm $(G.Ver)$ inputs $mpk, \Omega, m, and \sigma$ and outputs 1 indicating acceptance or 0 indicating rejection on the validity of the signature $\sigma$ on message $m$.
\item[$\star$]{\bf Member trace}~$(G.Open)$: The group manager runs the algorithm $(G.Open)$ to output the user membership key $upk_i$ of the actual signer with the input parameters, $mpk, tk, \Omega$, and a valid message signature pair $(m, \sigma)$.
\end{itemize}

He \textit{et al.}'s scheme includes three phases, {\bf System Setup Phase}, {\bf New User Joining Phase}, and {\bf Handover Authentication Phase} and we describe these phases as follows.

$\star${\bf System Setup Phase}
	\begin{description}
	\item{\bf Step} (1) The AAA server acts as the group manager of an FSR-GS system and uses the algorithm $G.Kg$ to generate a master key pair $(mpk, msk)$ and the initial membership information $\Omega = (c, \mu) = (g_1, 1)$. Additionally, the AAA server also generates a signing/verification key pair $(sk, pk)$ by a conventional digital signature.
	\item{\bf Step} (2) The AAA server issues the master public key $mpk$ to all APs and shares a session key $AK_{AP}$ with each AP.
	\item{\bf Step} (3) Assume that the AAA server sets day as the interval time in the format "YYYY/MM/DD.". Each AP downloads the latest membership information $\Omega$ from the AAA server at the beginning of each day.
	\item{\bf Step} (4) Each AP generates its own signing/verification key pair $(sk_{AP}, pk_{AP})$ by a conventional digital signature. The $ID$ and $pk_{AP}$ of each AP are published to all member units in the networks. The AAA server should issue the digital certificate for every MU to use verification key $pk_{AP}$. The serving AP authorized by the AAA server to offer access services can be verified by verification key $pk_{AP}$.  Alternatively, the certificates of all APs are loaded on the subscriber $U_i$ when $U_i$ registers to the AAA server. The visited AP also broadcasts the latest membership information $\Omega = (c, \mu) = (g_1^{\Pi^k_{i = j}rvk_i}, \Pi^k_{i = j}rvk_i)$ if there are currently revoked subscribers $U_j, ..., U_k$ for an AAA server. Each member unit can verify the membership information $\Omega$ by using the public key $pk$ of the AAA server.
	\end{description}
	
$\star${\bf New User Joining Phase}\\
If a member unit wants to join the networks, it should authenticate itself to the AAA server by in-person contact. The member unit registers with the AAA server by the following steps.
	\begin{description}
	\item{\bf Step} (1) The AAA server runs the algorithm $G.Enroll$ to generate a user signing key $usk_i$, a public user membership key $upk_i$ and a user revocation key $rvk_i$ for the subscriber $U_i$.
	\item{\bf Step} (2) The AAA server transfers all keys generated in step 1 and $pk$ to $U_i$ securely.
	\end{description}
	
The AAA server should maintain a subscriber list, including the related keys and expiration time of every subscriber. 

$\star${\bf Handover Authentication Phase}\\
The handover authentication phase is a mutual authentication protocol between a mobile user $U_i$ and the next visited access point $AP2$. We describe the detail by the following steps. 
	\begin{description}
	\item{\bf Step} (1) First, $U_i$ chooses a number $R_u$ at random and a temporary identity $alias$. Then $U_i$ generates $\sigma_i$ where $ts$ is a timestamp. 
	\[\sigma_i = \]
	\[G.Sign(mpk, upk_i, usk_i, rvk_i, \Omega, alias \parallel g^{R_u} \parallel ts)\]
	Subsequently, $U_i$ encrypts the message $\{alias, g^{R_w}, ts, \sigma_i\}$ as the ciphertext $C_i$ using $AP2$'s public key $pk_{AP2}$. 
	\[C_i = (alias, g^{R_u}, ts, \sigma_i)_{pk_{AP_2}}\]
	Next, $U_i$ sends the ciphertext $C_i$ to $AP_2$ as the login request.
	\item{\bf Step} (2) Upon receipt of the request from $U_i$, $AP_2$ decrypts the ciphertext $C_i$ and obtains the secret information $\{alias, g^{R_u}, ts, \sigma_i\}$. Then $AP_2$ checks the freshness of the timestamp $ts$. If the timestamp $ts$ is within the allowable range, $AP_2$ runs $G.Ver$ to verify whether the group signature $\sigma_i$ is valid or not. $AP_2$ rejects the login request if it is not valid; otherwise, $AP_2$ chooses a number $R_v$ at random and computes $\lambda_{AP_2}$.
	\[\lambda_{AP_2} = ECDSA.Sig(sk_{AP_2}, m_{AP_2})\]
	\[m_{AP_2} = alias \parallel g^{R_u} \parallel g^{R_v}\]
	Then $AP_2$ sends $\{g^{R_v}, \lambda_{AP_2}\}$  back to $U_i$. Subsequently, $AP_2$ computes the session key $SK = (g^{R_u})^{R_v}$ and erases $R_v$ from its memory.
	\item{\bf Step} (3) After receiving $\{g^{R_v}, \lambda_{AP_2}\}$ from $AP_2$, $U_i$ runs $ECDSA.Ver(pk_{AP_2}, m_{AP_2}, \lambda_{AP_2})$ to verify $\lambda_{AP_2}$. If the algorithm returns 1, $U_i$ generates the session key $SK = (g^{R_v})^{R_u}$ and erases $R_u$ from its memory. Then $U_i$ encrypts the message $(alias \parallel g^{R_u} \parallel g^{R_v})$ using the session key $SK$ and sends it to $AP_2$. 
	\item{\bf Step} (4) $AP_2$ receives the ciphertext and decrypts it. Then $AP_2$ verifies the message. If it is valid, $AP_2$ knows that $U_i$ has computed a session key and proceed to the next step.
	\item{\bf Step} (5) To notify the AAA server of the authentication result, $AP_2$ encrypts the group signature message $\{alias, g^{R_u}, ts, \sigma_i\}$ using the secret key $AK_{AP_2}$ and transfers it to the AAA server. This step does not affect the authentication time.
	\end{description}
}

\section{{System and Security Models}}
{This section introduces the proposed system model, the security requirements, the security model, the corresponding security definitions of 5G small cell networks.}
\subsection{{System Model}}
{This section introduces the system model of small cell networks, including the proposed security architecture and mobility, by referring to 3GPP 5G standards~\cite{3GPP-5G-Sec_Arch}. The entire 5G network consists of RAN and evolved packet core~(EPC), also known as core network. In 5G RAN, there are different types of RANs regarding their transmission coverage, transmission power, service capacity, and application scenario~(e.g., indoor or outdoor), such as macrocell, microcell, picocell, and femtocell. Macrocell is the RAN of the largest coverage supported by eNBs, and microcell, picocell, and femtocell are the RANs of smaller coverage supported by HeNBs.  As depicted in Fig.~\ref{fig:sys_model_5G}, a UE may attach to an eNB or HeNB, for services provided by the core network, i.e., EPC, in mobile networks. A eNB attaches to EPC directly and HeNB may attach to an eNB or EPC directly. Before serving the UE, both UE and EPC have to complete AKE to verify the legitimacy of both parties and establish a shared session key to secure the following communications. The entire AKE procedure involves UE, eNB/HeNB, MME, and HSS/AuC. The UE and the HSS/AuC shares a common long-term secret key. Hence, in AKE, the MME will request the HSS/AuC to generate the required authentication token and verify the authentication messages from the UE, respectively.  The UE can generate authentication token and verify the authentication messages from the MME by the shared secret with the HSS/AuC. Once the AKE is completed, both UE and MME will share the same secret key material. The MME will also send the derived session keys to the eNB/HeNB for the subsequent secure communications with the UE.}

{{\bf Handover.}} {For traditional mobility model to small cell networks, when the UE roams to a new HeNB, it has to perform a complete AKE with the components of EPC. This naturally increases the latency of communications, especially more handovers in small cell networks. Hence, the mobility model of the proposed scheme defines a {\it Region}, which is formed by an eNB and its belonging HeNBs. The eNB and HeNBs within the same region will share the same group secret key for performing region-based fast handover AKE. Thus, the entire authentication with MME and HSS/AuC is required, when the UE roams to a new visiting region. The UE only need to perform fast handover AKE, when the UE roams to a HeNB within the visited region without involving the MME and HSS/AuC.}


\subsection{{Security Requirements}} 
{
\begin{itemize}
	\item{\bf Authenticated Key Exchange:}  Before mobile services, the 5G security system should ensure identity identification through mutual authentication between UE and the system components in RAN and EPC. Additionally, both UE and 5G security should be able to exchange a session key securely to protect the subsequent communications. 
	\item{\bf Identity Anonymity:} The identity anonymity guarantees that any two communication sessions from the same UE is unlinkable to any outsider eavesdropper. Hence, the identity of each UE for every communication session should be randomized to avoid the traceability of footprint of communications. 
	\item{\bf Fast Authentication:} In order to enhance the performance of authentication, UE performs authentication protocol with only nearest component, e.g., HeNB, eNB, etc. It should be able to reduce the latency caused by the communications with and computation on MME and HSS/AuC in the core network. 
	\item{\bf Active/Passive Revocation:} Revocation is an essential function to ensure that the subscription of each UE can be revoked in case of expiration or suspension of services. In the proposed 5G security system, the system will issue a temporary group key for fast authentication when the UE roams to the coverage of new serving eNB. The temporary group key can be revoked passively when it is expired. The system can also revoke the given temporary group keys of the user by issuing revocation lists for the specified eNBs and HeNBs.
	\item{\bf Traceability:} In order to locate UEs for certain services, such as incoming calling services and short message service, the system should be able to identify the location of each UE even if the anonymous identity is used to conceal the footprints of communications.
\end{itemize}
}

\subsection{Security Definitions}

\begin{definition}
	Matching Conversations~\cite{Bellare_1993_Auth} \\
	The proposed protocol $\Pi$ in the presence of an adversary $E$ and consider two oracles, $\Pi^s_{A,B}$ and $\Pi^t_{B,A}$, that model two entities $A$ and $B$ being the partners of each other in the communication sessions $s$ and $t$. We say that $\Pi_{A,B}^{s}$ and $\Pi_{B,A}^{t}$ have match conversation if and only if $s=t$ and $A$ and $B$ are partners. 
\end{definition}

\begin{definition}
	$No - Matching_E(k)$~\cite{Bellare_1993_Auth} \\
	Let $No - Matching_E(k)$ be the event that there exists $A, B, s, t$ such that $\Pi^s_{A,B}$ accepted, but there is no oracle $\Pi^t_{B,A}$ which engaged in a matching conversation under the presence of a polynomial time adversary $E$. Note that $k$ is a security parameter, $A, B \in I$, and $s, t \in N$.
\end{definition}

\begin{definition}
	Secure Mutual Authentication ~\cite{Bellare_1993_Auth} \\
	We say that $\Pi$ is a secure mutual authentication protocol if for any polynomial time adversary $E$, 
	\begin{itemize}
		\item[$(1)$] if oracles $\Pi^s_{A,B}$ and $\Pi^t_{B,A}$ have matching conversations, then both oracles accept.
		\item[$(2)$] the probability of $No - Matching_E(k)$ is negligible.
	\end{itemize}
\end{definition}

\begin{definition}
	$Distinguish_{sk_E}(k)$~\cite{Bellare_1993_Auth}\\
	Let $Distinguish_{sk_E}(k)$ be the event that an adversary $E$ can correctly guess that she/he is given the real session key or a random number after the protocol is performed and terminates successfully, where $k$ is a security parameter.
\end{definition}

\begin{definition}
	A secure mutual authentication and key exchange protocol ~\cite{Bellare_1993_Auth} \\
	A protocol $\Pi$ is a secure mutual authentication and key exchange protocol if the following properties are satisfied:
	\begin{itemize}
		\item[1.] $\Pi$ is a secure mutual authentication protocol.
		\item[2.] $\Pi^s_{A,B}$ and $\Pi^t_{B,A}$ hold the same session key after running $\Pi$ successfully.
		\item[3.] (Indistinguishability) : (The probability of $Distinguish_{sk_E}(k) - \dfrac{1}{2}$) is negligible.
	\end{itemize}
\end{definition}

\begin{definition}
	\label{def:IND-CCA}
	The game for INDistinguishability under the Chosen-Ciphertext Attack (IND-CCA)~\cite{Goldwasser_1984}\\
	A challenger $\psi$ and a polynomial time adversary $\Gamma$ play the following game with a symmetric cryptosystem $\Pi$.
	\begin{itemize}
		\item{\bf Step 1.}  $\psi$ runs a setup algorithm. $\psi$ gives $\Gamma$ the resulting public parameters $params$. An encryption oracle $E_{sk}$ and the decryption oracle $D_{sk}$ are given a key $sk$. The above oracles hold the secret key secretly.
		\item{\bf Step 2.} $\Gamma$ issues a sequence of encryption and decryption queries. Upon receiving an encryption query, denoted by $m^{\ast}$, $\psi$ returns $\pi^{\ast} = E_{sk}(m^{\ast})$ to $\Gamma$. Upon receiving a decryption query, denoted by $\pi^{\ast}$, $\psi$ returns $\rho^{\ast} = D_{sk}(\pi^{\ast})$ to $\Gamma$.
		
		{\bf Challenge: } $\Gamma$ outputs a plaintext pair $(m_0, m_1)$. Upon receiving $(m_0, m_1)$, $\psi$ randomly chooses $\theta \in \{0, 1\}$ and computes the ciphertext $\pi = E_{sk}(m_{\theta})$. Then, $\psi$ returns $\pi$ to $\Gamma$.
		\item{\bf Step 3.} $\Gamma$ issues a sequence of encryption and decryption queries as those in {\bf Step 2} where a restriction here is that $\pi^{\ast} \neq \pi$.
		
		{\bf Guess:} Finally, $\Gamma$ outputs $\theta' \in \{0, 1\}$. If $\theta' = \theta$, $\Gamma$ will win the game.
	\end{itemize}
	The polynomial time adversary $\Gamma$ participated in the game is referred to as an IND-CCA adversary with the guessing advantage $Adv^{IND-CCA}_{\Pi}(\Gamma) = \vert Pr[\theta = \theta'] - \dfrac{1}{2} \vert$.
\end{definition}

\begin{definition}
	IND-CCA Security \\
	We can say that a symmetric cryptosystem is $(t, \varepsilon)$-IND-CCA secure if no polynomial time adversary $\Gamma$ within running time $t$, has guessing advantage $Adv^{IND-CCA}_{\Pi}(\Gamma) \geq \varepsilon$ after performing the game of Definition II.6.
\end{definition}

\begin{definition}
	The game for indistinguishability under a pseudorandom permutation and a random permutation (PRP)~\cite{Song_2000}\\ 
	A challenger $\psi$ and a polynomial time adversary $\Gamma$ play the following game with a pseudorandom permutation $\Omega$.
	\begin{itemize}
		\item{\bf Step 1.} $\psi$ runs a setup algorithm. $\psi$ gives $\Gamma$ the resulting public parameters $params$. There are two oracles, $\Omega$ and $\Omega^{-1}$, which are the pseudorandom permutation and its inverse, respectively. $\Omega$ can be regarded as an encryption function and $\Omega^{-1}$ can be regarded as the decryption function. $\Omega$ and $\Omega^{-1}$ know a secret key $k$. $\omega$ and $\omega^{-1}$ are the random permutation and its inverse, respectively. The random permutation $\omega$ is regarded as an encryption function and $\omega^{-1}$ is regarded as the decryption function.
		\item{\bf Step 2.} $\Gamma$ issues a sequence of $\Omega_k$ and $\Omega_k^{-1}$ queries. Upon receiving a $\Omega_k$ query, denoted by $\rho^{\ast}$, $\psi$ returns $\pi^{\ast} = \Omega_k(\rho^{\ast})$ to $\Gamma$. Upon receiving a $\Omega_k^{-1}$ query, denoted by $\pi^{\ast}$, $\psi$ returns $\rho^{\ast} = \Omega_k^{-1}(\pi^{\ast})$ to $\Gamma$.
		
		{\bf Challenge:}\\
		{\bf Case I:} 
		$\Gamma$ sends a plaintext $\rho$ to $\psi$ with a restriction that $\rho$ is different from each $\rho^{\ast}$ in {\bf Step 2}. $\psi$ randomly chooses $\theta \in \{0, 1\}$ and computes $\pi = \Omega_k(\rho)$ when $\theta = 0$ or $\pi = \omega(\rho)$ when $\theta = 1$. Then, $\psi$ returns $\pi$ to $\Gamma$. 
		
		{\bf Case II:} 
		$\Gamma$ sends a ciphertext $\pi$ to $\psi$ with a restriction that $\pi$ is different from each $\pi^{\ast}$ in {\bf Step 2}. $\psi$ randomly chooses $\theta \in \{0, 1\}$ and computes $\rho = \Omega_k^{-1}(\pi)$ when $\theta = 0$ or $\rho = \omega^{-1}(\pi)$ when $\theta = 1$. Then, $\psi$ returns $\rho$ to $\Gamma$.
		
		\item{\bf Step 3.} $\Gamma$ issues a sequence of $\Omega_k$ and $\Omega_k^{-1}$ queries as those in {\bf Step 2} where restrictions here are that $\rho^{\ast} \neq \rho$ and $\pi^{\ast} \neq \pi$.
		
		{\bf Guess:} Finally, $\Gamma$ outputs $\theta' \in \{0, 1\}$. If $\theta' = \theta$, $\Gamma$ will win the game.
		
		The polynomial time adversary $\Gamma$ participated in the game is referred to as a PRP adversary with the guessing advantage $Adv^{PRP}_{\Omega}(\Gamma) = |Pr[\Gamma^{\Omega_k, \Omega^{-1}_k} = 1] - Pr[\Gamma^{\omega, \omega^{-1}} = 1]| = |Pr[\theta' = \theta]-\dfrac{1}{2}|$.
	\end{itemize}
\end{definition}

\begin{definition}
	Pseudorandom Permutation Security (PRP Security)\\
	If no polynomial time adversary $\Gamma$ within running time $t$, has the advantage $Adv^{PRP}_{\Omega}(\Gamma) \geq \varepsilon$ after performing the game of II.8, then the function $\Omega : K_{\Omega} \times Z \rightarrow Z$ can be considered as a $(t, \varepsilon)$-secure pseudorandom Permutation~\cite{Song_2000}. Note that $K_{\Omega}$ is the key space of key $k$ and $Z = \{0, 1\}^n$ where $n$ is a security parameter. 
	
	The guessing advantage of $\Gamma$ is $Adv^{PRP}_{\Omega}(\Gamma) = |Pr[\Gamma^{\Omega_k, \Omega^{-1}_k} = 1] - Pr[\Gamma^{\omega, \omega^{-1}} = 1]| = |Pr[\theta' = \theta]-\dfrac{1}{2}|$. Note that $\omega$ is a random permutation selected uniformly from the set of all bijections on $Z$, and $k$ is chosen randomly from the set of key space $K_{\Omega}$.
\end{definition}

\begin{definition}
	Pseudorandom Function Security (PRF Security)\\
	If no polynomial time adversary $\Gamma$ within running time $t$, has the advantage $Adv^{PRF}_{\Lambda}(\Gamma) \geq \varepsilon$ after performing the game of II.8, then the function $\Lambda : K_{\Lambda} \times Z \rightarrow Z$ can be considered as a $(t, \varepsilon)$-secure pseudorandom Function~\cite{PRF_1996}. Note that $K_{\Lambda}$ is the key space of key $k$ and $Z = \{0, 1\}^n$ where $n$ is a security parameter. 
	
	The guessing advantage of $\Gamma$ is $Adv^{PRF}_{\Lambda}(\Gamma) = |Pr[\Gamma^{\Lambda_k} = 1] - Pr[\Gamma^{\lambda, \lambda^{-1}} = 1]| = |Pr[\theta' = \theta]-\dfrac{1}{2}|$. Note that $\lambda$ is a random function selected uniformly from the set of all bijections on $Z$, and $k$ is chosen randomly from the set of key space $K_{\Lambda}$.
\end{definition}

\section{Preliminaries} 

\ignore{
\subsection{Authentication}
In the field of computer science, authentication means verifying the identity of an entity or checking the correctness of a single piece of data. There are basically two different types of authentication, physical security, and information security. In physical security, we use physical devices or equipment to limit the usage of the resource or the entering of a place, and so on. For example, the private key and the identity information can be stored in a smart card which can be used as an access card of a building, ATM card or the credit card. On the other hand, the usages of authentication in information security are like accessing a computer-based system, mutual authentication between several parties or confirming the integrity of data. A system can be accessed only if the visitor has some corresponding properties so that system can identify the visitor. Before the communication, different parties should authenticate each other first to make sure terminals matching for every data transfer. And it is also important to verify the data in formal usage against to forge, replace or data damage. Authentication plays an important role in the field of the information security.

\subsection{Handover}
In the field of cellular telecommunications, handover means the process of switching channels or cells from one to another. There are many reasons that handover must be conducted during the transfer of the ongoing call or data session. Most of the reasons are based on the considerations of the quality of service for communication. First, some calls or data may deliver to other cells when the capacity for connecting of the current cell is used up. Second, the call may be transferred to another channel in the situation that the original channel is interfered by other calls in non-CDMA networks. Third, the handover process is needed when a user is on fast-traveling or moving from the coverage area of one cell to another for maintaining the communication and data transfer.

There are also two different types of handover, hard handover, and soft handover. A hard handover means breaking the connection of the source cell (original cell) before establishing the connection to the target cell (new cell), so that only one connection to the cell is maintained at the same time. The requirement of hard handover is to minimize the disruption to the data transfer by making the process as quick as possible. In contrast to hard handover, a soft handover process is one which connects to target cell while retaining the connection with the source cell for a short time. It breaks the connection with the source cell after the new one is established. In the situation of soft handover, user maybe connect to more than two cells at the same time and the best quality of service cell or channel of all can be used.

\subsection{Digital Signature}
The concept of digital signature is based on public-key cryptography, which is also called asymmetric cryptography (e.g. RSA algorithm~\cite{RSA_1978}). The most obvious difference between asymmetric cryptography and symmetric cryptography is that the ciphertext can be encrypted and decrypted with different keys. The public key and private key are generated in pairwise, but it cannot compute one of them from another. In other word, if the private key is kept in secret safety, the only one can decrypt the ciphertext is the owner of the private key. In general, the sender usually encrypts the data with the receiver's public key and the receiver decrypts the ciphertext with his own private key. One of the advantages of asymmetric cryptography is that everyone on the networks only needs to maintain a pair of keys, it is more efficient in the key management compared to the symmetric cryptography.

If a user wants to generate a digital signature, she/he should encrypt the data with her/his own private key as a digital signature and then everyone else can verify the digital signature with the signer's public key. The purpose of a digital signature is to resist forgery and the denial of sending data. Therefore, a formal digital signature must conform the features of integrity, verifiability, and undeniability. Integrity means if a digital signature can be verified, then we can make sure the transferred data is not damaged or tampered. Verifiability means a digital signature can be verified by anyone who has the public key. Undeniability means a signer of a digital signature cannot deny the fact she/he have ever generated the signature.

\subsection*{The RSA algorithm}
In 1978, Rivest, Shamir and Adleman proposed the RSA algorithm~\cite{RSA_1978} which is considered as one of the asymmetric cryptosystems. It is also widely used for information security today. The asymmetry in RSA is based on the factoring problem which stands for the practical difficulty of factoring the product of two large prime numbers. The detail process is as follows. 

\begin{itemize}
\item[$\star$]{\bf Key Generation}
\begin{description}
\item[-] 1. Select two distinct prime numbers $p$ and $q$.
\item[-] 2. Compute $n = p\times q$.
\item[-] 3. Compute $\phi(n) = \phi(p)\phi(q) = (p - 1)(q - 1) $, where $\phi$ is Euler's totient function~\cite{Abramowitz_1964}.
\item[-] 4. Choose an integer $e$ such that $1 < e < \phi(n)$ and $gcd(e, \phi(n)) = 1$.
\item[-] 5. Determine $d$ such that $d \equiv e^{-1} (mod ~ \phi(n))$.
\end{description}
The public key consists of the modulus $n$ and the exponent $e$ while the private key consists of the modulus $n$ and exponent $d$. The parameter $p$, $q$, and $\phi(n)$ must also be kept in secret or destroyed for the security of the private key.

\item[$\star$]{\bf Encryption}\\
If Alice wants to transfer data with Bob safety, Bob first generates his key pairs of public key$(e, n)$ and private key$(d, n)$ and delivers the public key to Alice. Alice then encrypts message $M$ with the public key$(e, n)$ as follows.
\[ C \equiv M^e (mod ~ n)\]
Then she transfers the ciphertext $C$ to Bob.
\item[$\star$]{\bf Decryption}\\
Upon receipt of the ciphertext, Bob decrypts $C$ with his own private key $(d, n)$ as follows.
\[M \equiv C^d (mod ~ n) \equiv (M^e)^d (mod ~ n)\]
\end{itemize}

}

\subsection{One-Way Accumulator}
\label{sec:accumulator}

One-way accumulator firstly introduced by J. Benaloh and M. de Mare in 1993~\cite{Benaloh_1993} is a one-way hash function with a quasi-commutative property for the purpose of testing membership without the help of a trusted authority.\\
$\star${\bf Definition 1: one-way hash function~\cite{Benaloh_1993}}\\
A family of $One-way ~ hash ~ functions$ is an infinite set  of functions $h_l:X_l\times Y_l \rightarrow Z_l$ having the following properties:
\begin{description}[labelsep=0.5em]
\item 1. There exists a polynomial $P$ such that for each integer $l, h_l(x, y) $ is computable in time $P( l, |x|, |y|) $ for all $x_l \in X_l$ and all $y_l \in Y_l$.
\item 2. There is no polynomial $P$ such that there exists a probabilistic polynomial time algorithm which, for all sufficiently large $l$, will when given $l$, a pair $(x, y) \in X_l \times Y_l$, and a $y' \in Y_l$, find an $x' \in X_l$ such that $h_l(x, y) = h_l(x', y')$ with probability greater than $1/P(l)$ when $(x, y)$ is chosen uniformly among all elements of $X_l \times Y_l$ and $y'$ is chosen uniformly form $Y_l$.
\end{description}
$\star${\bf Definition 2: Quasi-commutativity~\cite{Benaloh_1993}}\\
A function $f: X \times Y \rightarrow X$ is said to be $quasi-commutative$ if for all $x \in X$ and for all $y_1, y_2 \in Y$, $f(f(x, y_1), y_2) = f(f(x, y_2), y_1)$.

\noindent $\star${\bf Definition 3: Nyberg's One-way accumulator~\cite{Nyberg_1996}}\\
A family of $one-way$ $accumulators$ is a family of one-way sh functions with quasi-commutativity. The one-way accumulator by K. Nyberg~\cite{Nyberg_1996} is constructed based on the generic symmetry-based hash function~(e.g., SHA) and simple bit-wise operations.  Compared to Benaloh's scheme~\cite{Benaloh_1993}, Nyberg's scheme is more efficient without employing asymmetric cryptographic operations. Assume that the upper bound to the number of accumulated items is $N = 2^d$ where $d$ is a positive integer and let one-way hash function $h:\{0,1\}^{*}\rightarrow{}\{0,1\}^{l=r\times{}d}$, where $r$ is a positive integer. Let $x_1, x_2,...,x_m$ be the accumulated items with different string sizes and $y_i$ is the hashing value for each $x_i$, such that $\{y_i = h(x_i)\}_{i\in [1, m]}$, where $m \leq N$. $y_{ij}$ can be represented as $y_i = (y_{i,1},...,y_{i,r})$, for $y_{ij}\in\{0,1\}^{d}$ and $j=1,...,r$. Next, we replace $y_{i,j}$ by a single bit. If $y_{i,j}$ is a string comprised of $d$ 0s, it is replaced by 0. Otherwise, $y_{i,j}$ is replaced by 1. Since there are $r$ substrings of $y_{i,j}$, $y_i$ can be mapped to a string $b_i\{0,1\}^{r}$, such that $b_i = (b_{i,1}, b_{i,2}, ..., b_{i,r}) = \alpha(y_i) = \alpha(h(x_i))$. The $b_{i,j}$ denotes the $j^{th}$ bit of $b_i$ and the probability of $b_{i,j} = 0$ is $2^{-d}$. In this way, we can transfer an accumulated item $x_i$ to a bit string $b_i$ of length $r$ which can be considered as a value of $r$ independent binary random variable if $h$ is an ideal hash function. Let $H^{Nyb}()$ denote Nyberg's fast one-way hash function and $\odot$ be the bitwise operation AND. The accumulated function on an accumulated item $X$ with an accumulated key $K$ can be described as $H^{Nyb}(K, X) = K \odot \alpha(Y) = K \odot \alpha(h(X))$. And it also can be described as $Z =H^{Nyb}(K, X) = K\odot\alpha(y_i)= K \odot \alpha(h(x_i))$ for $i=1,...,m$ if $X$ is a set of accumulated items $X = (x_1, x_2, ..., x_m)$. As the bitwise operation AND obeys the commutativity rule, the quasi-commutativity of $H^{Nyb}()$ can be achieved and $H^{Nyb}(H^{Nyb}(K, x_1), x_2)= H^{Nyb}(H^{Nyb}(K, x_2), x_1)$. On the other hand, the operation AND as a logic multiplication operation also has the property of absorbency, which can be expressed as "A $\odot$ A = A ". Hence, $H^{Nyb}(H^{Nyb}(K, x_i),x_i)= K \odot \alpha(h(x_i)) = H^{Nyb}(K, x_i)$. To verify the membership of an item $x_i$ on the accumulated value $Z$ expressed as $(a_1, a_2, ..., a_r)$, compute $b_i = \alpha(h(x_i))$ corresponding to $b_i = (b_{i,1}, b_{i,2}, ..., b_{i,r})$ and check that whenever $b_{i,j} = 0$ then $a_j = 0$ for all $j = 1, ...,r$.  Using the property of absorbency, one can verify whether an item $x_i$ within the accumulated value $Z$ by $H^{Nyb}(Z, x_i) = Z \odot \alpha(h(x_i)) = Z$. The security proof of Nyberg's one-way accumulator~\cite{Nyberg_1996} is based on the availability of a long, truly random hash code which provides strong one-wayness property. In other words, it can be proven secure in the Random Oracle Model~\cite{Bellare_1993_RO,Canetti_2004}.

\ignore{
\begin{theorem}
Let $b_{i,j}$ and $c_j$ be independent binary random variables such that $Pr(b_{i,j} = 0) = Pr(c_j = 0) = 2^{-d}$, for $i = 1, ..., m, (m \leq N = 2^d)$ and $j = 1, ...,r$. Let $a = (a_1, ..., a_r)$ be the coordinate-wise product of the $r$-tuples $b_i = (b_{i,1}, ..., b_{i,r})$, $i = 1, ..., m$. Then the probability that, for all $j = 1, ..., r$, we have $c_j = 0$ only if $a_j = 0$, is equal to 
\[(1 - 2^{-d}(1 - 2^{-d})^m)^r\] 
\end{theorem}

\begin{proof}
For each $j = 1, ...,r$ the probability that $c_j = 0$ and $a_j = 1$ equals 
\[2^{-d}(1 - 2^{-d})^m\]

Assume that $h$ is an ideal hash function with the properties of randomness and one-wayness and $N = 2^d$ is the upper bound to the number of accumulated items. Here $m \leq N$ and $e$ is Neper's number. The probability of forging an item to an accumulated value successfully is as follows.
\[P_f = (1 - 2^{-d}(1 - 2^{-d})^m)^r \leq (1 - \dfrac{1}{N}(1 - \dfrac{1}{N})^N)^r \]
\[\approx (1 - \dfrac{1}{Ne})^r \approx e^{-\dfrac{r}{Ne}}\]
\[N \rightarrow \infty, (1 - \dfrac{1}{N})^N \rightarrow e^{-1}\]

\end{proof}

Let $t = \dfrac{r}{N\times e}$ be the parameter presented as security level. It is obvious when $t$ is big enough then the probability $P_f$ of forgery is small enough, and the security of Nyberg's fast one-way accumulator is strong enough. We can also find some relationship between $N$ the upper bound of accumulated value and $l$ the length of required long hash code. Some information can be obtained from the above equation.
\[ r = N \times e \times t\]
\[ d = log N\]
\[ l = r \times d = t \times N \times e \times log N\]

According to the formulas above, $l$ is proportional to $NlogN$ with a fixed security level $t$. For example, assume that the security level $t$ is 1024, which means the probability of forgery less than $2^{1024}$, and the upper bound of accumulated items $N$ is 1024, too. Then the length of the hash code we need to compute is about 28 megabits. The way how to generate a required long hash code is also mentioned in Nyberg' scheme. First, we hash the item into a fixed short hash code, and then take the hash code as an input of a binary random sequence generator. The generator will generate as many pseudorandom bits as we need for the long hash code.
}

\section{Proposed Scheme}
{The proposed region-based fast authentication introduces the concept of regional warrants, where each region is formed by the coverage of a macro cell, which includes one eNB and several belonging HeNBs.  A UE will be issued a regional warrant when visiting a new region and completing the {\bf Initial Handover} protocol. When the UE roams to another HeNB within the same region, the region-based fast handover authentication will be performed with the assistant of MME and HSS/AuC. Hence, the fast handover authentication greatly reduces the communication latency.}
Our {protocol} also provides an active revocation function such that the operator can revoke {UE's} warrant actively when {the membership of UE is changed. }. The notations used in our {protocols} are shown in Table I. 

\subsection{Overview and Key Management}
The proposed scheme contains five phases, {\bf Initialization, Registration, Initial Handover, Region-based Fast Handover}, and {\bf Active Revocation}. First, the {\bf Initialization} phase introduces the initialization of the parameters of macro cells~{(eNB)}, small cells {(HeNBs)}, {the MME, and the HSS/AuC}. The {\bf Registration} phase presents the procedures of registering a new UE joining with its identity and security information in the mobile network. How a user joins {a new visiting region} and gets a warrant, which is generated by {the eNB of the region}, are shown in the {{\bf Initial Handover}} phase. The {\bf {Region-based Fast} Handover} phase presents that how a {UE} and the { visiting HeNB} authenticate each other when the user moves from one {small} cell to another. Finally, we present an active revocation function in the {\bf Active Revocation} phase. 

Let the {region covered by eNB} be the macro cell, {the regions covered by HeNBs} be the small cells, and UEs be the {mobile} user terminals. Table~\ref{tab:key_management} the variables of key management in the proposed ReHand scheme. 

\begin{table}[tbh]
	\centering
	\caption{Notations}
	\begin{tabular}{||c|l||}
	\hline\hline
	Notation & Meaning\\
	\hline
	{$GK_i$} & {group key of the eNB $i$ and the belonging HeNBs}\\\hline
	$ID_i$ & identity of {UE} $i$\\\hline
	$pID_i,rID_i$ & anonymous identity of UE $i$ \\\hline
	$TID_{ij}$& anonymous identity of UE $i$ for region $j$ \\\hline
	 $bR^{I}_j$& blind factor of anonymizing identity\\\hline
	\multirow{2}{1.8em}{$E_{x}(y)$} & using a symmetric encryption function to encrypt\\& message $y$ with key $x$ \\\hline
	\multirow{2}{1.8em}{$D_{x}(y)$} & using a symmetric decryption function to decrypt\\& message $y$ with key $x$ \\\hline
	$K_{i}$ & {shared} long-term secret key between UE $i$ and HSS/AuC\\\hline
	$d$ & one-time key between {UE} $i$ and {HSS/AuC}\\\hline
	\multirow{2}{1.8em}{$D_{ij}$} & {region} secret key shared between {UE} $i$ and the region \\ 
	         & {covered by the eNB $j$ and its belonging HeNBs.}\\\hline
	\multirow{2}{1.8em}{$R^{S_t}_{j}$} & the accumulated value of the revocation list for the \\ 
	      &  region $j$  in time slot $S_t$ \\\hline
	$H^{Nyb}$ & Nyberg's fast one-way hash function\\\hline
	$H,F$ & one-way hash functions\\\hline
	$T_{ex}$ & timestamp of warrant's expiration time\\\hline
	$\odot$ & bitwise operation AND\\\hline
	\hline
	\end{tabular}
	\label{tab:notation}
\end{table}

\begin{table}[tbh]
	\centering
	\caption{{Key Management in ReHand}}
	\begin{tabular}{||c|c|c||}
		\hline\hline
		UE & eNB\&HeNBs of region $j$ & HSS/AuC\\
		\hline
		$K_i$, $D_{ij}$ &$GK_{j}$& $\{D_{ij}\}_{i\in[1,N]}$,$\{GK_j\}_{j\in[1,M]}$ \\
		$TID_{ij}$ & $R^{S_t}_{j}$  & $\{K_{i}\}_{i\in[1,N]},\{rID_{i}\}_{i\in[1,N]}$\\
		$pID_i$&$\{bR^{I}_j\}_{I\in[1,k]}$ &$K_{H},\{pID_i\}_{i\in[1,N]}$
		\\\hline
		\hline
	\end{tabular}
	\label{tab:key_management}
\end{table}


\subsection{{Initialization}}
This phase produces the required parameters as follows:
	\begin{description}
	\item[-]{\bf Step 1:}
	{In order to form a region of fast handover, the HSS/AuC issues a group key $GK_{j}$, blind factors $\{bR^{I}_j\}_{I\in[1,k]}$ for identity anonymization, to each eNB $j$ and its belonging HeNBs.}
	\item[-]{\bf Step 2:}
	{The HSS/AuC issues a long-term secret key $K_i$ for each UE $i$ as the shared long-term secret between them. It also selects an anonymous identity $rID_i$ and computes $pID_i=E_{K_H}(rID_i)$ for each UE $i$.}
	\item[-]{\bf Step 3:}
	{The HSS/AuC prepares a revocation list $R^{S_t}_j$ for each region $j$ in time slot $S_t$, where the region secret keys of the revoked UEs are accumulated by Nyberg's accumulated hash function. $R^{S_t}_j$ is empty initially.}
	\end{description}
	
	
\subsection{{Registration}}
{UE $i$ registers to the system securely} and share the private parameters with the HSS/AuC.
	\begin{description}
	\item[-]{\bf Step 1:}
	{UE} $i$ registers to the system with the identity information $ID_i$ and the registration required information.
	\item[-]{\bf Step 2:}
	The {HSS/AuC issues} $ID_i, rID_i, pID_i$, and $K_{i}$ to the registered UE $i$ and records $(ID_i,rID_i,pID_i)$ in its database.
	\end{description}

In practical situations, a {mobile user} should contact with the system operator in person before consuming the service. {The mobile user will be issued a SIM card, which is temper-resistant and stores the personal identity and secret, i.e.,$ID_i$, and $K_{i}$,} for her/his mobile device.


\subsection{{Initial Handover}}
When {a UE} $i$ {roams to the coverage of a new HeNB belonging to a newly visiting eNB $j$}, it should process the following steps to get a timeliness warrant from the system. The UE $i$ can, therefore, access the network {by running} the {region-based} handover process with each HeNB in the same coverage of the eNB with the warrant until it expires.

In order to preserve {identity} privacy, the real identity $ID_i$ of {UE} $i$ should be hidden during transferring data. The {HSS/AuC} and eNBs/HeNBs take $pID_i$ as a label so that they can {extract the corresponding $K_i$ and $rID_i$ for the following authentication and key exchange}.

	\begin{description}
	 \item[-]{\bf Step 1:}
	 {UE} $i$ chooses a one-time key $d$ at random and sends $\{pID_i, C_1=E_{K_i}(pID_i,d)\}$ to the new visiting HeNB.
	 \item[-]{\bf Step 2:}
	 {When the HeNB received $\{pID_{i},C_1\}$, it forwards them to the MME through the eNB. Once the MME received $(pID_i,C_1)$, it sends them to HSS/AuC for authentication.} 
	 \item[-]{\bf Step 3:}
	 After receiving the messages from {the MME}, the {HSS/AuC first retrieves $K_i$ by the corresponding $rID_i=D_{K_H}(pID_i)$ and $d$ by decrypting $C_1$ with $K_i$. The HSS/AuC then checks if the decrypted $pID_i$ is equal to the $pID_i$ sent by the UE $i$. It then selects a new anonymous identity $rID_i$~(replace the original $rID_i$ with it) and $I\in\{1,k\}$, computes $TID_{ij}=\{\lambda=rID_i\oplus{}bR^{I}_j\}$, computes $pID^{*}_i=E_{K_H}(rID_i)$, $D_{ij}=H(GK_j,rID_i,T_{ex})$, $C_2=E_{K_i}(TID_{ij},D_{ij},T_{ex},d,pID^{*}_{i})$, $CK=H(d,pID^{*}_i)$, $K_{M,i}=F(CK)$, keeps the session key, $CK$, shared with the UE $i$, and sends $\{C_2,K_{M,i}\}$ back to the MME.}
	 \item[-]{\bf Step 4:}
	 After receiving the messages from the {HSS/AuC}, the {MME computes the session key, $K_{eN,i}=F(K_{M,i})$, shared with the eNB and the UE $i$, and sends $\{C_2,K_{eN,i}\}$ back to eNB. The eNB keeps $K_{eN,i}$ and sends the session key, $K_{He,i}=F(K_{eN,i})$, shared with the UE $i$.}  {Finally, both eNB and HeNB  share $K_{eN,i}$ and $K_{He}$ with the UE $i$, respectively, for the subsequent secure communications. Afterward, the HeNB sends $C_2$ to UE $i$.}

	 \item[-]{\bf Step 5:}
	 Upon the receipt of the messages from the {HeNB}, the {UE $i$} extracts $\{TID_{ij},D_{ij},T_{ex},$ $d,$ $pID^{*}_i\}$ by decrypting $C_2$ with $K_i$. It then replaces $pID_i=pID^{*}_i$, updates $TID_{ij}$, and computes the session keys shared with the HSS/AuC, MME, eNB, and HeNB by $CK=H(d,rID^{*}_i), K_{M,i}=F(CK), K_{eN,i}=F(K_{M,i}), K_{He,i}=F(K_{eN,i})$.

\end{description}

\begin{figure*}[!htb]
	\begin{minipage}{.5\textwidth}
		\includegraphics[scale=0.32]{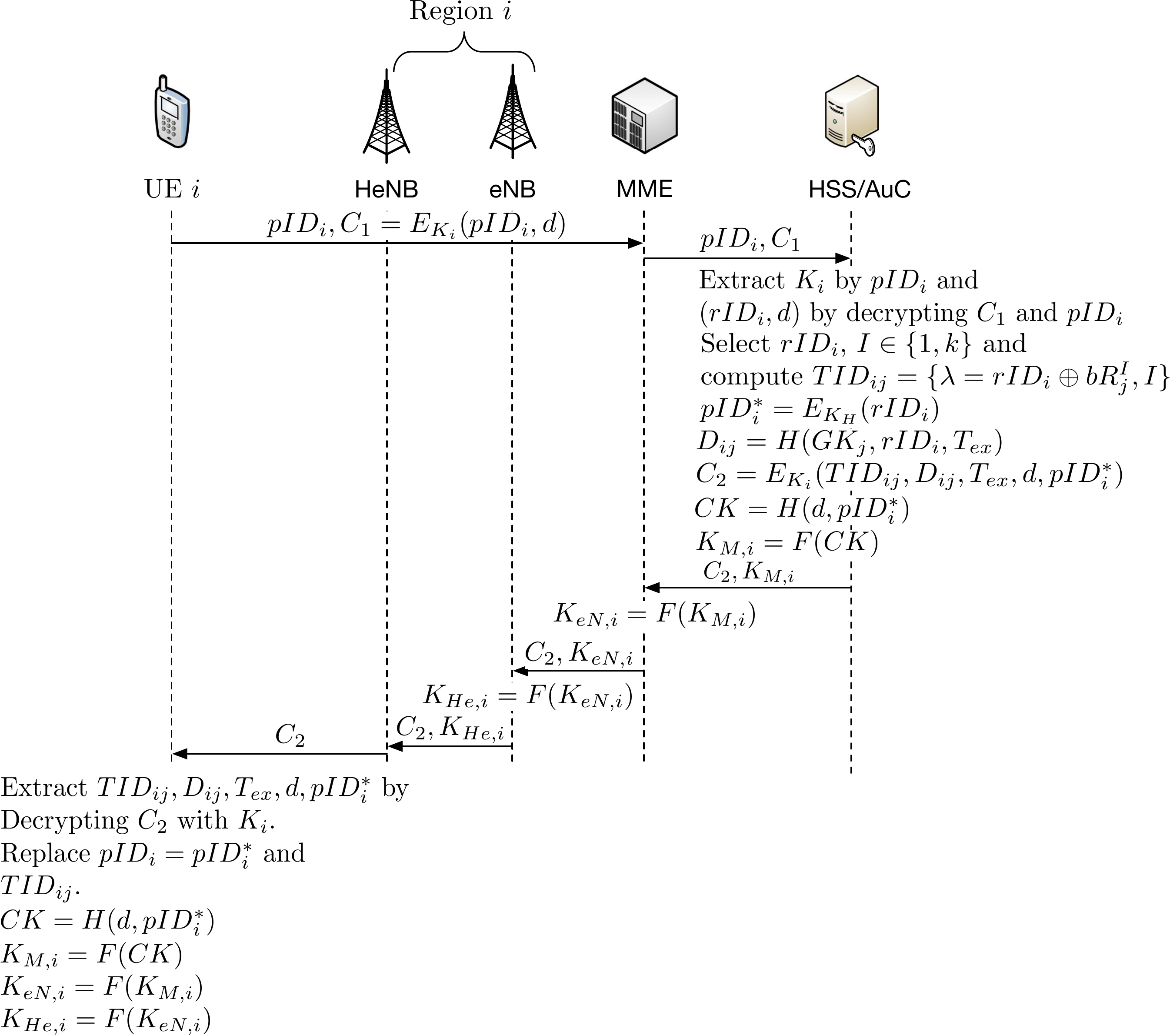}
		\caption{Initial authentication for entering the coverage of a new attaching eNB}
		\label{fig:fig_MJ1}
	\end{minipage}
	\begin{minipage}{.5\textwidth}
		\includegraphics[scale=0.37]{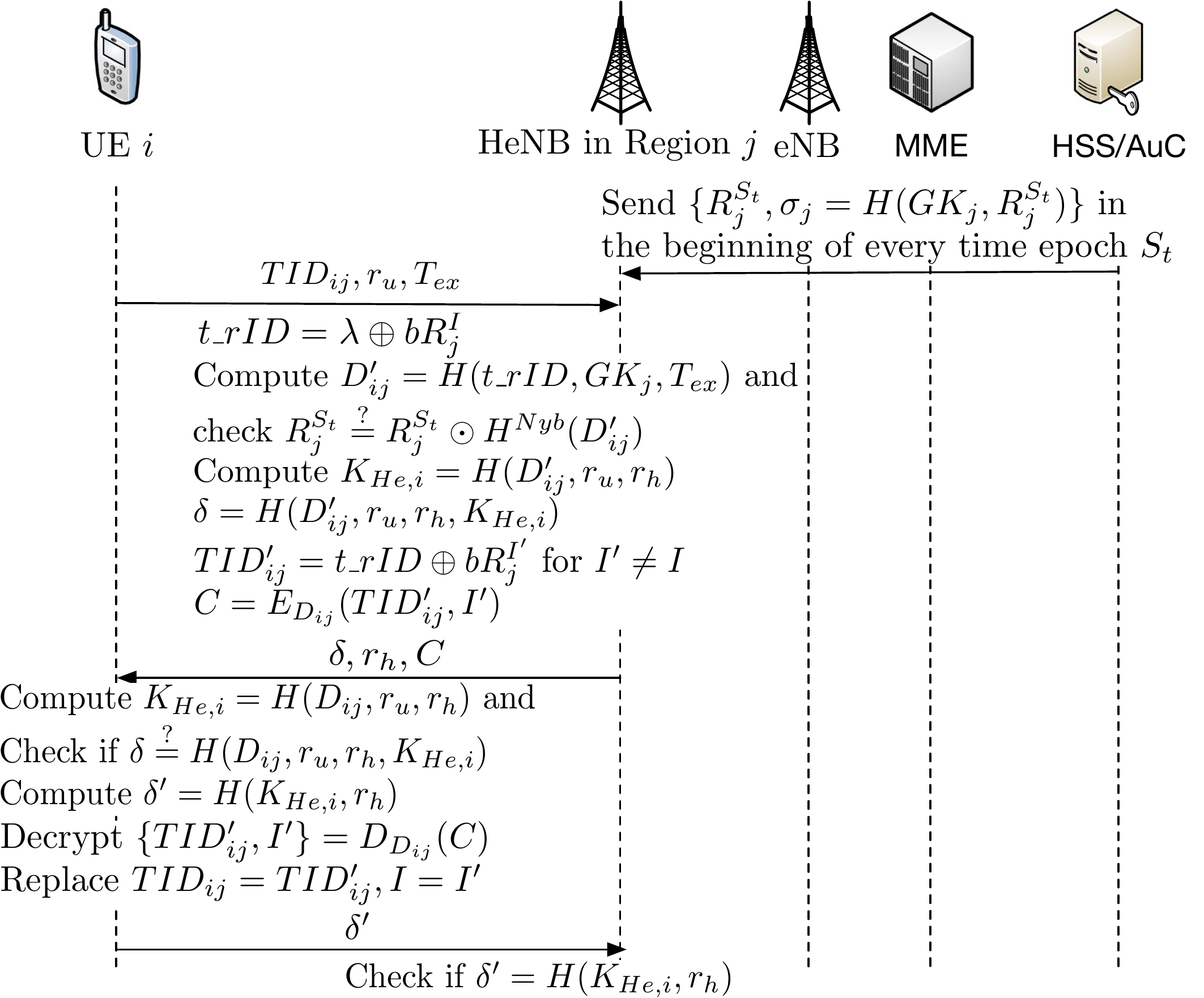}
		\caption{{Region-based} Handover phase}
		\label{fig:fig_HO}
	\end{minipage}
\end{figure*}
After the initial handover phase, the visiting region $j$ of the UE $i$ will be recorded by the MME. The UE $i$ will share $D_{ij}$ and $pID_i$, which will be updated for every initial handover session, with the HSS/AuC. If any UE $i$ is revoked, the HSS/AuC will update $R^{S_t}_{j}$ by $R^{S_t}_{j}=H^{Nyb}(R^{S_t}_{j}, D_{ij})$ and send to the HeNBs and the eNB of the region $j$.
	
	
\subsection{{Region-based Fast Handover}}
When {UE} $i$ moves from one HeNB to a new HeNB within the same region $j$, it needs to run the {region-based} handover authentication. How {UE} $i$ and the {HeNB} authenticate each other securely and exchange a session key is described as follows.

			
\begin{description}
	\item[-]{\bf Step 0:} {The HSS/AuC updates $R^{S_{t}}_{j}$ for the revoked $D_{ij}$ in the previous time epoch $S_{t-1}$ and send it with the proof $\sigma_j=H(GK_j,R_{j}^{S_{t}})$ to the HeNB in the region $j$ at the beginning of $S_{t}$. Once receiving $\{R_{j}^{S_t},\sigma_j\}$, the HeNB check the correctness with $GK_{j}$ by $\sigma_j\stackrel{?}{=}H(GK_j,R_{j}^{S_t})$.
	}
	\item[-]{\bf Step 1:}
	{The UE $i$} sends {$\{TID_{ij},r_{u},T_{ex}\}$} to the {HeNB in the region $j$}. 
	\item[-]{\bf Step 2:}
	Upon the receipt of the messages from {UE} $i$, the {HeNB} extracts $bR^{I}_{j}$ by $I$ and computes  $t_rID=\lambda\oplus{}bR^{I}_j$ and computes $D'_{ij}=H(t_rID,GK_j,T_{ex})$. It then checks $R^{S_t}_{j}\stackrel{?}{=}H^{Nyb}(R^{S_t}_{j}, D'_{ij} )$ and $T_{cur} - T_{ex} \stackrel{?}{<} \theta$. If so, the HeNB selects a nonce $r_h$, and computes the shared session key $K_{H_{e,i}}=H(D'_{ij},r_u,r_h)$ and the response $\delta=H(D'_{ij},r_u,r_h,K_{H_{e,i}})$. It also computes $TID'_{ij}=t\_rID\oplus{}bR^{I'}_{j}$ by selecting a new $I'\neq{}I$ in $[1,k]$ and $C=E_{D_{ij}}(TID'_{ij},I')$. The HeNB sends $\delta,r_{h},C$ to the UE $i$.
	\item[-]{\bf Step 3:}
	Afterward, the UE $i$ computes $K_{H_{e,i}}=H(D_{ij},r_u,r_h)$ and checks if $\delta\stackrel{?}{=}H(D_{ij},r_u,r_h,K_{H_{e,i}})$. If so, the UE $i$ computes $\delta'=H(K_{H_{e,i}},r_h)$, $\{TID_{ij},I'\}=D_{D_{ij}}(C)$, and replaces $TID_{ij}=TID'_{ij}$ and $I=I'$. The UE $i$ then sends $\delta'$ to the HeNB.
	\item[-]{\bf Step 4:} { Upon the receipt of $\delta'$, the HeNB checks if $\delta'=H(K_{H_{e,i}},r_h)$. If it holds, the UE $i$ is legal and accepted.}
	\end{description}
If the protocol is completed, both UE $i$ and the HeNB are legal and accept each other. The handover {protocol} is also illustrated in Fig.~\ref{fig:fig_HO}.


\subsection{{Active Revocation}}
{This phase shows the procedure of revoking a UE in the system}. {When the system revokes the membership of a UE}, {the HSS/AuC can revoke the issued warrant and long-term secret key as the following.}
	\begin{description}
	\item[-]{\bf Step 1:}
	The {system} operator provides {$ID_i$ of the revoked UE $i$} to the {HSS/AuC. The HSS/AuC will check all the temporarily anonymous identities and the corresponding warrants unexpired and issued for the UE $i$.} 
	\item[-]{\bf Step 2:}
	The {HSS/AuC} revokes the unexpired warrants by updating the $R^{S_t}_j = H^{Nyb}(R^{S_t}_j, D_{ij})$ of each region $j$ for time epoch $S_t$ with its message authentication code $\sigma_{j} = H(GK_j, R^{S_t}_{j})$. The HSS/AuC then sends each updated $\{R^{S_t}_j,\sigma_j\}$ to its belonging region.
	\item[-]{\bf Step 3:} Each HeNB will verify if the UE is revoked or not by the operations indicated in {\bf Step 2} of {\bf Region-based Fast Handover}. 
	\end{description}

\subsection{Management of User Warrants and Revocation Lists}
	\begin{figure}[!t]
	\centering
	\includegraphics[scale=0.18]{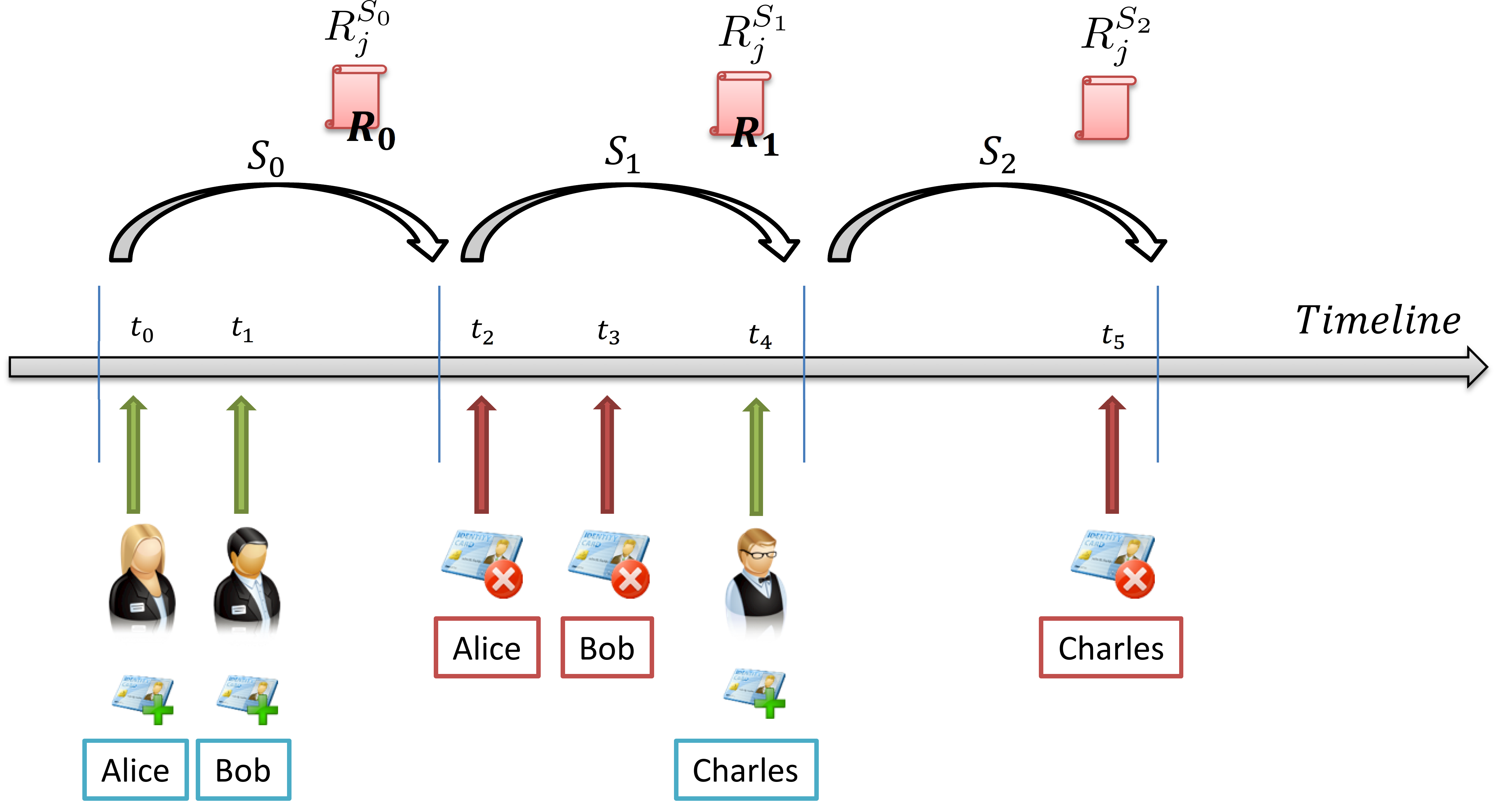}
	\caption{Analysis of the revocation list}
	\label{fig:fig_R}
\end{figure}	
Fig.~\ref{fig:fig_R}  illustrates the setting of time slots {for the update of revocation list of each macro cell region}. The {HSS/AuC} generates the accumulated value {$R^{S_t}_{j}$} as the revocation list {for macro cell region $j$} in each time slot {$S_{t}$ for all $t\in\mathbb{N}$}. Each $R_{j}^{S_t}$ contains the $D_{ij}$ of all revoked UEs in the region $j$. For example, Alice, Bob, and Charles get their warrants at $t_0, t_1, t_4$, respectively. Assume that all warrants are only valid for one time slot, the warrants of Alice, Bob, and Charles will expire at $t_2, t_3, t_5$, respectively. If the warrants of Alice and Bob are actively revoked before the expiration times $t_2, t_3$ and within $S_0$, {their} warrants {will} be accumulated into $R^{S_0}_{j}$.  If the active revocation time is before $t_2$ and $t_3$, and within $S_1$, the issuing time of $R^{S_1}_{j}$ is after $t_2$ and $t_3$.  Hence, the HSS/AuC needs not to accumulate the warrants of Alice and Bob into $R_{j}^{S_1}$ since the warrantes will be revoked passively by the expiration times.
  
Similarly, if the warrant of Charles is actively revoked before the expiration times $t_5$ and within $S_1$, the warrant of Charles should be accumulated into the revocation list {$R_{j}^{S_1}$}. Otherwise, Charles's warrant will be expired passively after $t_5$. The revocation list of each region in every time slot will be produced by the HSS/AuC and sent to the corresponding region for the active revocation of all the unexpired warrants.\\

\noindent{{\bf Security of Shared Group Key.} 
The eNB and HeNBs in the same region $j$ share the same group key $GK_j$ for the region-based fast handover authentication. Any eNB or HeNB may be accessed physically by adversaries, who intend to retrieve $GK_j$ and impersonate a legal UE $i$ by computing a forged $D_{ij}$ in the region $j$. The system operator can prevent this kind of attacks by adopting trust platform modular~(TPM) or Trust Execution Environment~(TEE)~\cite{IoTTEE_2017}, which are popular technologies to protect secret keys and compute the related cryptographic operations with the secret keys in dedicated hardware chipset. Additionally, even if the adversary has $GK_{j}$, she/he cannot pass the region-based fast handover authentication as it requires to send out the correct $pID^*_i$, which will be renewed after every communication session. $pID^*_i$ is considered as additional secret shared among the UE $i$ and eNB/HeNB in the region $j$. Hence, the proposed {\bf ReHand} achieves the security of shared group key against key exposure attacks.}

\noindent{{\bf Practical User Anonymity in Mobile Network.} Many prior arts have shown their solutions~\cite{RoamAuth_LLLLZS14,RoamAuth_LCCHAZ15} for user anonymous authentication in mobile networks, where the footprint of communication sessions of UE is unlinkable. In the 3GPP standards of mobile networks, each UE has to complete the attachment to the new visiting serving networks with the assistance of the belonging MME and HSS/AuC so that the mobile service provider of the UE can locate its area in case of an incoming voice call. The above two works adopt group signatures~\cite{GS_BBS04,GS_2012,VLR_GS_BS04,VLR_GS_CLHZ12} to achieve strong user anonymity by which the recipient of the authentication information, made by group signatures with the issued group signing key of distinct user, can only verify the legitimacy by the group public key. That is, only the legal users can produce the valid group signatures on the selected messages by the group signing keys issued from the trust authority. However, strong anonymity causes the impossibility of tracing the location of UEs and results in the failure of incoming voice call service and short message service~(SMS). Thus, AKE protocols with strong user anonymity only check the legitimacy of UEs without knowing the exact identity information for services. For the mobile network services requiring the exact identity information to respond the incoming call, which is exactly for the specified UE, the system should supports traceability for conditional user anonymity.\\
\noindent{\bf Revocation Costs for User Anonymity.}
User revocation to anonymous AKE is essential to the membership management for accountability in mobile networks. Even if the strong user anonymity can achieve identity untraceability against system operators, the revocation of unsubscribed users should get effected to ensure the accountability. The operations of revocation should be considered as parts of operations in AKE. Hence, the costs evaluation of AKE should include revocation costs in practice. 
}

\section{Security Analysis}

\ignore{
\subsection{Security and Properties Analysis}
\begin{itemize}
\item[$\star$]{\bf Secure handover authentication}\\
We provide a novel secure handover authentication protocol for 5G in Section III, which is against replay attack and man-in-middle attack. A valid warrant received by the mobile node in the {\bf Initial Handover} phase can be generated if and only if each of the participants, the mobile node, the AAA server, and the operator, is legal. In the {\bf Handover} phase, the mobile node can run the handover process smoothly with the legal warrant whenever it moves from one base station to another. Moreover, we also provide the security proofs and analysis for secure mutual authentication of our proposed scheme in Section IV.B. 

\item[$\star$]{\bf Against replay attack:}\\
Before transferring data, the sender generates a timestamp of the current time in every data flow of the {\bf Initial Handover} phase and the {\bf Handover} phase. Upon the receipt of data and the timestamp, the receiver checks the difference value between the timestamps and the current time. If the difference value is within a legitimate scope, then the data flow has not been replayed. Therefore, our proposed scheme can resist replay attack.

\item[$\star$]{\bf Against man-in-middle attack:}\\
We use a secure symmetric encryption algorithm to encrypt data in every data flow of the {\bf Initial Handover} phase, such that no one can obtain or modify the plaintext without the corresponding symmetric key. In the {\bf Initial Handover} phase, the pseudorandom identity of MN $i$ and the timestamps cannot be modified by the adversary who eavesdrops the data flow even though these parameters are transferred in plaintext. Due to the data sender encrypting the hash value of the pseudorandom identity and the timestamp of the current time in the ciphertext, the legal receiver can check if these parameters are modified. 

In the {\bf Handover} phase, we do not encrypt any data, but the adversary cannot modify any information in the data flow owing to the hash value of the short-term key $D_i$ and the timestamp of the current time in each data flow. The hash value can be computed if and only if one holds the short-term key $D_i$. Only the base stations, the AAA server, and APs, can compute or rebuild a legal short-term key $D_i$ with the secret group key $GK$. If any of the parameters, $rID_i, Sig_O(rID_i)$, and  $T_{ex}$, is modified, then the short-term key $D_i$ cannot be computed normally. Once that AP fails to rebuild $D_i$, it will drop the current data flow and notify the mobile node.

\item[$\star$]{\bf Tailored to 5G:}\\
The architecture and idea of our scheme are based on the technology solution of the current 5G improvement. Ultra-low latency is a main requirement in 5G. In order to the vision of the ubiquitous networks, we must reduce the latency while users want to obtain higher performance network services. Therefore, we make a lightweight design in the {\bf Handover} phase of our proposed scheme such that users can complete the handover process without the communication with the core networks. In our proposed scheme, the mobile node obtains a warrant from the core networks when it joins the networks or accesses the service area of the current connecting macro cell. After obtaining the warrant, the mobile node can use this warrant to perform the handover process whenever it moves to any small cells within the coverage of the current macro cell. It is unnecessary for the mobile node to connect to the macro cell anymore until it leaves the networks or the service area of the current macro cell. By this way, our proposed scheme simplifies the handover process between the mobile node and the system and thus, greatly reduce the total latency.

\item[$\star$]{\bf Computation Efficiency:}\\
Computation efficiency in the {\bf Handover} phase is one of the main features of our scheme. In the {\bf Handover} phase, we only need a few hash and bitwise operations to complete the mutual authentication between MN $i$ and AP. The identity of MN $i$ can be simply authenticated by using the hash function in the premise that MN $i$ holds a legal short-term key $D_i$ and $AP$ is a legal base station which holds the secret $GK$. On the other hand, MN $i$ can also authenticate $AP$ simply by checking the hash value in the data flow. If a base station $AP$ can compute a correct hash value, it implies that $AP$ can rebuild the short-term key $D_i$. 

\item[$\star$]{\bf Privacy Preserving:}\\
Privacy preserving is a basic demand in any networking environment. In 5G, users access the Internet or gain the services from networks through cells with different sizes. These infrastructures may record the data flow or information of users so that they can do something to gain benefit. If there are no protections on user's privacy, the real identities of users may be leaked. Anyone else should not know the real identity of a user. Only the user himself and the operator hold his real identity.

In our proposed scheme, MN $i$ obtains the pseudorandom identity $rID_i$ in the {\bf Register} phase. The pseudorandom identity $rID_i$ should be updated each time when MN $i$ joins the networks. Moreover, $rID_i$ is not only a label of MN $i$ but also a key parameter of the short-term key $D_i$. The base stations record the pseudorandom identities of MNs during the serving time without knowing the real identities of MNs. In this way, the privacy of user terminals is preserved.

\item[$\star$]{\bf Passive/Active Revocation:}\\
We proposed a warrant concept scheme for revocation. Passive revocation means that the warrant will be invalid when it reaches the expiration time, while active revocation makes it possible for the operator to invalidate the warrant in an emergent situation. For example, the user's mobile device may be lost or stolen. People who gain the mobile device may do something illegal, such as shopping on the Internet with the account of the original user. Therefore, the operator should suspend the network services to the warrant immediately. We can track the lost mobile device when the operator receives the request of a new warrant from it.

In the proposed scheme, we provide the functional active revocation in the {\bf Active Revocation} phase. The operator can revoke any warrants by using the active revocation function at any time. Due to the upper bound limit problem of the one-way accumulator, the AAA server must generate a new accumulated value $R$ when the old one reaches its upper bound $N$. The AAA server would not have to store these revocation lists $R$ for long but keeping these $R$ until all revoked warrants accumulated in $R$ reach their expiration times.

\end{itemize}
}

\subsection{{Security Analysis of Initial Handover}}

\begin{theorem}\label{theorem:MemberJoin}
The proposed mutual authentication and key exchange protocol {$\Pi$} with {user anonymity} for the {\bf Initial Handover} phase is secure based on the IND-CCA security of the underlying {pseudorandom function} and the pseudorandom permutation.
\end{theorem}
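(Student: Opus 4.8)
The plan is to argue within the Bellare--Rogaway framework recalled above, establishing the three conditions for $\Pi$ to be a secure mutual authentication and key exchange protocol --- matching conversations force both oracles to accept; $No\text{-}Matching_E(k)$ is negligible; and $|\Pr[Distinguish_{sk_E}(k)]-\tfrac12|$ is negligible --- together with the session-unlinkability required for user anonymity, all by a short sequence of game hops that replace each keyed primitive by an idealized version. Concretely, I would let a simulator run all of $E$'s oracle instances $\Pi^s_{A,B}$ (for $A,B$ among UE, eNB/HeNB, MME, HSS/AuC) honestly, then bound the change in $E$'s success as the primitives are idealized one at a time.

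\textbf{Game hops.} Game~0 is the real execution. In Game~1 every output of the key-derivation function $F$ --- namely $K_{M,i}=F(CK)$, $K_{eN,i}=F(K_{M,i})$, $K_{He,i}=F(K_{eN,i})$ --- is replaced by a freshly sampled value that the simulator keeps consistent across honestly matching oracles; a distinguisher here is a PRF distinguisher for $F$, so the gap is at most $Adv^{PRF}_{F}$. In Game~2 every secret-keyed evaluation of $H$ --- $CK=H(d,pID^{*}_i)$, $D_{ij}=H(GK_j,rID_i,T_{ex})$, and the confirmation tags --- is similarly replaced by an independent random value, at the cost of $Adv^{PRF}_{H}$ per key. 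In Game~3 each ciphertext computed under a secret symmetric key --- $pID_i=E_{K_H}(rID_i)$, $C_1=E_{K_i}(pID_i,d)$, $C_2=E_{K_i}(TID_{ij},D_{ij},T_{ex},d,pID^{*}_i)$ --- is replaced by the encryption of a fresh random string of equal length, with the HSS/AuC's decryption step answered from a table held by the simulator; by a hybrid over the $q$ initiated sessions, using the encryption/decryption oracles of Definition~\ref{def:IND-CCA} (whose security in turn rests on the PRP), this costs at most $q\cdot Adv^{IND-CCA}_{\Pi}$.

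\textbf{Conclusion in the idealized game.} After Game~3 neither $K_i$ nor $GK_j$ appears in $E$'s view except through uniform, session-independent values, whence: (i) two oracles with matching conversations both accept, by inspection of the five-step flow; (ii) for $\Pi^s_{A,B}$ to accept with no matching partner, $E$ must submit a ciphertext $C_1$ that decrypts (under the unknown $K_i$) to the right $pID_i$ together with a $d$ it never saw, or hit a now-random confirmation tag, each occurring with probability at most $q_E/2^{n}$ beyond the advantages already charged, so $\Pr[No\text{-}Matching_E(k)]$ is negligible; (iii) the two matching oracles hold the same (random) session key, which is independent of $E$'s transcript and hence indistinguishable from fresh randomness, bounding $Distinguish_{sk_E}$; and (iv) each transmitted identity label --- $pID_i$, a ciphertext of a fresh $rID_i$, and $TID_{ij}=\lambda=rID_i\oplus bR^{I}_j$, a fresh uniform $rID_i$ masked by a blind factor --- is distributed independently of the underlying user, giving unlinkability. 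Collecting terms yields $Adv_{\Pi}(E)\le Adv^{PRF}_{F}+Adv^{PRF}_{H}+q\cdot Adv^{IND-CCA}_{\Pi}+\mathrm{negl}(k)$.

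\textbf{Main obstacle.} The subtle step is (ii): IND-CCA security of a symmetric scheme does not by itself grant ciphertext integrity, yet the no-matching argument needs that $E$ cannot maul $C_1$ into a fresh ciphertext the HSS/AuC will accept. I would exploit the protocol's own design --- $C_1$ re-carries $pID_i$, which the HSS/AuC checks against the cleartext label, so an accepted forgery would have to decrypt under $K_i$ to a message still containing the matching $pID_i$ but a different $d$; this event is bounded through the IND-CCA decryption oracle by distinguishing such a decryption from that of a random ciphertext. The second subtlety is the reuse of $K_i$ across all of UE $i$'s sessions, forcing the multi-message form of the IND-CCA game and the hybrid that introduces the factor $q$; one must also verify that the simulator never needs to decrypt its own challenge ciphertext, which holds because the per-session consistency checks let it detect and reject replays.
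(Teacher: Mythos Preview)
Your proposal is sound and, in several respects, more careful than the paper's own argument, but it takes a genuinely different route. The paper does not use a single game-hopping sequence; instead it splits the theorem into three standalone lemmas --- one for mutual authentication (reduced to PRP security of $E_{K_i}$), one for key exchange (reduced to PRF security of $H(d,\cdot)$), and one for user anonymity (again reduced to PRP security) --- and for each lemma builds a simulator $\Gamma$ that plays the PRP/PRF distinguishing game directly: when the challenger hands $\Gamma$ the real keyed primitive, $E$ retains its advantage $\varepsilon$; when $\Gamma$ is given a truly random permutation/function, $E$ has none; hence $\tau\ge\varepsilon/2$. There is no hybrid over sessions, no explicit $q$ factor, and no concrete aggregated bound.

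What your approach buys is a tighter, composable statement with an explicit loss factor and, crucially, an honest confrontation with the ciphertext-integrity issue: you correctly note that IND-CCA alone does not prevent $E$ from mauling $C_1$, and you exploit the redundancy check on $pID_i$ inside $C_1$ to bound forgeries through the decryption oracle. The paper's modular reductions are shorter to state and keep the three security goals cleanly separated, but they gloss over exactly this point --- the mutual-authentication lemma simply asserts that if $E$ produces an accepting $\{pID_i,C_1\}$ or $\{C_2\}$ then $\Gamma$ can distinguish the PRP, without isolating the unforgeability step you flag. Your ``main obstacle'' paragraph is therefore not a gap in your proof so much as a place where you are more rigorous than the source; if anything, a fully watertight version of either argument would want an authenticated-encryption (INT-CTXT) assumption rather than bare IND-CCA, which neither you nor the paper explicitly invokes.
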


\ignore{
\begin{lemma}
The proposed protocol in the {\bf Initial Handover} phase is a secure mutual authentication protocol for the AAA server and the operator under the assumption that the underlying pseudorandom permutation $\Omega$ is with PRP security.
\end{lemma}

\begin{proof}

{\bf Part I:} Assume that the guessing advantage $\varepsilon$ of a polynomial time adversary $E$ is non-negligible while playing the game in {\bf Definition II.8} with the simulator $\Gamma$. The simulator $\Gamma$ will simulate the AAA server and plays the game with the adversary $E$ by querying the oracles $\Omega_{K_{A-O}}$ and $\Omega^{-1}_{K_{A-O}}$. Note that the pseudorandom permutation oracles $\Omega_{sk},\Omega^{-1}_{sk}$ have the long-term keys $K_{A-O}$ and $K_{M_i-O}$ for the encryption/decryption operations.

The simulator $\Gamma$ can simulate the AAA server or the operator while running the protocol $\Pi$ with adversary $E$. $\Gamma$ responds the corresponding messages to $E$ by answering the following queries.
\begin{itemize}
\item[-] $Execute(\Pi^s_{A, B}, \Pi^t_{B, A})$: In order to perform the protocol $\Pi$ correctly, $\Pi^s_{A, B}$ and $\Pi^t_{B, A}$ can query two oracles, $\Omega_{sk}$ and $\Omega^{-1}_{sk}$. $\Pi^s_{A, B}$ queries $\Omega_{sk}$ to encrypt messages and send the encrypted messages to $\Pi^t_{B, A}$ correctly. $\Pi^t_{B, A}$ queries $\Omega^{-1}_{sk}$ to decrypt the received ciphertext and send the correct message to $\Pi^s_{A, B}$ by querying $\Omega_{sk}$.
\item[-] $Send(\Pi^s_{A, B}, m)$: If $m = \lambda$, $\Pi^s_{A, B}$ sends the first message of the protocol to $E$ by querying $\Omega_{sk}$. If $m \neq \lambda$, $\Pi^s_{A, B}$ can respond the correct message to $E$ by querying $\Omega^{-1}_{sk}$.
\item[-] $Send(\Pi^t_{B, A}, m)$: $\Pi^t_{B, A}$ responds the correct message to $E$ by querying $\Omega^{-1}_{sk}$ and $\Omega_{sk}$.
\end{itemize}

{\bf Challenge:} $\Gamma$ chooses two timestamps $T_A$ and $T_M$, a user identity $ID_i$, a pseudorandom user identity $rID_i$, a one-time key $d$ and computes two ciphertexts $C_1 = \Omega_{K_{M_i-O}}(ID_i, d, h(rID_i, T_M))$ and $C_2 = \Omega_{K_{A-O}}(h(rID_i, T_A))$ by querying $\Omega_{sk}$ where $sk$ can be $K_{M_i-O}$ or $K_{A-O}$. Then, $\Gamma$ sends $C_1, C_2, rID_i, T_M$, and $T_A$ to $E$. 

After receiving a ciphertext $C_3$ and a timestamp $T_O$ from $E$, $\Gamma$ sends $C_3$ to $\psi$. $\psi$ chooses $\theta \in \{0, 1\}$ at random and outputs $\Omega^{-1}_{K_{A-O}}(C_3)$ when $\theta = 0$ or $\omega^{-1}(C_3)$ when $\theta = 1$ to $\Gamma$. 

{\bf Guess:} $\Gamma$ parses the plaintext received from $\psi$ as $C_r, rID_i', Sig_O(rID_i)', T_{ex}', d', h'(T_0)$ where $C_r$ is a ciphertext. If $h'(T_O) = h(T_O)$, $\Gamma$ will output $\theta' = 0$; otherwise $\Gamma$ will output $\theta' \in \{0, 1\}$ randomly. 

{\bf Part II:} Assume that the guessing advantage $\varepsilon$ of a polynomial time adversary $E$ is non-negligible while playing the game in {\bf Definition II.8} with the simulator $\Gamma$. The simulator $\Gamma$ will simulate the operator and plays the game with the adversary $E$ by querying the oracles $\Omega_{K_{A-O}}$ and $\Omega^{-1}_{K_{A-O}}$. Note that the pseudorandom permutation oracles $\Omega_{K_{A-O}}$ and $\Omega^{-1}_{K_{A-O}}$ have the long-term key $K_{A-O}$ for the encryption/decryption operations.

The simulator $\Gamma$ can simulate the AAA server or the operator while running the protocol $\Pi$ with adversary $E$. $\Gamma$ responds the corresponding messages to $E$ via $Execute(\Pi^s_{A, B}, \Pi^t_{B, A})$, $Send(\Pi^s_{A, B}, m)$ and $Send(\Pi^t_{B, A}, m)$ mentioned in Part I.

{\bf Challenge:} After obtaining the ciphertext $C_1, C_2, rID_i, T_M$, and $T_A$ from adversary $E$, $\Gamma$ inputs $C_2$ to $\psi$. Then, $\psi$ chooses $\theta \in \{0, 1\}$ at random and outputs $\Omega^{-1}_{K_{A-O}}(C_2)$ when $\theta = 0$ or $\omega^{-1}(C_2)$ when $\theta = 1$ to $\Gamma$. 

{\bf Guess:} $\Gamma$ parses the plaintext received from $\psi$ as $h'(rID_i, T_A)$. If $h'(rID_i, T_A) = h(rID_i, T_A)$, $\Gamma$ will output $\theta' = 0$; otherwise $\Gamma$ will output $\theta' \in \{0, 1\}$ randomly. 

According to the proofs above, we can base on the guessing advantage $\varepsilon$ of the adversary, which is non-negligible, to break the hard problem of PRP. The probability $\tau$ of breaking the hard problem of PRP can be analyzed as follows.
\[ \tau = Pr[\theta' = \theta] -\dfrac{1}{2} = \varepsilon + \dfrac{1}{2}(1 - \varepsilon) - \dfrac{1}{2} = \dfrac{\varepsilon}{2}\]
Thus, the event $No - Matching_E(k)$ between the AAA server and the operator is negligible.

\end{proof}
}

\begin{lemma}[Mutual Authentication in {\bf Initial Handover}]\label{Lemma:MemberJoinMutualAuth}
The proposed protocol {$\Pi$} in the {\bf Initial Handover} phase is a secure mutual authentication protocol for {UE} $i$ and the system ({including HeNB, eNB, MME, and HSS/AuC}) under the assumption that the underlying pseudorandom permutation $\Omega$ is with PRP security.
\end{lemma}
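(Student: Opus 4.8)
The plan is to establish the two conditions of Definition~II.3 (secure mutual authentication) for the Initial Handover protocol $\Pi$, viewing the HeNB--eNB--MME--HSS/AuC chain as the single ``system'' partner $B$ of UE~$i=A$. Condition~(1), that matching conversations imply mutual acceptance, I would check by direct inspection of the five steps: if $\Pi^s_{A,B}$ and $\Pi^t_{B,A}$ have a matching conversation, then $\{pID_i,C_1\}$ reaches the HSS/AuC intact, which recovers $K_i$ via $rID_i=D_{K_H}(pID_i)$, decrypts $C_1$ to $d$, finds the embedded $pID_i$ equal to the received one and therefore accepts and emits a well-formed $C_2$; the relaying by eNB/MME and the deterministic key-chain $CK\to K_{M,i}\to K_{eN,i}\to K_{He,i}$ through the PRF $F$ carry $C_2$ back to $A$, which decrypts with $K_i$, recovers the same $d$ it chose, and accepts. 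This direction is routine.

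For condition~(2) I would show that $No\text{-}Matching_E(k)$ is negligible by reduction to the PRP security of $\Omega$, on which the symmetric scheme $E_{(\cdot)}(\cdot)$ is built. Suppose a polynomial-time adversary $E$ forces $No\text{-}Matching_E(k)$ with non-negligible probability $\varepsilon$; I would build a distinguisher $\Gamma$ for the PRP game (Definition~II.8), splitting on which party is fooled, just as in the companion lemma. In Part~I, $E$ makes a system oracle accept without a UE partner: $\Gamma$ plants its challenge pair $(\Omega_{K_i},\Omega^{-1}_{K_i})$ as encryption/decryption under the long-term key $K_i$, answers $E$'s \texttt{Execute} and \texttt{Send} queries by running honest UE and HSS/AuC code through those oracles, and forwards the ciphertext that $E$ presents as $C_1$ to its challenger, outputting ``real PRP'' iff the recovered plaintext carries the correct $pID_i$. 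Under the real permutation this occurs with probability $\varepsilon$; under a random permutation, since $E$ never issued the matching encryption query (a repeated ciphertext would supply a partner, so the restriction $\pi^{\ast}\neq\pi$ is met), the plaintext is uniform among unqueried values and the $pID_i$-field matches with probability at most $2^{-n}+\mathrm{negl}(k)$, giving $\mathrm{Adv}^{\mathrm{PRP}}_{\Omega}(\Gamma)\ge\varepsilon/2-\mathrm{negl}(k)$ after the usual $\tau=\varepsilon/2$ computation. Part~II is dual: $E$ impersonates the system to $A$, $\Gamma$ forwards the forged $C_2$ to its decryption challenge and distinguishes on whether the embedded one-time key equals the $d$ that the simulated $A$ just sampled. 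Either branch forces $\varepsilon$ to be negligible, which is condition~(2).

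The hard part is pinning down what ``an oracle accepts'' means on the system side, because $B$ is a composite of four entities and the only secret an outsider cannot derive is $K_i$: every key flowing toward eNB/HeNB is a forward $F$-image of $CK$ that is \emph{transported}, not independently authenticated. I would argue that a No-Matching event must force either the HSS/AuC to accept a $C_1$ for which $\Gamma$ issued no encryption query, or $A$ to accept a $C_2$ containing its freshly sampled $d$, and that beating guessing in either case amounts to predicting an output of $\Omega_{K_i}$, which is precisely the PRP challenge. Tampering by a corrupted eNB or MME only re-routes which ciphertext reaches which endpoint (those nodes perform only the public PRF $F$), so it is absorbed into the \texttt{Send}-query simulation; the residual bookkeeping---freshness of $d$, and injectivity of $\Omega$ ensuring $\pi^{\ast}\neq\pi$---is routine.
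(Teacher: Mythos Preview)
Your proposal takes essentially the same reduction as the paper: build a PRP-distinguisher $\Gamma$ that simulates $\Pi$ through the permutation oracle and outputs according to whether $E$ manages a non-matching acceptance, arriving at the same $\tau\ge\varepsilon/2$ bound. The only notable difference is packaging: you invoke Case~II of Definition~II.8 (forward $E$'s forged ciphertext as a \emph{decryption} challenge and test the recovered plaintext for the expected $pID_i$ or $d$), whereas the paper invokes Case~I (submit the honest plaintext as an \emph{encryption} challenge and test whether $E$ can still forge); you also explicitly verify condition~(1) of Definition~II.3 and discuss the composite nature of the system side, both of which the paper leaves implicit. These are refinements of presentation rather than a different route.
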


\begin{proof}
Assume that $\varepsilon$ {is the probability of breaking the mutual authentication security of $\Pi$ by a probabilistic} polynomial time adversary $E$ {with the simulator $\Gamma$, who simulates $\Pi$ and plays}  the game in {\bf Definition II.6} {to break the PRP security}. $\Gamma$ {interacts with $E$ by simulating either UE $i$ or the system with the given $(\Omega,\Omega^{-1})$} by a challenger $\psi$ . {It is either a pair of a pseudorandom permutation and its inverse, i.e., $(E_{K_i},D_{K_i})$, or a truly random permutation and its inverse, i.e., $(\omega,\omega^{-1})$. $\Gamma$ simulates $\Pi$ by the following oracle functions to capture the capability of $E$, where $A$ denotes the UE and $B$ denotes the system in $\Pi$.}
\begin{itemize}
\item[-] $Execute({\Pi^s_{A, B}, \Pi^t_{B, A}})$: {This oracle models the execution of $\Pi$ and outputs the transcripts of the execution.} In order to perform the protocol $\Pi$ correctly, {$\Gamma$ simulates $\Pi^s_{A, B}$ and $\Pi^t_{B, A}$ by querying $\Omega$ and $\Omega^{-1}$ for the encryption of given messages and the decryption of given ciphertexts. Hence, $\Gamma$ can simulate $\Pi$ as described in Fig.\ref{fig:fig_MJ1} successfully.}
\item[-] $Send(\Pi^s_{A, B}, m)$: {This oracle models the capability of active attackers, who sends the message $m$ to $\Pi^s_{A,B}$. If $m = \{C_2\}$, $\Pi^s_{A, B}$ will decrypt it by $\Omega^{-1}$ to check if the format $\{rID^{*}_i,D_{ij},T_{ex}\}$ is correct or not. If so, the oracle outputs {\bf accept} to accept this session, it outputs {\bf reject} to reject this session.}
\item[-] $Send(\Pi^t_{B, A}, m)$: {If $m=\{pID_i,C_1\}$, $\Pi^{t}_{B,A}$ will decrypt $C_1$ by $\Omega^{-1}$ to extract $pID_i$ and $d$. If the format of the extracted $\{rID_,d\}$ is correct, $\Pi^{t}_{B,A}$ computes $C_2=\Omega(TID_{ij},D_{ij},T_{ex},d,pID^{*}_{i})$ by the given $\Omega$ and output $C_2$.}
\end{itemize}

{\bf Challenge:} {Before the challenge phase, $\Gamma$ can query $\Omega$ and $\Omega^{-1}$ with the polynomial number of messages as inputs and receive the corresponding outputs for the training besides the training in the above protocol simulation.} Then, {$\Gamma$ sends a chosen message $\rho=\{pID_i,d\}$ or $\rho=\{TID_{ij},D_{ij},T_{ex},d,pID^{*}_i\}$, which are the messages in the simulation of executing $\Pi$ with $E$, to $\psi$ and} $\psi$ randomly chooses a bit $\theta \in \{0, 1\}$. If $\theta = 0$, then $\psi$ encrypts {$\rho$} by {$\Omega$}; otherwise, $\psi$ encrypts $\rho$ by $\omega$. $\psi$ outputs $\pi$ to $\Gamma$.

{\bf Guess:} {If $E$ can send out $\{pID_i,C_1\}$ or $\{C_2\}$ correctly by acting as a legal UE or system, }, $\Gamma$ will output $\theta' = \theta$; otherwise $\Gamma$ will output $\theta' \in \{0, 1\}$ randomly. 

\ignore{
{\bf Part II:} Assume that the guessing advantage $\varepsilon$ of a polynomial time adversary $E$ is non-negligible while playing the game in {\bf Definition II.8} with the simulator $\Gamma$. The simulator $\Gamma$ will simulate the system (AAA server and operator) and plays the game with the adversary $E$ by querying the oracles, pseudorandom permutation $\Omega$ and random permutation $\omega$. Note that the pseudorandom permutation oracles $\Omega_{K_{M_i-O}}$ and $\Omega^{-1}_{K_{M_i-O}}$ have the long-term key $K_{M_i-O}$ and use it as the encryption/decryption key.

The simulator $\Gamma$ can simulate MN $i$ or the system while running the protocol $\Pi$ with adversary $E$. $\Gamma$ responds the corresponding messages to $E$ by answering the following queries.
\begin{itemize}
\item[-] $Execute(\Pi^s_{A, B}, \Pi^t_{B, A})$: In order to perform the protocol $\Pi$ correctly, $\Pi^s_{A, B}$ and $\Pi^t_{B, A}$ can query two oracles, $\Omega_{K_{M_i-O}}$ and $\Omega^{-1}_{K_{M_i-O}}$. $\Pi^s_{A, B}$ queries $\Omega_{K_{M_i-O}}$ to encrypt a message. $\Pi^t_{B, A}$ queries $D_{K_{M_i-O}}$ to decrypt a message.
\item[-] $Send(\Pi^s_{A, B}, m)$: If $m = \lambda$, $\Pi^s_{A, B}$ sends the first message of the protocol $\Pi$ to $E$ by querying $E_{K_{M_i-O}}$.
\item[-] $Send(\Pi^t_{B, A}, m)$: $\Pi^t_{B, A}$ responds the correct message to $E$ by querying $D_{K_{M_i-O}}$.
\end{itemize}

{\bf Challenge:} After $\Gamma$ receives a pseudorandom user identity $rID_i$, $T_M$, and a ciphertext, denotes as $C_1$, from $E$, $\Gamma$ inputs it to $\psi$. Then, $\psi$ chooses $\theta \in \{0, 1\}$ at random and outputs $\Omega^{-1}_{K_{M_i-O}}(C_1)$ when $\theta = 0$ or $\omega^{-1}(C_1)$ when $\theta = 1$ to $\Gamma$. 

{\bf Guess:} $\Gamma$ parses the plaintext received from $\psi$ as $ID_i', d', h'(rID_i, T_M)$. If $h'(rID_i, T_M) = h(rID_i, T_M)$, $\Gamma$ will output $\theta' = 0$; otherwise $\Gamma$ will output $\theta' \in \{0, 1\}$ randomly. 
}
{When $\theta=0$, the above experiment is a real experiment and $E$ has additional advantage to break the mutual authentication of $\Pi$. When $\theta=1$, the above experiment is a random experiment and $E$ has no advantage to break the mutual authentication of $\Pi$.} {Thus, $\Gamma$ can only exploit the  advantage $\varepsilon$ of breaking the mutual authentication of $\Pi$ by $E$} to break {the security of} PRP, i.e., $E_{K_i}$ and $D_{K_i}$. $\tau$ {is the probability} of breaking $(E_{K_i},D_{K_i})$. {Hence, we have the following.}
\begin{equation}
 \tau {\geq} Pr[\theta' = \theta] -\dfrac{1}{2} = \varepsilon + \dfrac{1}{2}(1 - \varepsilon) - \dfrac{1}{2} = \dfrac{\varepsilon}{2}
\end{equation}
{Since $\tau$ is negligible based on the assumption of PRP security, $\varepsilon$ is also negligible. Hence, the probability of breaking the mutual authentication of $\Pi$ in {\bf member join} phase is negligible.}
\end{proof}


\begin{lemma}[Key Exchange in {\bf Initial Handover}]\label{Lemma:MemberJoinKeyExchange}
The proposed protocol in the {\bf Initial Handover} phase is a secure key exchange protocol {$\Pi$} if the adopted underlying pseudorandom function $\Lambda$ is with PRF security.
\end{lemma}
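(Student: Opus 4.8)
The plan is to verify the three conditions that a secure mutual authentication and key exchange protocol must satisfy (Definition~II.5). Condition~1, that $\Pi$ is a secure mutual authentication protocol, is precisely Lemma~\ref{Lemma:MemberJoinMutualAuth}. Condition~2, that the partnered oracles $\Pi^s_{A,B}$ (UE~$i$) and $\Pi^t_{B,A}$ (the system) finish a successful run holding the same session keys, follows by inspection of the key schedule: both sides derive $CK=H(d,pID^{*}_i)$ from the common one-time key $d$ and the common label $pID^{*}_i$, and then $K_{M,i}=F(CK)$, $K_{eN,i}=F(K_{M,i})$ and $K_{He,i}=F(K_{eN,i})$ are computed deterministically on both sides, so every peered pair holds matching keys. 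The real content is Condition~3, indistinguishability of the session key from a random string, which I would reduce to the PRF security of $\Lambda=F$.

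For that reduction I would assume a probabilistic polynomial-time adversary $E$ that wins the $Distinguish_{sk_E}(k)$ game with non-negligible advantage $\varepsilon$, and build a simulator $\Gamma$ that plays the PRF game against a challenger $\psi$ whose oracle is either the keyed pseudorandom function $\Lambda_k$ or a truly random function $\lambda$. $\Gamma$ runs the \textbf{Initial Handover} protocol of Fig.~\ref{fig:fig_MJ1} for $E$, answering the $Execute$ and $Send$ queries exactly as in the proof of Lemma~\ref{Lemma:MemberJoinMutualAuth}; in particular $d$ is sampled uniformly and occurs only inside the ciphertexts $C_1$ and $C_2$, so by the PRP security of the encryption $\Omega$ already assumed there, $E$ learns nothing about $d$, and hence nothing about $CK$. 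At the challenge step $\Gamma$ feeds $CK$ to its oracle to obtain $K_{M,i}$, then $K_{M,i}$ to obtain $K_{eN,i}$, then $K_{eN,i}$ to obtain $K_{He,i}$, and returns to $E$ either this derived key or a fresh uniform string according to the game's hidden bit; $\Gamma$ finally outputs ``real'' iff $E$'s bit is correct and guesses uniformly otherwise. When $\psi$'s oracle is the real PRF, $E$ sees the honest distribution and is right with probability $1/2+\varepsilon$; when it is $\lambda$, every value $E$ sees is uniform and independent, so $E$ is right with probability $1/2$. Writing $\tau$ for $\Gamma$'s advantage,
\begin{equation}
\tau = Pr[\theta' = \theta] - \frac{1}{2} = \varepsilon + \frac{1}{2}(1-\varepsilon) - \frac{1}{2} = \frac{\varepsilon}{2},
\end{equation}
so negligibility of $\tau$ under PRF security forces $\varepsilon$ to be negligible, which establishes Condition~3 and hence the lemma.

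The step I expect to be the main obstacle is the three-stage key cascade $CK\to K_{M,i}\to K_{eN,i}\to K_{He,i}$: a single call to the PRF challenger randomizes only one output, so the clean way to run the reduction is a short hybrid over four games in which the $\ell$-th transition replaces the $\ell$-th derived value by a uniform string, each transition being justified by one PRF instance keyed with the (by then already uniform) predecessor, and the four per-step distinguishing gaps sum to the displayed bound up to a constant. A secondary point to pin down is that $CK$ is genuinely unpredictable to $E$: this rests on modelling $H$ as a random oracle (the same model under which Nyberg's accumulator is analysed in Section~\ref{sec:accumulator}), so that $CK=H(d,pID^{*}_i)$ is uniform given the fresh secret $d$, together with the security of the symmetric encryption invoked in Lemma~\ref{Lemma:MemberJoinMutualAuth}, which is what keeps $d$ out of $E$'s view. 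Once these are in place the remaining accounting is routine, and combined with Lemma~\ref{Lemma:MemberJoinMutualAuth} this also completes the proof of Theorem~\ref{theorem:MemberJoin}.
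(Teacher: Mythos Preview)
Your proposal is correct, but it takes a different route from the paper's own proof, and the difference is worth noting. The paper identifies the PRF $\Lambda$ not with $F$ but with $H(d,\cdot)$, viewed as a keyed function with the fresh one-time secret $d$ as the key. In that framing the tested session key is $CK=H(d,pID^{*}_i)$ itself, and the reduction is a single PRF call: the simulator hands $pID^{*}_i$ to the challenger, gets back either $H(d,pID^{*}_i)$ or a random string, and uses that as $CK$ in the $Test$ query. No hybrid is needed, and no separate random-oracle appeal is made to argue that $CK$ is unpredictable, because the PRF assumption on $H(d,\cdot)$ already delivers indistinguishability of $CK$ from random once $d$ is secret. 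Your approach instead pushes the PRF assumption onto $F$, which forces you to (i) argue independently that $CK$ is uniform (hence your random-oracle step for $H$ combined with the secrecy of $d$ from Lemma~\ref{Lemma:MemberJoinMutualAuth}), and (ii) run a four-step hybrid down the cascade $CK\to K_{M,i}\to K_{eN,i}\to K_{He,i}$. Both routes are sound; the paper's is shorter and uses a single clean cryptographic assumption, while yours is more explicit about the full key hierarchy and makes clear that every derived key at every level is pseudorandom, at the cost of an extra modelling assumption on $H$ and a constant-factor loss in the bound from the hybrid.
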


\begin{proof}

Assume that $\varepsilon$ {is the probability of breaking the key exchange security of $\Pi$ by}  $E$ {with the simulator $\Gamma$, who simulates $\Pi$ and plays}  the game in {\bf Definition II.6} {to break the PRF security}. $\Gamma$ {interacts with $E$ by simulating either UE $i$ or the system with the given $\Lambda$}. {It is either a pseudorandom function, i.e., $H(d,.)$, or a truly random permutation, i.e., $\lambda$, according to a random bit $\theta\in\{0,1\}$. $\Gamma$ simulates $\Pi$ by the following oracle functions to capture the capability of $E$, where $A$ denotes the UE and $B$ denotes the system in $\Pi$.}

\begin{itemize}
\item[-] $Execute(\Pi^s_{A, B}, \Pi^t_{B, A})$: In order to perform the protocol $\Pi$ correctly, $\Pi^s_{A, B}$ and $\Pi^t_{B, A}$ can query {$\Lambda$ for the keying hash of given messages.}
\item[-] $Send(\Pi^s_{A, B}, m)$: {This oracle can decrypt $C_1$ and output the corresponding $C_2$ successfully since $K_i$ is selected by $\Gamma$.}
\item[-] $Send(\Pi^t_{B, A}, m)$: {This oracle can generate $C_1$ and decrypt $C_2$ successfully since $K_i$ is selected by $\Gamma$.}
\item[-] $Reveal(\Pi^s_{A, B})$: {This oracle outputs the session key held by} $\Pi^s_{A, B}$.
\item[-] $Reveal(\Pi^t_{B, A})$: {This oracle outputs the session key held by} $\Pi^t_{B, A}$.
\item[-] $Test(\Pi^s_{A, B})$: When $E$ makes a $Test$ query to $\Pi^s_{A, B}$, $\Pi^s_{A, B}$ responds a real {session} key $CK$, {which is produced by querying $\Lambda$ with $pID^{*}_i$} or a random string $\alpha$. Note that this query is valid only when the real session key {is not} revealed.
\item[-] $Test(\Pi^t_{B, A})$: When $E$ makes a $Test$ query to $\Pi^t_{B, A}$, $\Pi^t_{B, A}$ responds a real {session} key $CK$, {which is produced by querying $\Lambda$ with $rID_i^{*}$} or a random string as the corresponding answer $K$. Note that this query is {valid only} when the real session key {is not} revealed.
\end{itemize}

{\bf Challenge:} {Before the challenge phase, $Gamma$ can query $\Lambda$ with polynomial number of messages as inputs and receive the corresponding outputs for the training beside the training in the above protocol simulation. Then, $\Gamma$ sends $pID^*_i$ to $\psi$ and}  $\psi$ {randomly chooses a} bit $\theta \in \{0, 1\}$. If $\theta = 0$, $\psi$ computes $pID^*_{i}$ by $H(d,.)$; otherwise, $\psi$ {computes $pID^*_{i}$ by $\lambda$}. 

{\bf Guess:} If neither $\Pi^s_{A, B}$ nor $\Pi^t_{B, A}$ receives the $Test$ query, $\Gamma$ will output $\hat{\theta} \in \{0, 1\}$ at random. Otherwise, if $E$ {queries} $Test$ to $\Pi^s_{A, B}$ or $\Pi^t_{B, A}$, {$\Gamma$} will select a random bit $\theta\in\{0,1\}$ and respond $CK$ to $E$ if $\hat{\theta}=0$. Otherwise, $\Gamma$ responds $\alpha$ to $E$ if $\hat{\theta}=1$. Then, $E$ outputs $\hat{\theta'}=0$ if it guesses the received output of $Test$ query is the real session key. Otherwise, $E$ outputs $\hat{\theta'}=1$. {When $\theta=0$, the above experiment is a real experiment and $E$ has additional advantage $\epsilon$ to break the key exchange security of $\Pi$. When $\theta=1$, the experiment is a random experiment and $E$ has no advantage to break the key exchange security of $\Pi$. Hence, if $\hat{\theta}=\hat{\theta'}$, $\Gamma$ will output $\theta=0$. Otherwise, $\Gamma$ outputs $\theta=0$ or $1$ randomly.}

{$\tau$ is the probability} of breaking {the PRF security of $H(d,.)$. From the above, we have that} 
\begin{equation} 
\tau \geq Pr[\theta' = \theta] -\dfrac{1}{2} = \varepsilon + \dfrac{1}{2}(1 - \varepsilon) - \dfrac{1}{2} = \dfrac{\varepsilon}{2}.
\end{equation}
Thus, the {probability, $\varepsilon$,} of event $Distinguish_{CK}(k)$ of the constructed {session} key $CK$ by an adversary $E$ is negligible {since $\tau$ is also negligible based on the PRF security assumption.}

\end{proof}

\begin{lemma}[User Anonymity in {\bf Initial Handover}]\label{Lemma:MemberJoinUserAnonymity}
	The proposed protocol $\Pi$ in the {\bf Initial Handover} phase is with user anonymity if the adopted underlying pseudorandom permutation $\Omega$ is with PRP security.
\end{lemma}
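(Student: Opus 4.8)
The plan is to reduce any adversary against the user anonymity of $\Pi$ to a distinguisher for the pseudorandom permutation $\Omega$, reusing the simulation-and-challenge template of Lemma~\ref{Lemma:MemberJoinMutualAuth} and Lemma~\ref{Lemma:MemberJoinKeyExchange}. First I would make the anonymity experiment precise: a probabilistic polynomial-time adversary $E$ may trigger polynomially many {\bf Initial Handover} executions for users of its choice and read the resulting transcripts $\{pID_i, C_1, C_2\}$; at some stage $E$ names two uncorrupted users $U_0, U_1$, receives one more {\bf Initial Handover} transcript generated on behalf of $U_b$ for a secret bit $b$, and must output $b'$. The protocol has user anonymity if $|\Pr[b'=b]-\tfrac{1}{2}|$ is negligible. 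The structural facts driving the proof are: the only identity-dependent quantity visible to an outsider is $pID_i=E_{K_H}(rID_i)$; the HSS/AuC refreshes $rID_i$ (hence $pID_i$) in every session; the refreshed value $pID^{*}_i$ travels only inside $C_2=E_{K_i}(TID_{ij},D_{ij},T_{ex},d,pID^{*}_i)$; and $C_1=E_{K_i}(pID_i,d)$ reveals nothing past the already-public $pID_i$ because $d$ is freshly chosen.

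Next I would construct the simulator $\Gamma$, which is handed $(\Omega,\Omega^{-1})$ by the challenger $\psi$ of the PRP game in {\bf Definition II.8} --- either the real encryption/decryption pair or a truly random permutation pair $(\omega,\omega^{-1})$. Because $\Pi$ mixes several independent keys ($K_H$ and one $K_i$ per involved user), I would stage the argument through a short hybrid sequence: hybrid $0$ is the real game, then $E_{K_H}$ is swapped for a random permutation, then the $E_{K_i}$ of each of the at most $q$ involved users is swapped in turn, each swap being detectable only with the PRP advantage. In every hybrid $\Gamma$ services $E$'s session queries through the oracle interfaces $Execute(\Pi^s_{A,B},\Pi^t_{B,A})$, $Send(\Pi^s_{A,B},m)$ and $Send(\Pi^t_{B,A},m)$ exactly as in Lemma~\ref{Lemma:MemberJoinMutualAuth}, plugging in the appropriate (real or ideal) permutation for each encryption and decryption; for the challenge transcript $\Gamma$ forms $pID$, $C_1$ and $C_2$ from the challenge image returned by $\psi$ on the identity-dependent plaintext, and finally echoes $E$'s decision back to $\psi$ as its own PRP-versus-random guess.

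Then I would close the two ends of the chain. In the fully idealized hybrid, $pID_i$, $pID^{*}_i$, $C_1$ and $C_2$ of every session are images of a truly random permutation on inputs that are fresh per session, so --- except for the negligible chance of a permutation collision among the polynomially many queries --- the challenge transcript is statistically independent of which of $U_0, U_1$ produced it, and $E$'s advantage there is negligible. Telescoping the hybrid gaps bounds $E$'s advantage in the real game by at most $(q+1)$ times the PRP distinguishing advantage plus the collision term; expressed in the form used in the preceding lemmas this gives $\tau \ge \varepsilon/2$, so $\varepsilon$ is negligible whenever $\Omega$ is PRP-secure.

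I expect the main obstacle to be the bookkeeping of this multi-key hybrid together with the delicate part of the unlinkability argument: showing that refreshing $pID_i$ inside the $K_i$-encrypted $C_2$ genuinely severs the link between consecutive sessions of the same user, i.e., ruling out any correlation between a session's published $pID_i$ and the next session's published $pID^{*}_i$. In the idealized world this collapses to the observation that $C_2$ is the image of a random permutation and therefore hides $pID^{*}_i$ perfectly until it is next used, but reaching that observation cleanly requires the hybrid over the $K_i$'s to already be in place.
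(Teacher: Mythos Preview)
Your proof is correct but takes a genuinely different route from the paper's.

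The paper does not use the two-user indistinguishability game you set up. Instead it augments the oracle interface of Lemma~\ref{Lemma:MemberJoinKeyExchange} with $RevealID$ and a $TestID$ oracle: $TestID$ hands the adversary either the real refreshed identity $pID^{*}_i$ (equivalently $rID^{*}_i$) of a session or a random string, and anonymity means the adversary cannot tell which. The reduction is then a \emph{single} PRP reduction: $\Gamma$ embeds the PRP challenge in the computation of $C_2=\Omega(TID_{ij},D_{ij},T_{ex},d,pID^{*}_i)$, so when $\Omega$ is the real $E_{K_i}$ the adversary has its anonymity advantage, and when $\Omega$ is a truly random permutation $C_2$ perfectly hides $pID^{*}_i$ and the $TestID$ answer is information-theoretically masked. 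This yields the same $\tau\ge\varepsilon/2$ bound with no hybrid.

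Your approach, by contrast, formalises anonymity as indistinguishability of full transcripts between two named users and runs a multi-key hybrid over $E_{K_H}$ and every involved $E_{K_i}$. What this buys you is a more standard anonymity notion and an explicit treatment of cross-session unlinkability (the point you flag as the main obstacle: that the cleartext $pID_i$ of one session is the $pID^{*}_i$ that was buried inside the previous session's $C_2$). The paper's $TestID$ formulation sidesteps that bookkeeping by reducing anonymity to secrecy of a single field inside a single ciphertext, which makes the argument shorter but leaves the multi-user, multi-session picture and the role of $K_H$ implicit. Both routes are sound; yours is tighter in modelling, the paper's is lighter in mechanics.
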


\begin{proof}
{Assume that $\varepsilon$ is the probability of breaking the user anonymity of $\Pi$ by $E$ with the simulator $\Gamma$, who simulates $\Pi$ and plays  the game in {\bf Definition II.6} to break the PRP security. $\Gamma$ interacts with $E$ by simulating either UE $i$ or the system with the given $\Lambda$. It is either a pseudorandom function, i.e., $H(d,.)$, or a truly random permutation, i.e., $\lambda$, according to a random bit $\theta\in\{0,1\}$. $\Gamma$ simulates $\Pi$ by $Send$, $Execute$, $Reveal$, and $Test$ to capture the capability of $E$, which are the same as that in {\bf Lemma~\ref{Lemma:MemberJoinKeyExchange}}. Besides that, $\Gamma$ additionally simulate the following oracles.
\begin{itemize}
	\item[-] $RevealID(\Pi^s_{A, B})$: This oracle reveals the new anonymous identity $rID^*_i$ held by $\Pi^{s}_{A,B}$.
	\item[-] $RevealID(\Pi^{t}_{B,A})$: This oracle reveals $rID^*_i$ held by $\Pi^{t}_{B,A}$.
	\item[-] $TestID(\Pi^s_{A, B})$: This oracle responds a real $rID^{*}_{i}$ or a random strong depending on the random bit selected by $\Gamma$.
\end{itemize}
	{\bf Challenge:} $\Gamma$ computes $C_2$ with $\{TID_{ij},D_{ij},T_{ex},d,pID^{*}_{i}\}$ by $\Omega$. $\Omega$ given by $\psi$ is $H_{k}$ for $\theta=0$ or a random permutation $\omega$ for $\theta=1$.
	{\bf Guess:} If $E$ output can guess out the given string is a real $pID^*_i$ or a random string after querying $TestID$, then $\Gamma$ will output $\theta=0$. Otherwise, $\Gamma$ outputs $\theta\in\{0,1\}$ randomly. $E$ has additional advantage to break user anonymity when $\Omega$ is a pseudorandom permutation~($E_k$), i.e., $\theta=0$. When $\Omega$ is a random permutation, $E$ has no advantage to break user anonymity. Hence, we have that
	\begin{equation}
	 \tau \geq Pr[\theta' = \theta] -\dfrac{1}{2} = \varepsilon + \dfrac{1}{2}(1 - \varepsilon) - \dfrac{1}{2} = \dfrac{\varepsilon}{2}.
	\end{equation}
	$\tau$ is negligible based on PRP security assumption. Hence, the probability of breaking user anonymity $\varepsilon$ is also negligible.
}
\end{proof}

According to {\bf Lemma~\ref{Lemma:MemberJoinMutualAuth}} {\bf Lemma~\ref{Lemma:MemberJoinKeyExchange}}, and {\bf Lemma~\ref{Lemma:MemberJoinUserAnonymity}} , {\bf Theorem~\ref{theorem:MemberJoin}} holds.\\

\subsection{{Security Analysis of Region-based Fast Handover}}
\begin{theorem}\label{theorem:FastHandoverAuth}
{The proposed mutual authentication and key exchange protocol $\Pi'$ with user anonymity for the } {\bf Region-based Fast Handover} phase of the proposed scheme {is secure based on the PRF security}. 
\end{theorem}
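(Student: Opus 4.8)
The plan is to follow the decomposition already used for Theorem~\ref{theorem:MemberJoin}: I would state and prove three lemmas for the \textbf{Region-based Fast Handover} protocol $\Pi'$ --- mutual authentication, key exchange, and user anonymity --- and then obtain Theorem~\ref{theorem:FastHandoverAuth} by combining them, exactly as Theorem~\ref{theorem:MemberJoin} was obtained from Lemmas~\ref{Lemma:MemberJoinMutualAuth}--\ref{Lemma:MemberJoinUserAnonymity}. Each lemma is proved by a reduction in which a simulator $\Gamma$ runs the PRF distinguishing game (as in the proof of Lemma~\ref{Lemma:MemberJoinKeyExchange}) against a challenger $\psi$ that hands $\Gamma$ an oracle which is either the keyed hash $H$ under the relevant secret --- the shared region warrant key $D_{ij}$, or the derived session key $K_{He,i}$ --- or a truly random function $\lambda$, and $\Gamma$ uses an adversary $E$ against $\Pi'$ as the distinguisher; the symmetric encryption $E_{D_{ij}}(\cdot)$ carrying the refreshed identity is taken in a standard PRF-based mode, so the whole argument rests on PRF security as the statement claims. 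The guiding intuition is that the on-the-wire transcript of $\Pi'$ --- the blinded identity $TID_{ij}$, the nonces $r_u,r_h$, the timestamp $T_{ex}$, the response values $\delta=H(D_{ij},r_u,r_h,K_{He,i})$ and $\delta'=H(K_{He,i},r_h)$, and the ciphertext $C=E_{D_{ij}}(TID'_{ij},I')$ --- becomes simulatable without any secret once $H$ (keyed by the appropriate secret) is replaced by $\lambda$, so all three properties fall out of that single replacement.

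For the mutual-authentication lemma I would have $\Gamma$ simulate either UE~$i$ or the HeNB and answer $Execute$, $Send$ queries on ${\Pi'}^{s}_{A,B}$ and ${\Pi'}^{t}_{B,A}$ by relaying to the PRF oracle to form $\delta$, $K_{He,i}$, and $\delta'$, while the revocation list $R^{S_t}_j$ and its proof $\sigma_j=H(GK_j,R^{S_t}_j)$ are produced honestly. The core point is that $D_{ij}=H(GK_j,rID_i,T_{ex})$ stays hidden from $E$ because $GK_j$ is never exposed (cf.\ the ``Security of Shared Group Key'' discussion): $E$ can impersonate the HeNB to UE~$i$ only by outputting a valid $\delta$, and impersonate UE~$i$ to the HeNB only by outputting a valid $\delta'$, and either forgery turns $\Gamma$ into a winning PRF distinguisher. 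Hence $\Pr[No\text{-}Matching_E(k)]$ is bounded by a small multiple of the PRF advantage $\tau$, and with the usual rewriting $\tau \ge \varepsilon + \tfrac12(1-\varepsilon)-\tfrac12=\tfrac{\varepsilon}{2}$ the break probability $\varepsilon$ is negligible.

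For the key-exchange lemma I would reuse that simulation and add $Reveal$ and $Test$ oracles as in Lemma~\ref{Lemma:MemberJoinKeyExchange}; on a $Test$ query $\Gamma$ forwards its own PRF challenge on the input $(r_u,r_h)$, which is either the real $K_{He,i}=H(D_{ij},r_u,r_h)$ or a uniform string, so any advantage of $E$ in $Distinguish_{K_{He,i}}(k)$ is exactly a PRF advantage and the session key is indistinguishable from random. For the user-anonymity lemma I would add $RevealID$/$TestID$ oracles as in Lemma~\ref{Lemma:MemberJoinUserAnonymity}: the transmitted $TID_{ij}=rID_i\oplus bR^{I}_j$ is a one-time-pad hiding of $rID_i$ under the secret blind factor $bR^{I}_j$, and the next-session identity $TID'_{ij}$ travels only inside $C=E_{D_{ij}}(TID'_{ij},I')$, so replacing $E_{D_{ij}}$ by a random function makes $C$ independent of $rID_i$ and collapses any cross-session linkability advantage of $E$ into a PRF distinguishing advantage. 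Combining the three lemmas yields Theorem~\ref{theorem:FastHandoverAuth}.

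The main obstacle I anticipate is pinning down the revocation check inside the reductions. Against a pure outsider holding no warrant the unforgeability argument above already suffices (it cannot produce $\delta'$ without $D_{ij}$), but against a \emph{revoked insider} one must additionally argue that the Nyberg-accumulator test $R^{S_t}_j \stackrel{?}{=} H^{Nyb}(R^{S_t}_j,D'_{ij})$ cannot be bypassed by a forged warrant value, which invokes the collision/second-preimage resistance of Nyberg's one-way accumulator in the Random Oracle Model (Section~\ref{sec:accumulator}); keeping this ROM step cleanly separated from the PRF step, and verifying that the rotating blind-factor indices $I\neq I'$ used to refresh $TID_{ij}$ introduce no correlation an eavesdropper could exploit, are the delicate points. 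A secondary care is the ``nested'' use of $H$ --- the value $K_{He,i}=H(D_{ij},r_u,r_h)$ is both output as the session key and fed back into $\delta$ --- which is handled by treating the two as distinct PRF queries via a short hybrid. Beyond these, the argument is a routine adaptation of the Initial-Handover reductions, and since $\Pi'$ uses only hashing, XOR, and symmetric encryption, no hardness assumption beyond PRF security (plus the ROM for the accumulator) is needed.
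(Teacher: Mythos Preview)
Your decomposition into three lemmas and your reductions for mutual authentication and key exchange match the paper's approach almost exactly: the paper also reduces authentication to the PRF security of $H(D'_{ij},\cdot,\cdot,\cdot)$ and $H(K_{He,i},\cdot)$ (for forging $\delta$ and $\delta'$ respectively), and reduces session-key indistinguishability to the PRF security of $H(D_{ij},\cdot)$ via the same $Test$/$Reveal$ simulation you describe. Your ``nested'' worry and the revocation discussion are extras the paper simply does not address in its proofs.

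The genuine divergence is in the user-anonymity lemma. You frame $TID_{ij}=rID_i\oplus bR^{I}_j$ as a one-time-pad hiding and then reduce the remaining leakage (through $C$) to PRF security of $E_{D_{ij}}$. That argument has a real gap: the blind factors $bR^{I}_j$ are \emph{not} single-use --- there are only $k$ of them for the entire region, shared across all UEs and all sessions --- so the one-time-pad claim fails outright. An eavesdropper who collects two sessions with the same index $I$ learns $rID_{i_1}\oplus rID_{i_2}$ by XORing the corresponding $TID$'s, and your PRF replacement of $E_{D_{ij}}$ does nothing to block this since $TID_{ij}$ is sent in the clear.

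The paper does \emph{not} reduce this part to PRF security at all. Instead it gives a purely combinatorial argument (Lemma~\ref{Lemma:RegionFastUserAnonymity}): if there are $b$ UEs in the region each launching $a$ sessions on average, the adversary collects at most $a\cdot b$ linear equations $TID_{ij}=rID_i\oplus bR^{I}_j$ in the $k+b$ unknowns $\{bR^{I}_j\}_{I\in[1,k]}\cup\{rID_i\}_{i\in[1,b]}$; requiring $k>a\cdot b-b$ ensures the system is underdetermined and no $rID_i$ can be pinned down. So the anonymity guarantee in the paper is conditional on a parameter bound, not on any cryptographic assumption. You should replace your one-time-pad step by this counting argument (and state the hypothesis $k>a\cdot b-b$), or else explain why blind-factor reuse is harmless --- which it is not under your current outline.
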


\begin{lemma}\label{Lemma:FastHandoverAuth_MutualAuth}
	{The proposed protocol $\Pi'$ in the {\bf Region-based Fast Handover} phase is a secure mutual authentication protocol for UE $i$ and the HeNB under the assumption that the underlying pseudorandom function $\Lambda'_1$ and $\Lambda'_2$ are with PRF security.}
\end{lemma}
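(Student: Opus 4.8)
The plan is to mirror the reduction structure of Lemma~\ref{Lemma:MemberJoinMutualAuth} and Lemma~\ref{Lemma:MemberJoinKeyExchange}: assume a probabilistic polynomial-time adversary $E$ breaks the mutual authentication of $\Pi'$ with non-negligible probability $\varepsilon$, and build a simulator $\Gamma$ that uses $E$ to distinguish one of the keyed hash functions $\Lambda'_1,\Lambda'_2$ from a truly random function, contradicting PRF security. The crucial observation is that in the {\bf Region-based Fast Handover} phase the only secret an honest UE $i$ and an honest HeNB share is the region secret key $D_{ij}=H(t\_rID,GK_j,T_{ex})$, derived under the region group key $GK_j$; an active attacker never sees $D_{ij}$ nor $GK_j$, so the only way to make $\Pi'^{s}_{A,B}$ or $\Pi'^{t}_{B,A}$ accept without a matching conversation is to forge one of the authenticators $\delta$ (tied to $D_{ij}$ via $\Lambda'_1$) or $\delta'=\Lambda'_2(K_{H_{e,i}},r_h)$ on a fresh $(r_u,r_h)$ pair, where $K_{H_{e,i}}=\Lambda'_1(D_{ij},r_u,r_h)$.

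First I would have the challenger $\psi$ hand $\Gamma$ a function $\Lambda'$ that, according to a hidden bit $\theta$, is either the real keyed hash (i.e.\ $\Lambda'_1(D_{ij},\cdot)$ in the first hybrid, or $\Lambda'_2(K_{H_{e,i}},\cdot)$ in the second) or a uniformly random function $\lambda$. $\Gamma$ then simulates $\Pi'$ to $E$ exactly as in Fig.~\ref{fig:fig_HO}: it answers $Execute(\Pi'^{s}_{A,B},\Pi'^{t}_{B,A})$ by running both parties with calls to $\Lambda'$; it answers $Send(\Pi'^{s}_{A,B},m)$ and $Send(\Pi'^{t}_{B,A},m)$ by recomputing $K_{H_{e,i}}$, $\delta$, and $\delta'$ through $\Lambda'$ and performing the corresponding equality checks. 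Since $GK_j$ (hence every $D_{ij}$ issued under it) is chosen internally by $\Gamma$, the freshness test $T_{cur}-T_{ex}<\theta$ and the Nyberg revocation check $R^{S_t}_{j}\checks H^{Nyb}(R^{S_t}_{j},D'_{ij})$ are reproduced verbatim on public values and $\Gamma$'s own bookkeeping, so all honest-party transcripts have the correct distribution.

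For the challenge, $\Gamma$ submits to $\psi$ the input on which a successful impersonation must be evaluated — for the HeNB$\to$UE direction the string that forms $\delta$, for the UE$\to$HeNB direction the string $(K_{H_{e,i}},r_h)$ that forms $\delta'$ — on a session whose nonces were never previously exposed, and receives $\pi$ that is either $\Lambda'(\cdot)$ or a random value. When $E$ delivers a message causing the target oracle to output \textbf{accept} while no partner oracle has engaged in a matching conversation (the $No\text{-}Matching_E(k)$ event), $\Gamma$ outputs $\theta'=0$ (guessing ``real''); otherwise it outputs a random bit. In the real case ($\theta=0$) the simulation is perfect and $E$ succeeds with probability $\varepsilon$; in the random case ($\theta=1$) the tag the adversary must reproduce is information-theoretically hidden and $E$ succeeds only with negligible probability. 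Hence the distinguishing advantage $\tau$ of $\Gamma$ satisfies
\begin{equation}
\tau \geq Pr[\theta'=\theta]-\dfrac{1}{2}=\varepsilon+\dfrac{1}{2}(1-\varepsilon)-\dfrac{1}{2}=\dfrac{\varepsilon}{2},
\end{equation}
so $\varepsilon$ is negligible; combined with the fact that matching conversations trivially yield mutual acceptance by protocol correctness, the two requirements of a secure mutual authentication protocol follow for $\Pi'$.

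I expect the main obstacle to be chaining the two PRFs cleanly: since $K_{H_{e,i}}=\Lambda'_1(D_{ij},r_u,r_h)$ is itself a PRF output that then serves as the key for $\delta'=\Lambda'_2(K_{H_{e,i}},r_h)$, the reduction needs a two-step hybrid — first replace $\Lambda'_1(D_{ij},\cdot)$ by a random function so that $K_{H_{e,i}}$ becomes an unpredictable fresh key and $\delta$ becomes pseudorandom, then invoke the PRF security of $\Lambda'_2$ under that now-random key so that $\delta'$ becomes pseudorandom — and across the polynomially many concurrent handover sessions one must argue (a standard hybrid over sessions, exploiting that each session draws fresh $r_u,r_h$) that both replacements stay indistinguishable. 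A secondary subtlety is showing the revocation and freshness checks give no side channel: $D'_{ij}$ is recomputed deterministically from $(t\_rID,GK_j,T_{ex})$, so a forged warrant yields, except with the negligible Nyberg forgery probability established in Section~\ref{sec:accumulator}, a $D'_{ij}$ failing the membership test, and these checks can therefore be simulated without the real key.
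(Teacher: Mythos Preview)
Your proposal is correct and follows essentially the same reduction as the paper: a simulator $\Gamma$ receives from the challenger either the real keyed hashes $\Lambda'_1=H(D'_{ij},\cdot)$, $\Lambda'_2=H(K_{He,i},\cdot)$ or truly random functions, simulates $\Pi'$ through $Execute$ and $Send$, and outputs $\theta'=0$ exactly when $E$ produces a valid $\delta$ or $\delta'$ without a matching conversation, yielding the same $\tau\geq\varepsilon/2$ bound. The only minor presentational difference is that the paper hands both $\Lambda'_1,\Lambda'_2$ to $\Gamma$ as a pair and splits the forgery event into two directions with separate advantages $\varepsilon,\varepsilon'$ (concluding $\tau'_1+\tau'_2\geq(\varepsilon+\varepsilon')/2$), whereas you run a two-step hybrid; your treatment of the PRF chaining and of the revocation/freshness side channels is in fact more explicit than the paper's, which glosses over those points.
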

\begin{proof}
	{$\Gamma$ simulates $\Pi'$ with the given $\Lambda'=\{\Lambda'_1,\Lambda'_2\}$ by $\psi$, where $\Lambda'_1=H(D'_{ij},.,.,.)$ and $\Lambda'_2=H(K_{He,i},.)$ for $\theta=0$, or $\Lambda'_1$ and $\Lambda'_2$ are the random functions $\lambda'_1$ and $\lambda'_2$ for $\theta=1$. Then, $\Gamma$ can simulate $\Pi$ by the oracle functions, $Execute$ and $Send$, to capture the capability of $E$, which are the same as that in the proof of {\bf Lemma~\ref{Lemma:MemberJoinMutualAuth}}.}
	{In {\bf Challenge} phase, $\Gamma$ will send the message $\{r_{u},r_{h},K_{He,i}\}$ or $r_h$ to $\Lambda'_1$ or $\Lambda'_2$ for the corresponding output, $\delta$ or $\delta'$.}
	
	{In {\bf Guess} phase, $\Gamma$ outputs $\theta'=0$ if $E$ outputs the correct $\delta$ or $\delta'$ successfully. Otherwise, $\Gamma$ outputs $\theta'\in\{0,1\}$ randomly. When $\theta=0$, the above experiment is a real experiment and $E$ has additional probability $\varepsilon'$ to break the mutual authentication security of $\Pi'$. When $\theta=1$, the experiment is a random experiment and $E$ has negligible probability to break mutual authentication security. Hence, we have that
	\begin{eqnarray}
	&\tau'_1 \geq Pr[\theta=\theta']=\varepsilon+\frac{1}{2}(1-\varepsilon)-\frac{1}{2}=\frac{\varepsilon}{2}.\\
	&\tau'_2 \geq Pr[\theta=\theta']=\varepsilon'+\frac{1}{2}(1-\varepsilon')-\frac{1}{2}=\frac{\varepsilon'}{2}.\\
	&\tau'_1+\tau'_2 \geq\frac{\varepsilon+\varepsilon'}{2}.
	\end{eqnarray}
	$\tau'_1$ is the probability of breaking the PRF security of $H(D'_{ij},.,.,.)$, $\tau'_2$ is the probability of breaking the PRF security of $H(K_{He,i},.)$, $\varepsilon$ is the probability of breaking the mutual authentication security by sending the correct $\delta$, and $\varepsilon'$ is the probability of breaking the mutual authentication security by sending the correct $\delta'$. 
	}
\end{proof}

\begin{lemma}\label{Lemma:FastHandoverAuth_KeyExchange}
	{The proposed protocol $\Pi'$ in the {\bf Region-based Fast Handover} phase is a secure key exchange protocol for UE $i$ and the HeNB under the assumption that the  the underlying pseudorandom function $\Lambda'$ is with PRF security.}
\end{lemma}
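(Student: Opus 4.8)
The plan is to mirror the reduction used in Lemma~\ref{Lemma:MemberJoinKeyExchange}, adapting it to the session key $K_{He,i}=H(D'_{ij},r_u,r_h)$ established in the \textbf{Region-based Fast Handover} phase. Concretely, I would build a simulator $\Gamma$ that, given access to an oracle $\Lambda'$ which is either the keyed pseudorandom function $H(D_{ij},\cdot,\cdot,\cdot)$ or a truly random function $\lambda'$ chosen by a challenger $\psi$ according to a hidden bit $\theta\in\{0,1\}$, simulates $\Pi'$ for the key-exchange adversary $E$ and converts any non-negligible advantage of $E$ in the indistinguishability game (the $Distinguish$ event) into a non-negligible PRF-distinguishing advantage.

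First I would set up the oracle interface exactly as in the previous key-exchange lemma: $Execute(\Pi^s_{A,B},\Pi^t_{B,A})$ outputs honest transcripts of $\Pi'$ (Step 0 through Step 4 of Fig.~\ref{fig:fig_HO}), where every evaluation of $H$ keyed by $D'_{ij}$ --- namely the session key $K_{He,i}$ itself and the response $\delta=H(D'_{ij},r_u,r_h,K_{He,i})$ --- is obtained by querying $\Lambda'$, while the $H$ evaluations keyed by $K_{He,i}$ (i.e. $\delta'$) are handled as in Lemma~\ref{Lemma:FastHandoverAuth_MutualAuth}; the $Send(\Pi^s_{A,B},m)$ and $Send(\Pi^t_{B,A},m)$ oracles model active attacks and are answered the same way; $Reveal$ returns the session key held by an oracle; and $Test(\Pi^s_{A,B})$ / $Test(\Pi^t_{B,A})$ returns either the real $K_{He,i}$ or a random string $\alpha$ of the same length, valid only on an accepted, partnered, unrevealed session. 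Then in the \textbf{Challenge} phase $\Gamma$ submits the pair $(r_u,r_h)$ used in the tested session to $\psi$, receives $\pi$ (which equals $H(D_{ij},r_u,r_h)$ when $\theta=0$ and $\lambda'(r_u,r_h)$ when $\theta=1$), and uses $\pi$ as the value returned by that $Test$ query. In the \textbf{Guess} phase, if $E$ answers that the returned value is the real key, $\Gamma$ outputs $\theta'=0$; otherwise $\Gamma$ outputs a uniformly random bit.

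The analysis then follows the standard hybrid argument used throughout the paper: when $\theta=0$, $E$ sees a faithful simulation of the real protocol and retains its full advantage $\varepsilon=\left|\Pr[Distinguish_{K_{He,i}}(k)]-\frac{1}{2}\right|$; when $\theta=1$ the tested value is a fresh uniform string independent of the transcript, so $E$ has no advantage. Combining the two cases as in the displayed bound of Lemma~\ref{Lemma:MemberJoinKeyExchange} gives $\tau \geq \Pr[\theta'=\theta]-\frac{1}{2} = \varepsilon+\frac{1}{2}(1-\varepsilon)-\frac{1}{2} = \frac{\varepsilon}{2}$, where $\tau$ is $\Gamma$'s PRF-distinguishing advantage against $H(D_{ij},\cdot,\cdot,\cdot)$; since $\tau$ is negligible by PRF security of $\Lambda'$, so is $\varepsilon$, i.e. $Distinguish_{K_{He,i}}(k)$ has negligible advantage. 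Together with Lemma~\ref{Lemma:FastHandoverAuth_MutualAuth} (and the fact that an honest UE and HeNB compute the same $K_{He,i}$), this yields that $\Pi'$ is a secure key exchange protocol, which is the claim.

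I expect the main obstacle to be the bookkeeping around partnering and simulation consistency under active $Send$ queries: $\Gamma$ must still be able to produce the verification token $\delta=H(D'_{ij},r_u,r_h,K_{He,i})$ and the re-blinded ciphertext $C=E_{D_{ij}}(TID'_{ij},I')$ for sessions whose key material is being ``borrowed'' from $\psi$, and the $Test$ query must be honoured only on a fresh session so that the embedded challenge $(r_u,r_h)$ is not trivially exposed through a $Reveal$. One must also check that the absorbency/reuse of $D_{ij}$ in the re-blinding of $TID_{ij}$ (and in $C$) leaks nothing about $K_{He,i}$; here the fact that $r_u$ and $r_h$ are fresh nonces per session is what makes the inputs to the challenged PRF call distinct from every training query and lets the single PRF challenge be planted cleanly, so this is the crux of the argument.
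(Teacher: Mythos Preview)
Your proposal is correct and follows essentially the same reduction as the paper: a simulator $\Gamma$ receives either $H(D_{ij},\cdot)$ or a random function from the challenger, simulates $\Pi'$ via $Execute$, $Send$, $Reveal$, $Test$, embeds the challenge at the input $(r_u,r_h)$ so that the returned value plays the role of $K_{He,i}$, and derives the identical bound $\tau\geq\varepsilon/2$. If anything, your write-up is more careful than the paper's own proof in flagging the simulation-consistency issues around $\delta$, $C$, and freshness of the nonces; the paper simply refers back to the earlier lemma's oracle descriptions and does not discuss these points.
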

\begin{proof}
	{
	Assume that $\varepsilon$ is the advantage of breaking the key exchange of $\Pi'$ by $E$.  $\Gamma$ simulates $\Pi'$ with the given $\Lambda$ by $\psi$, where $\Lambda=H(D_{ij},.)$ if $\theta=0$; otherwise, $\Lambda$ is a random function. Then, $\Gamma$ can simulate $\Pi'$ by the oracle functions, $Execute$, $Send$, $Reveal$, and $Test$ to capture the capability of $E$, which are the same as that in the proof of {\bf Lemma~\ref{Lemma:FastHandoverAuth_KeyExchange}}. In {\bf Challenge} phase, $\Gamma$ will send the message $\{r_{u},r_{h}\}$ to $\Lambda$ for the corresponding output $K_{He,i}$.}

	{ 
	In {\bf Guess} phase, $\Gamma$ will output $\theta=0$ or $\theta=1$ randomly if $E$ does not query $Test$. If $E$ query $Test$, $\Gamma$ will respond a real session key $K_{He,i}$ for $\hat{\theta}=0$. Otherwise, $\Gamma$  will respond a random string $\alpha$. $\Gamma$ outputs $\theta=0$ if $E$ outputs $\hat{\theta'}=\hat{\theta}$ correctly. Otherwise, $\Gamma$ outputs $\theta=0$ or $1$ randomly. Hence, we have that
	\begin{equation}
	\tau \geq Pr[\theta' = \theta] -\dfrac{1}{2} = \varepsilon + \dfrac{1}{2}(1 - \varepsilon) - \dfrac{1}{2} = \dfrac{\varepsilon}{2}.
	\end{equation}
	$\tau$ is the probability of breaking the PRF security of $H(D_{ij},.)$. $\epsilon$ is negligible since $\tau$ is also negligible. Therefore, $\Pi'$ guarantees key exchange security based on the PRF security assumption.
	}
\end{proof}

\begin{lemma}\label{Lemma:RegionFastUserAnonymity}
	The proposed protocol $\Pi$ in the {\bf Region-based Fast Handover} phase is with user anonymity if $k>a\times{}b - b$, where $k$ is the number of $\{bR^{I}_j\}_{I\in[1,k]}$, $b$ is the number of UEs within the region $j$, and $a$ is the number of communication sessions launched by a UE in average.
\end{lemma}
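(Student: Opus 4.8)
The plan is to show that no probabilistic polynomial-time eavesdropper $E$ can link two \textbf{Region-based Fast Handover} sessions to the same UE with non-negligible advantage whenever $k>b(a-1)$. The first observation is that the only plaintext carrying cross-session identity information is the value $TID_{ij}=rID_i\oplus bR^{I}_{j}$ transmitted in Step~1 (the nonce $r_u$ and the public expiry $T_{ex}$ carry none), since the refreshed $TID'_{ij}$ is returned only inside the ciphertext $C=E_{D_{ij}}(TID'_{ij},I')$ encrypted under the region secret $D_{ij}$. So the whole question reduces to the linkability of the sequence of transmitted $TID$ values, which I would analyse in the game structure already used for Lemma~\ref{Lemma:MemberJoinUserAnonymity}.

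First I would establish per-message pseudorandomness: $rID_i$ is chosen uniformly by the HSS/AuC and, by Lemma~\ref{Lemma:MemberJoinUserAnonymity}, is never exposed except XORed with a blind factor or under encryption, and the blind factors $\{bR^{I}_j\}_{I\in[1,k]}$ are uniform and unknown to $E$. Hence a single $TID_{ij}$ is indistinguishable from a uniform string; moreover, because the $rID$ values of distinct UEs and the blind factors are independent, two $TID$ values transmitted by \emph{different} UEs, and two $TID$ values of the \emph{same} UE using \emph{distinct} indices $I\neq I'$, collide only with probability $O(2^{-n})$. Consequently the only non-negligible source of linkage is a literal repetition of a transmitted bit-string $TID$, which can occur only if a single UE is forced to re-use a blind-factor index.

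The combinatorial core of the lemma is to show that $k>b(a-1)$ rules out such forced re-use within the adversary's observation window. In region $j$ there are $b$ active UEs, each engaging in $a$ fast-handover sessions on average; a UE's first session reuses the index delivered (encrypted, hence unobserved) during \textbf{Initial Handover}, while each subsequent session makes the HeNB allocate a previously-unused index $I'\neq I$. The number of fresh, adversary-visible index allocations is therefore at most $b(a-1)$, so when $k>b(a-1)$ the HeNB always has an unused index available, every transmitted $TID$ is distinct, and, combined with per-message pseudorandomness, the transcript is indistinguishable from a sequence of independent uniform strings; hence the protocol is unlinkable, i.e.\ with user anonymity.

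The step I expect to be the main obstacle is exactly this last accounting: one must pin down what ``$a$ sessions per UE on average'' means against an adversary that schedules and interleaves handovers arbitrarily, argue that the HeNB's allocation rule keeps the number of \emph{live} (not-yet-expired, adversary-visible) blind factors below $k$ at \emph{every} instant rather than merely in expectation, and handle warrant expiry, which frees indices for recycling and is what makes the amortized count $b(a-1)$ rather than the naive total $ab$. The per-message pseudorandomness step is routine given the PRP/PRF machinery already developed in Lemmas~\ref{Lemma:MemberJoinUserAnonymity}--\ref{Lemma:FastHandoverAuth_KeyExchange}.
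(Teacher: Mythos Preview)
Your approach is quite different from the paper's, and the step you yourself flag as ``the main obstacle'' does contain a genuine gap.

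The paper's proof is a brief counting argument with no game-based machinery: the eavesdropper collects $a\cdot b$ values $TID_{ij}=rID_i\oplus bR^{I}_j$, regarded as $a\cdot b$ linear equations in the $k+b$ unknowns $\{rID_i\}_{i\in[1,b]}\cup\{bR^{I}_j\}_{I\in[1,k]}$. The hypothesis $k>ab-b$ is exactly $k+b>ab$, so there are strictly more unknowns than equations, the system is underdetermined, and no anonymous identity $rID_i$ is uniquely solvable; the paper concludes anonymity directly from that.

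Your reduction of anonymity to \emph{non-repetition} of transmitted $TID$'s is where things go wrong. The claim that ``the only non-negligible source of linkage is a literal repetition of a transmitted bit-string $TID$'' fails for XOR blinding: if UE~$1$ uses indices $I,I'$ in two sessions (yielding $T_1=rID_1\oplus bR^{I}$ and $T_2=rID_1\oplus bR^{I'}$) and UE~$2$ uses the same pair (yielding $T_3,T_4$), then $T_1\oplus T_2=bR^{I}\oplus bR^{I'}=T_3\oplus T_4$ with probability~$1$, linking $\{T_1,T_2\}$ and $\{T_3,T_4\}$ as same-user pairs with no literal repeat anywhere. Pairwise uniformity and pairwise non-collision, which you do establish, do not make the joint transcript indistinguishable from independent uniform strings. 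To kill such XOR relations you would have to forbid index sharing \emph{across} UEs as well as within a UE, and then the honest count is $k\geq ab$ (the $b$ initial indices delivered during \textbf{Initial Handover} come from the same pool of $k$, and the corresponding $TID$'s are sent in the clear in Step~1 of the fast handover), not $k>ab-b$. The paper's equations-versus-unknowns count is precisely what buys the looser bound: the $b$ fresh unknowns $rID_i$ absorb $b$ of the $ab$ equations, so blind-factor indices \emph{may} be shared across UEs while the system still stays underdetermined.
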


\begin{proof}
	The region-based fast handover guarantees the user anonymity by $TID_{ij}=rID_{ij}\oplus{}bR^{I}_{j}$. Since the number of sessions launched by a UE is $a$ and the number of UEs within a region is $b$, there will be $(a\times{}b)$ tuples of $TID_{ij}$ collected by adversaries. If the number of $bR^{I}_{k}$ is $k > a\times{}b-b$, the system should be able to guarantee that $(k+b) > a\times{}b$. That is, the number of variables, including the anonymous identities of UEs and the blind factors, are always more than the equations provided by $TID_{ij}$'s. Hence, there should be no only solution for each anonymous identity for adversaries to link the communication sessions to any specific UE. Hence, the region-based fast handover ensures user anonymity.
\end{proof}

\ignore{
The protocol in the {\bf Region-based Handover} phase begins with the first flow sent by MN $i$. MN $i$ sends a hash value $h(D_i, T_M)$ and the parameters obtained in the {\bf Initial Handover} phase, $rID_i, Sig_O(rID_i), T_{ex}$, to the AP, where $T_M$ represents a timestamp of the current time. Then AP will verify the legality of MN $i$'s identity by checking the received messages. If $T_M$ is fresh, then the messages are not replayed. AP can rebuild the short-term key $D_{AP} = (rID_i, Sig_O(rID_i), GK, T_{ex})$ with the received parameters, $rID_i, Sig_O(rID_i)$, and $T_{ex}$. Note that anyone else except APs and the AAA server cannot rebuild or construct $D_i$ due to the parameter $GK$ being kept secret by APs and the AAA server. After rebuilding $D_i$, AP will check the correctness of $D_i$ and $T_M$ by comparing the received $h'(D_i, T_M)$ and the computed hash value $h(D_i, T_M)$. If the hash value $h'(D_i, T_M)$ is verified, the correctness of the parameters hashed in $D_i$ and the timestamp $T_M$ are also confirmed. AP can also check the formula $R = R \odot H^{Nyb}(D_i)$ to know whether $D_i$ is revoked actively. If all the formulas above are verified successfully, AP authenticates the correctness and the legality of MN $i$. Therefore, an adversary $E$ without the secret short-term key $D_i$ cannot disguise MN $i$ to be authenticated successfully.

Once that AP authenticates MN $i$, AP responds a message, including a timestamp $T_{AP}$ and a hash value $h(D_i, T_{AP})$ to MN $i$. MN $i$ checks the freshness of $T_A$ and the correctness of the hash value by comparing the received $h'(D_i, T_{AP})$ and the computed $h(D_i, T_{AP})$. If $T_A$ is fresh and the received hash value $h'(D_i, T_{AP})$ is correct, then it will guarantee that the message is not replayed and the identity of the AP is legal. Only the party who holds the secret of the real short-term key $D_i$ can compute the hash value of $D_i$ and a timestamp of the current time. Therefore, MN $i$ authenticates the AP. No polynomial time adversary can be authenticated successfully by disguising the AP.

An adversary $E$ cannot obtain the short-term key $D_i$ by eavesdropping the data flows between MN $i$ and the AP owing to the one-wayness and collision resistance of the hash function. 

According to the security analysis above, the protocol in the {\bf Handover} phase of the proposed scheme is a secure mutual authentication protocol and thus, {\bf Theorem IV.2} holds.
}
\section{{Comparisons}}
{This section compares the performance in communication and computation costs of ReHand with the three prior arts~\cite{RoamAuth_LLLLZS14,RoamAuth_LCCHAZ15,RoamAuth_HCG15}, which are also with mutual authentication, session key exchange, user anonymity, conditional traceability, and active revocation for the security requirements of roaming-based AKE in 5G. }

\begin{table*}[thb]
	\centering
	\begin{scriptsize}
		\caption{Comparisons on Computation and Communication Costs}
		\begin{tabular}{|L{3cm}|M{3cm}|M{3cm}|M{3cm}|M{3cm}|}
			\hline
			\multicolumn{5}{|c|}{{\bf Computation Costs}}\\\hline\hline
			&  {\bf UE}  & {\bf System} & {\bf User Tracing~(on System)} & {\bf Revocation Costs~(on System)} \\\hline
			CPAL~\cite{RoamAuth_LLLLZS14}&$3T_{e}+10T_{me}$&$T_{e}+7T_{me}+T_{p}$& $4T_{me}+2T_{p}$&$4|RL_{t}|\cdot{}T_{m}+|RL_{t}|\cdot{}(T_{me}+T_{e})+T_{Inv}$\\\hline
			Time-bound Auth~\cite{RoamAuth_LCCHAZ15}& $49T_{e}+8T_{p}$&$46T_{e}+6T_{p}$ & $|RL|\cdot{}T_{e}$& $|RL|\cdot{}T_{e}$\\\hline
			HashHand~\cite{RoamAuth_HCG15}&$T_{pH}+T_{H}+T_{p}$ & $T_{pH}+2T_{H}+T_{p}$ & $T_{pH}+T_{H}+T_{p}$ & N/A\\\hline
			Our scheme~(ReHand)& $\alpha_{R}\cdot{}(2T_{SE}+4T_{H})+(1-\alpha_{R})\cdot{}3T_{H}$&$\alpha_{R}\cdot{}(2T_{SE}+5T_{H})+(1-\alpha_{R})\cdot{}5T_{H}$&0&$|RL^{S_t}_{j}|\cdot{}T_{H}$\\\hline\hline
			\multicolumn{5}{|c|}{\pbox{16.5cm}{\vspace{0.05cm}{\bf Communication Costs~(per authentication)}}} \\\hline\hline
			CPAL&\multicolumn{4}{|c|}{$\{C_{\alpha}\times{}(15L_{\mbb{G}}+L_{T})\}+\{\frac{1}{T_{RL}}\times{}(C_{\alpha}+C_{\beta})\times{}L_{\mbb{G}}\}+(C_{\beta}\times{}3L_{\mbb{G}})$}  \\\hline
			Time-bound Auth&\multicolumn{4}{|c|}{$C_{\alpha}\times{}\{(11L_{\mbb{G}}+13L_{p}+L_{ID}+L_{H})+2L_{\mbb{G}}\}$} \\\hline
			HashHand&\multicolumn{4}{|c|}{$\{C_{\alpha}\times(2L_{ID}+L_{N}+2L_{H})\}+\{C_{\beta}\times{}(2L_{ID}+L_{N}+L_{H})\}$}\\\hline
			\multirow{2}{3cm}{Our Scheme~(ReHand)}&\multicolumn{4}{|c|}{$\alpha_{R}\times{}\{C_{\alpha}\times{}(3L_{ID}+3L_{K}+L_{T})+C_{\beta}\times{}(3L_{ID}+4L_{K}+L_{T})\}+$}\\
			&\multicolumn{4}{|c|}{$(1-\alpha_{R})\times{}\{C_{\alpha}\times{}(L_{ID}+2L_{H}+2L_{N}+L_{T})+\frac{1}{T_{RL}}\times{}C_{\beta}\times{}|RL^{S_t}_{j}|\times{}L_{H^{Nyb}}\}$}\\\hline\hline
			\multicolumn{5}{|l|}{\pbox{16.5cm}{\vspace*{0.1cm}$T_{SE}$: the computation time of symmetry-based encryption/decryption~(AES)(with an input of 128-bit)\\
					$T_{H}$: the computation time of one-way hash function~(SHA-256)(with an input of 128-bit)\\
					$T_{e}$: the computation time of exponential operation in $\mbb{G}$\\
					$T_{m}$: the computation time of multiplicative operation\\
					$T_{p}$: the computation time of pairing operation\\
					$T_{me}$: the computation time of multi-exponential operation in $\mbb{G}$\\
					$T_{pH}$: the computation time of hash-to-point operation in~($\{0,1\}^{*}\rightarrow{}\mbb{G}$)\\
					$T_{Inv}$: the computation time of inverse operation in $\mbb{G}$\\
					The computation costs on UE side:\\
					$T_{SE}=6.8\times 10^{-3} ms$, $T_{H}$=0.006 ms, $T_{e}$=$T_{Inv}$=70.1 ms, $T_{p}$=135.5 ms, $T_{me}=1.5T_{e}=105.15$ ms, $T_{pH}=10.2$ ms.\\
					The computation costs on system side:\\
					$T_{e}$=$T_{Inv}$= 9.505 ms, $T_{m}$=9.556 ms, $T_{p}$= 5.065ms, $T_{me}=1.5T_{e}=$ 14.257 ms, $T_{pH}=1.413$ ms.\vspace*{0.1cm}}} \\ 
			\multicolumn{2}{|l}{\pbox{10cm}{ $L_{ID}$: the length of an identity~(128-bit)\\
					$L_{H}$: the length of a hash~(SHA-256) output\\
					$L_{\mbb{G}}$: the length of an element in $\mbb{G}$\\
					$L_{p}$: the length of an element in $\mbb{Z}_p$~($p$ is a prime)
			}} &
			\multicolumn{3}{l|}{\pbox{10cm}{$L_{N}$: the length of a nonce\\
					$L_{K}$: the length of a symmetry-based secret key\\
					$L_{T}$: the length of timestamp\\
					$L_{H^{Nyb}}$: the length of a Nyberg one-way accumulate hash containing one item}} \\
			\multicolumn{5}{|l|}{\pbox{16.5cm}{\vspace*{0.1cm}$L_{ID}=L_N=L_H=L_K=128$ bits, $L_{\mbb{G}}=170$ bits, $L_{p}=171$ bits, $L_T=64$ bits, $L_{H^{Nyb}}$~(per 10 accumulated items) = 722.33 bits, $L_{H^{Nyb}}$~(per 100 accumulated items) = 1444.66 bits
			}}\\
			\hline
		\end{tabular}
		\label{tab: performance}
	\end{scriptsize}
\end{table*}
\subsection{{Computation and Communication Costs}}
We evaluate the computation/communication costs of the proposed ReHand protocols with the other roaming-based authentication protocols~\cite{RoamAuth_LLLLZS14,RoamAuth_LCCHAZ15,RoamAuth_HCG15}  on a smartphone of ASUS Zenfone 3 as a UE testbed. The smartphone runs Android 8.0.0 mobile operating system and is equipped with 2.0 GHz octa-core ARM Cortex-A53 CPU and 3GB RAM.  The cryptographic libraries for the implementation of the required cryptographic operations in the proposed scheme and the related works are java pairing based cryptography (JPBC)~\cite{ISCC_DecIov11} and Java Cryptography Extension~(JCE)~\cite{JCA}. The evaluation also run the above cryptographic libraries on a APPLE Macbook Pro (2016 model) with 2.9 GHz dual-core Intel Core i5 CPU and 8GB RAM for the estimation of computation costs. Table~\ref{tab: performance} shows the total computation cost and the communication costs of the proposed schemes with the three related works~\cite{RoamAuth_LLLLZS14,RoamAuth_LCCHAZ15,RoamAuth_HCG15}, and the definitions of the computation times of all operations and the message lengths of all variables.

Regarding the message lengths, the length of an element from $\mbb{G}_1$ is 170 bits and from $\mbb{G}_T$ is 340 bits for the pairing mapping by MNT curves~\cite{PB_MNT01} for 80 bits security. The key lengths, $L_{K}$, of the symmetry-based encryption~(i.e., Advanced Encryption Standard, AES), keyed hash function~(i.e., Secure Hash Algorithm 2, SHA-2), and their outputs are 128 bits. The lengths of an identity and nonce used in AKE protocol are also 128 bits for the consistency. 
	
In order to evaluate the communication costs among the proposed scheme and the other prior arts empirically, we assume that the eNB/HeNB as visiting authentication node~(VAN), which is the closest authentication node to UE, and the HSS/AuC as home authentication node~(HAN), which is the authentication node of the belonging home network of UE. Without loss of generality for communication costs, $C_{\alpha}$ denotes the communication time of a unit~(i.e., 512-bit as a minimum data frame) between UE and VAN. $C_{\beta}$ denotes the communication time of a unit between VAN and HAN. Since the communication cost between eNB and HeNB is extremely low as the wired X2 interface is used between them. Besides, the performance metrics define the rate of roaming to a HeNB in a new visiting region as $\alpha_{R}$ and the rate of roaming to a HeNB in a visited region as $(1-\alpha_{R})$ to estimate the effect of handover on the performance by time of the proposed protocols, i.e., {\bf Initial Handover} and {\bf Region-based Handover}. In the testbed of the communication evaluation, an WiFi access point~(AP) of D-Link DIR-612 N300 with a laptop of APPLE Macbook Pro~(2016 model) emulates eNB/HeNB, and two guest operating systems~(OSs) of Debian/Linux 9 on the virtual machine of Google Cloud Platform Computer Engine emulate VAN and HAN, respectively. By running each communication experiment for 10,000 times, $C_{\alpha}$ is 4.36 millisecond~(ms) and $C_{\beta}$ is 261.76~ms.

\subsection{{Revocation Costs}}
{Revocation check is essential to verify the legality of the membership in the system and considered as a part of authentication. Hence, the costs of revocation check should be evaluated in the performance comparison. Since the revocation list can be updated for every fixed period to reduce the size, the size of revocation list is defined as $|RL_t|$. For the proposed scheme, the size of the revocation list in a region $j$ at the specific period $S_t$ is defined as $|RL^{S_t}_{j}|$. In order to compare the performance unbiased, we also assume that the revocation lists are updated periodically in the other related works~\cite{RoamAuth_LLLLZS14,RoamAuth_LCCHAZ15,RoamAuth_HCG15}. In~\cite{RoamAuth_LCCHAZ15}, the revocation list cannot be updated for only specific period of times. Hence, we denote the size of the complete revocation list as $|RL|$. Nevertheless, the communication costs are also affected by the update frequency of revocation list. Thus, we define the period of updating revocation list as $T_{RL}$ and the frequency as $\frac{1}{T_{RL}}$. The performance evaluation will take the above defined variables related to revocation check into account.}

\begin{figure*}[!ht]
	\centering
	\includegraphics[scale=0.48]{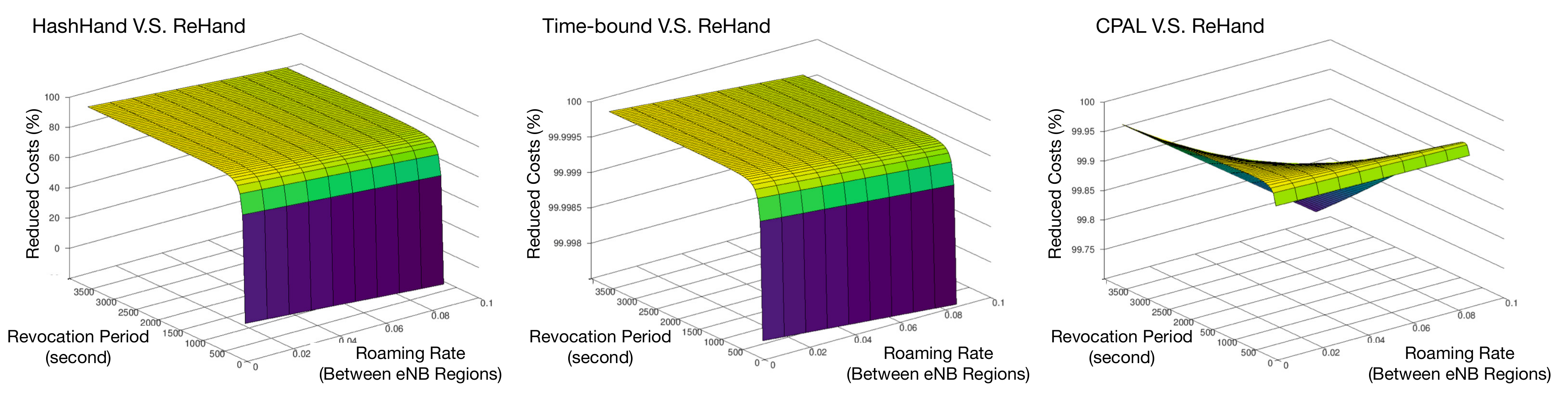}
	\caption{Performance Comparisons}
	\label{fig:performance_comparison}
\end{figure*}
\subsection{{Performance Evaluation}}
In order to evaluate the effect of roaming to a new region on the performance, the time latency of the proposed protocol is $\mathrm{T}_{\mathbf{H-AKE}} = \alpha_{R} \times \mathrm{T}_{\mathbf{I-AKE}}+(1-\alpha_{R})\times\mathrm{T}_{\mathbf{F-AKE}}$, where $\mathrm{T}_{\mathbf{I-AKE}}$ is the time of {\bf Initial Handover} and $\mathrm{T}_{\mathbf{F-AKE}}$ is the time of {\bf Region-based Fast Handover}. 
 
In this evaluation, we assume that the number of revoked UEs is 1,000,000~\footnote{According to the statistics reports from National Communication Commission~(NCC) Taiwan, the number of base stations~(eNB), $N_{eNB}$, is 22,000 and the number of revoked subscribers~(i.e., $|RL|$) is around 1,000,000 in the major telecommunication company, Chunghwa Telecom.}, the period of updating revocation lists~(i.e., $T_{RL}$) is from 60 seconds to 3,600 seconds~(1 hour), the range of speed, $v$, of UE is from 0 to 500 kilometer per hour~(KM/h), and the diameter, $r$, of eNB is 2 KMs. Here, the expiration time, $T_{exp}$, of each warrant is the same as $T_{RL}$ in ReHand. $\alpha_{R}$ is affected by the expiration of the warrant and roaming to a new eNB region and defined as 
\begin{equation}
\alpha_{R}=
\begin{cases}
\frac{1}{T_{exp}} + \frac{v}{r\times{}3,600}, & \text{if } \alpha_{R} < 1 \\
1, & \text{otherwise.}
\end{cases}
\end{equation}
Based on the given $v\in[0,500]$~KM/h, $r=2$~KMs, and  $T_{exp}\in[60,3600]$, the range of $\alpha_{R}$ is from $2.78\times{}10^{-4}$ to $8.6\times{}10^{-2}$. 
Figure~\ref{fig:performance_comparison} shows the performance comparison~(computation and communication costs) of the proposed ReHand scheme with the three prior arts~\cite{RoamAuth_LLLLZS14,RoamAuth_LCCHAZ15,RoamAuth_HCG15}, the ReHand greatly reduces the cost to $82.92$\% compared with HashHand scheme~\cite{RoamAuth_HCG15} when $T_{RL}\geq{}240$~(i.e., $\geq{}2$ minutes), to $99.99$\% compared with Time-bound scheme~\cite{RoamAuth_LCCHAZ15}, and to $99.95$\% compared with CPAL scheme~\cite{RoamAuth_LLLLZS14}.

\ignore{
\begin{table*}[thb]
	\centering
	\begin{footnotesize}
	\caption{Computation Comparison}
	\begin{tabular}{|c|c|c|c|c|c|}
	\hline
	& & Jing's scheme~\cite{Jing_2011}  & Fu's scheme~\cite{Fu_2012} & He's scheme~\cite{He_2013} & Our scheme  \\
	\hline
	\multirow{6}{*}{Authentication} & \multirow{3}{*}{MN} & $\frac{T_{EAP}}{2} + 1T_s + $ & $\frac{T_{EAP}}{2} + T_{MAC} +$ & $3T_m + 4T_e$	& $3T_s + 2T_h$	 \\
	& & $2T_h +4T_{pm}$ & $ T_D + 2T_{pm}$ & $\approx 963T_m$  & $\approx 2T_m$ \\
	& & $ \approx \frac{T_{EAP}}{2} + 117.2T_m$ & $ \approx \frac{T_{EAP}}{2} + 58.8T_m$ & & \\
	\cline{2-6}
	& \multirow{3}{*}{SYSTEM} & $\frac{T_{EAP}}{2} + 1T_s + $ & $\frac{T_{EAP}}{2} + T_{MAC} + $ & $T_m + T_{mi}$ & $7T_s + 7T_h +$ \\
	& & $4T_h + 3T_{pm}$ & $ T_D +2T_{pm}$ & $\approx 241T_m$ & $ 2T_e\approx 485.6T_m$ \\
	& & $ \approx \frac{T_{EAP}}{2} + 89T_m$ & $\approx \frac{T_{EAP}}{2} + 58.8T_m$ & & \\
	\hline
	\multirow{11}{*}{Handover} & \multirow{5}{*}{MN} & $1T_h + 2T_{MAC} +  $ & $2T_{MAC} + T_D + $ & $3.25T_{pm} +8T_e + $ &\multirow{5}{*}{$2T_h\approx 0.8T_m$} \\
& & $3T_{pm}\approx 88.2T_m$ & $2T_{pm}\approx 59.2T_m$ & $T_{mi}\approx 2254.25T_m$ &  \\

&  & Reduction ratio: &Reduction ratio:  &Reduction ratio: &  \\
& &\multirow{2}{*}{$\frac{88.2-0.8}{88.2}\approx$99.1\%}   &\multirow{2}{*}{$\frac{59.2-0.8}{59.2}\approx$98.6\%}  &\multirow{2}{*}{$\frac{2254.25-0.8}{2254.25}\approx$99.9\%} &  \\
&                             &  &  & &  \\
\cline{2-6}
& \multirow{6}{*}{SYSTEM} & $2T_s + 3T_h +  $ & $2T_s + 2T_{MAC} +  $ & $18T_e + T_{mi}$ &\multirow{6}{*}{ $4T_h\approx 1.6T_m$}	\\
& & $2T_{MAC} + 6T_{pm}$ & $T_D +2T_{pm}$ & $\approx 4560T_m$ &  \\
& & $\approx 176.8T_m$ & $\approx 60T_m$ & & \\

&  &Reduction ratio:  &Reduction ratio:  & Reduction ratio: &  \\
&&\multirow{2}{*}{$\frac{176.8-1.6}{176.8}\approx$99.1\%  }&\multirow{2}{*}{$\frac{60-1.6}{60}\approx$97.3\%  }& \multirow{2}{*}{$\frac{4560-1.6}{4560}\approx$99.9\% } &  \\
&                             &  &  & &  \\
	\hline
	\multicolumn{3}{l}{$T_m$: the cost of a modular multiplication}      & \multicolumn{3}{l}{$T_e$: the cost of a modular exponentiation} 		\\
	\multicolumn{3}{l}{$T_{EAP}$: the cost of a full EAP authentication} & \multicolumn{3}{l}{$T_{MAC}$: the cost of computing a CMAC message} 	\\
	\multicolumn{3}{l}{$T_D$: the cost of a Dot16KDF operation}          & \multicolumn{3}{l}{$T_h$: the cost of a hash operation} 				\\
	\multicolumn{3}{l}{$T_{pm}$: the cost of a point multiplication} 	 & \multicolumn{3}{l}{$T_{mi}$: the cost of a modular inverse} 			\\
    \multicolumn{6}{l}{$T_s$: the cost of a symmetric encryption or decryption operation} \\	
	\end{tabular}
	\label{tab: performance}
	\end{footnotesize}
\end{table*}
}

	


\section{Conclusion}
{This work} proposes a {region-based secure fast handover framework} that is not only tailored to the technical direction of small cell network in 5G, but also combines the properties of secure mutual authentication, privacy preservation, computation efficiency, and functional active revocation. {The proposed authentication framework adopts the techniques of group key, one-time identity, and accumulated one-way hash, so that every authentication within the same region only involve UE and the visiting HeNB.} We also provide formal security analysis {to demonstrate the proposed security scheme is secure based on the cryptographic hard problems.} Compared to {the} other works, {the proposed} scheme {eliminates considerable} computation {and communication} costs {for small cell networks in 5G}.

\appendices

\ignore{
\section*{Acknowledgment}
This work was partially supported by the Ministry of Science and Technology of the Taiwan under grant MOST 104-2221-E-110-043, MOST 105-2923-E-110-001-MY3, and Aim for the Top University Plan of the National Sun Yat-sen University and Ministry of Education, Taiwan, R.O.C.
}




%

\bibliographystyle{IEEEtran}  
\small\bibliography{nsysuthesis}
\ignore{

}

%
\ignore{
\begin{IEEEbiography}[{\includegraphics[width=1in,height=1.25in,clip,keepaspectratio]{figs/CIFan.eps}}]{Chun-I Fan}
received the MS degree in computer science and information engineering from National Chiao Tung University, Taiwan, in 1993, and the PhD degree in electric engineering at National Taiwan University in 1998. From 1999 to 2003, he was an associate researcher and project leader of Telecommunication Laboratories, Chunghwa Telecom Co., Ltd, Taiwan. In 2003, he joined the faculty of the department of computer science and engineering, National Sun Yat-sen University, Kaohsiung, Taiwan, and has been a full professor since 2010. His current research interests include applied cryptology, cryptographic protocols, information and communication security, and he has published over 100 technical papers. He won the Dragon PhD Thesis Award from Acer Foundation, Best PhD Thesis Award from Institute of Information \& Computing Machinery in 1999, Best Student Paper Awards
in National Conference on Information Security 1998. He advised his
graduate students to win the Best Student Paper Awards in National
Conference on Information Security 2007, Best Master Thesis Award from
Taiwan Association for Web Intelligence Consortium in 2011, Outstanding
Master Dissertation Award from Taiwan Institute of Electrical and Electronic
Engineering in 2011 and 2012, Master Thesis Award from Chinese
Cryptology and Information Security Association in 2012, and Outstanding
PhD Dissertation Award from Institute of Information \& Computing
Machinery in 2012. He also was the editor-in-chief of Information Security
Newsletter and is an executive director of Chinese Cryptology and
Information Security Association.
\end{IEEEbiography}

\begin{IEEEbiography}[{\includegraphics[width=1in,height=1.25in,clip,keepaspectratio]{figs/jjhuang.eps}}]{Jheng-Jia Huang}was born in Kaohsiung, Taiwan.
He received the M.S. degree in information management from National
Kaohsiung First University of Science and Technology, Kaohsiung,
Taiwan, in 2012. He now is a Ph.D. student in computer science and
engineering from National Sun Yat-sen University, Kaohsiung, Taiwan,
His current research interests include cloud computing and security,
social network security and authentication, network and
communication security, information security, and applied
cryptography.
\end{IEEEbiography}


\begin{IEEEbiography}[{\includegraphics[width=1in,height=1.25in,clip,keepaspectratio]{figs/MZZhong.eps}}]{Min-Zhe Zhong}
received the MS degree in computer science and information engineering from National Sun Yat-sen University, Kaohsiung, Taiwan, in 2014. His research interests include information security, network and communication security, cryptographic protocols, and applied cryptography.
\end{IEEEbiography}

\begin{IEEEbiography}[{\includegraphics[width=1in,height=1.25in,clip,keepaspectratio]{figs/Richard.eps}}]
	{\bfseries Ruei-Hau Hsu} (M'15) received the B.S. and M.S. degrees in Computer Science from Tunghai University, Taiwan, in 2002 and 2004, respectively. He received the Ph.D. degree in Computer Science and Engineering at National Sun Yat-sen University, Kaohsiung, Taiwan, in 2012. He was the postdoctoral research fellow at the Department of Computer Science, National Chiao Tung University from 2012 to 2014, and at iTrust, Centre for Research in Cyber Security at Singapore University of Technology and Design established in collaboration with Massachusetts Institute of Technology~(MIT) from 2014 to 2017. Currently, he is a Scientist with Data Storage Institute~(DSI), Agency for Science, Technology and Research~(A*STAR), Singapore.
	
	Dr.~Hsu received two Best Doctoral Dissertation Awards from Institute of Information and Computing Machinery and Best Doctoral Dissertation Award from Chinese Cryptology and Information Security in 2012, respectively. From August to December 2007, he joined the International Collaboration for Advancing Security Technology~(iCAST) program as a visiting scholar at Carnegie Mellon University~(CMU), America. From June to September 2010 and March 2011 to February 2012, he earned scholarships, granted by Deutscher Akademischer Austausch Dienst~(DAAD) Germany and National Science Council~(NSC) Taiwan, as a visiting scholar at Center for Advanced Security Research Darmstadt~(CASED) in Technische Universitat Darmstadt. In 2012, he has been the member of the Phi Tau Phi Scholastic Honor Society. He also served as a Publication Co-Chair of The 16th IEEE sponsored Asia-Pacific Network Operations and Management Symposium~(APNOMS) 2014.
\end{IEEEbiography}

\begin{IEEEbiography}
[{\includegraphics[width=1in,height=1.25in,clip,keepaspectratio]{figs/wtchen.eps}}]
{Wen-Tsuen Chen}
received the B.S. degree in nuclear engineering
from National Tsing Hua University, Taiwan, and the M.S. and Ph.D.
degrees in electrical engineering and computer sciences both from
University of California, Berkeley, in 1970, 1973, and 1976,
respectively. He has been with the National Tsing Hua University
since 1976 and is a Distinguished Chair Professor of the Department
of Computer Science. He has served as Chairman of the Department,
Dean of College of Electrical Engineering and Computer Science, and
the President of National Tsing Hua University. Since March 2012, he
has joined the Academia Sinica, Taiwan as a Distinguished Research
Fellow of the Institute of Information Science. His research
interests include computer networks, wireless sensor networks,
mobile computing, and parallel computing.

Dr. Chen received numerous awards for his academic accomplishments
in computer networking and parallel processing, including
Outstanding Research Award of the National Science Council, Academic
Award in Engineering from the Ministry of Education, Technical
Achievement Award and Taylor L. Booth Education Award of the IEEE
Computer Society, and is currently a lifelong National Chair of the
Ministry of Education, Taiwan. Dr. Chen is the Founding General
Chair of the IEEE International Conference on Parallel and
Distributed Systems and the General Chair of the 2000 IEEE
International Conference on Distributed Computing Systems among
others. He is an IEEE Fellow and Fellow of the Chinese Technology
Management Association.
\end{IEEEbiography}

}




\end{document}